\newtheorem{theorem}{Theorem}
\newtheorem{lemma}{Lemma}
\newtheorem{definition}{Definition}
\newtheorem{remark}{Remark}
\newtheorem{example}{Example}
\newtheorem*{example*}{Example 1 revisited}
\newtheorem*{example3*}{Example 3 revisited}
\newtheorem{corollary}{Corollary}
\begin{document}
%
% paper title
% Titles are generally capitalized except for words such as a, an, and, as,
% at, but, by, for, in, nor, of, on, or, the, to and up, which are usually
% not capitalized unless they are the first or last word of the title.
% Linebreaks \\ can be used within to get better formatting as desired.
% Do not put math or special symbols in the title.
\title{Cooperative Data Exchange based on MDS Codes}
% author names and IEEE memberships
% note positions of commas and nonbreaking spaces ( ~ ) LaTeX will not break
% a structure at a ~ so this keeps an author's name from being broken across
% two lines.
% use \thanks{} to gain access to the first footnote area
% a separate \thanks must be used for each paragraph as LaTeX2e's \thanks
% was not built to handle multiple paragraphs
%

\author{Su Li and Michael Gastpar
\thanks{This paper was partially presented at the 2017 IEEE International Symposium on Information Theory, June, and at the 2017 55th Annual Allerton Conference on Communication, Control, and Computing, October.}}
\markboth{}%
{Shell \MakeLowercase{\textit{et al.}}: Bare Demo of IEEEtran.cls for IEEE Journals}
% The only time the second header will appear is for the odd numbered pages
% after the title page when using the twoside option.
% 
% *** Note that you probably will NOT want to include the author's ***
% *** name in the headers of peer review papers.                   ***
% You can use \ifCLASSOPTIONpeerreview for conditional compilation here if
% you desire.

% If you want to put a publisher's ID mark on the page you can do it like
% this:
%\IEEEpubid{0000--0000/00\$00.00~\copyright~2015 IEEE}
% Remember, if you use this you must call \IEEEpubidadjcol in the second
% column for its text to clear the IEEEpubid mark.

% use for special paper notices
%\IEEEspecialpapernotice{(Invited Paper)}

\pagenumbering{arabic}

% make the title area
\maketitle

% As a general rule, do not put math, special symbols or citations
% in the abstract or keywords.
\begin{abstract}
The cooperative data exchange problem is studied for the fully connected network. 
In this problem, each node initially only possesses a subset of the $K$ packets making up the file. 
Nodes make broadcast transmissions that are received by all other nodes.
The goal is for each node to recover the full file.
In this paper, we present a polynomial-time deterministic algorithm to compute the optimal (i.e., minimal) number of required broadcast transmissions and to determine the precise transmissions to be made by the nodes.
A particular feature of our approach is that {\it each}  of the $K-d$ transmissions is a linear combination of {\it exactly}  $d+1$ packets, and we show how to optimally choose the value of $d.$
We also show how the coefficients of these linear combinations can be chosen by leveraging a connection to Maximum Distance Separable (MDS) codes.
Moreover, we show that our method can be used to solve cooperative data exchange problems with weighted cost as well as the so-called successive local omniscience problem.
\end{abstract}

% Note that keywords are not normally used for peerreview papers.
\begin{IEEEkeywords}
Cooperative data exchange, maximum distance separable codes, linear codes
\end{IEEEkeywords}

% For peer review papers, you can put extra information on the cover
% page as needed:
% \ifCLASSOPTIONpeerreview
% \begin{center} \bfseries EDICS Category: 3-BBND \end{center}
% \fi
%
% For peerreview papers, this IEEEtran command inserts a page break and
% creates the second title. It will be ignored for other modes.
\IEEEpeerreviewmaketitle

\section{Introduction}
% The very first letter is a 2 line initial drop letter followed
% by the rest of the first word in caps.
% 
% form to use if the first word consists of a single letter:
% \IEEEPARstart{A}{demo} file is ....
% 
% form to use if you need the single drop letter followed by
% normal text (unknown if ever used by the IEEE):
% \IEEEPARstart{A}{}demo file is ....
% 
% Some journals put the first two words in caps:
% \IEEEPARstart{T}{his demo} file is ....
% 
% Here we have the typical use of a "T" for an initial drop letter
% and "HIS" in caps to complete the first word.

% no \IEEEPARstart
Consider a fully connected network composed of $N$ nodes that all want to recover a $K$ packet file. 
Each node initially only has a subset of the packets. 
Each node can generate coded packets by using its locally available packets
and transmit them to other nodes through a lossless broadcast channel, i.e. all other nodes receive the coded packets. 
The goal is for each node to assemble the full file.
The key questions are: (1) What is the minimum number of required transmissions? 
(2) What should individual nodes transmit?
This problem was introduced by El Rouayheb {\it et al.} in~\cite{el2010coding} and is referred to as \textit{Cooperative Data Exchange} (CDE) or \textit{communication for omniscience} for the fully connected network.
Concerning the minimum number of required transmissions, upper and lower bounds were established in~\cite{el2010coding}.
A deterministic algorithm was proposed to produce a coding scheme which achieves universal recovery
using at most twice the minimum number of required transmissions.
The CDE problem can be formulated as an Integer Linear Program (ILP) with the Slepian-Wolf constraints on all proper subsets of the nodes' available packet information.
A randomized algorithm \cite{sprintson2010randomized} and a deterministic algorithm \cite{sprintson2010deterministic} were proposed to give an approximate solution and solve the problem with high probability.
We note that the number of constraints in the ILP at hand grows exponentially with the problem size.
Nevertheless, exact polynomial-time algorithms were found in~\cite{milosavljevic2016efficient,courtade2014coded,8063422} based on minimizing submodular functions and subgradient optimization.

\subsection{Main Contributions}

In this paper, we consider the CDE problem for the fully connected network in a new perspective. Our main contributions can be summarized as follows:
\begin{itemize}
	\item[(1)] We present a new deterministic algorithm to compute the minimal number of required transmissions. It is based on searching for the existence of certain conditional bases of the packet distribution matrix.
	The complexity is bounded by $\mathcal{O}(N^3K^3\log (K))$, significantly lower than the complexity of the best known existing algorithms proposed in~\cite{milosavljevic2016efficient} based on minimizing submodular functions $\mathcal{O}((N^6K^3+N^7)\log(K))$ and based on subgradient methods $\mathcal{O}((N^4\log(N)+N^4K^3)K^2\log(K))$.
	\item[(2)] We propose a novel coding scheme with $K-d$ transmissions in which each transmission is a linear combination of $d+1$ packets for any $0 \le d < K$.
	Nodes with at least $d$ packets can recover their missing packets from this coding scheme regardless of which packets they have in detail. 
	The coefficient matrix of the coding scheme can be efficiently generated by performing elementary row operations on Vandermonde matrices.
	\item[(3)] We show that the CDE problem with weighted cost objective function and successive local omniscience problem can be solved by our method with slightly modifications. The complexity of our method for solving CDE problem with weighted cost objective function is bounded by $\mathcal{O}(N^3K^3\log (K))$, which is the same as CDE problem without weighted cost. For successive local omniscience problem with $M$ priority groups, our methods has complexity bounded by $\mathcal{O}(N^3K^3M\log (K))$. For both generalized problem, the way of constructing coding scheme is the same as what we do for basic CDE problem.
\end{itemize}

\subsection{Further Related Work}

The CDE problem was extended to general network topologies, and it was shown that linear codes are sufficient to optimally solve the CDE problem in~\cite{courtade2014coded,gonen2015coded}. 
However, the same work also revealed that for arbitrarily connected networks, the CDE problem is NP-hard and cannot be solved exactly with polynomial time algorithms. 
Many extensions of the CDE problem have also been studied.
In~\cite{7541714}, the nodes are divided into two classes, high and low priority. The resulting CDE problem with priorities was formulated as a multi-objective integer linear program. Assuming a uniformly random packet distribution and restricting to the limit as the number of packets tends to infinity, a closed-form expression for the minimal number of required transmissions was derived.
In~\cite{milosavljevic2016efficient,5743607}, transmissions sent by different node are considered to have different cost. Instead of minimizing the total number of transmissions, the goal becomes minimizing the total cost, {\it i.e.,} a weighted sum of the transmissions. 
To solve the CDE problem with weighted cost, a deterministic polynomial algorithm based on submodular function minimization was proposed in~\cite{milosavljevic2016efficient}, while a randomized greedy algorithm was proposed in~\cite{5743607}.     
In~\cite{courtade2014coded,tajbakhsh2011generalized}, it is assumed that each packet can be split into the same number of smaller chunks and the optimization goal is minimizing the normalized total number of transmissions. 
Intuitively, the larger the number of chunks we split each packet into, the smaller the normalized total number of transmissions that can be achieved, and it has been proved that it is sufficient to split each packet into $N-1$ chunks.
In~\cite{7447045}, the nodes are divided into two classes, reliable and unreliable. For unreliable nodes, the initially available packets are unknown (but it is known how many packets they have) and the packet transmissions are subject to arbitrary erasures. A closed-form expression for the minimal number of transmissions for the case of only a single unreliable node was derived with probability approaching 1 as the number of packets tends to infinity. For more than one unreliable node, an approximate solution was provided.

The CDE problem for the fully connected network is also related to the secret key generation problem, which was introduced in~\cite{1362897} and was formulated as a maximization problem over all partitions of the node set.
Tyagi {\it et al.}~\cite{7902138} leveraged this to derive an algorithm which achieves local omniscience in each step and outputs a sub-optimal solution.
The weakly secure data exchange problem was introduced in~\cite{yan2013algorithms}. The goal is to achieve universal recovery while revealing as little information as possible.
%\footnote{Our work and~\cite{yan2013algorithms} are related but still different. In~\cite{yan2013algorithms}, their coding scheme tries to generated transmissions which is linear combination of as many packets as possible. However, in our scheme, the number of used packets is fixed for every transmission.}.
In contrast to the coding scheme in~\cite{yan2013algorithms} in which each transmission is a linear combination of as many packet as possible, our scheme considers a fixed number of packets for every transmission. 
In general cases, the communication rate in the secret key generation problem is asymptotic which makes the problem NP-hard, 
while the minimal number of required transmissions in the CDE problem is integer so that it can be solved by polynomial-time algorithms (as previously mentioned).

For solving the CCDE problem, only knowing the minimal number of required transmissions is not enough. 
It is also necessary to design the coding scheme. 
Given the total number of transmissions, designing the coding scheme is a multicast network code construction problem 
and can be solved by the polynomial time algorithm proposed by Jaggi {\it et al.} in~\cite{jaggi2005polynomial}.

\subsection{Organization}
This paper is organized as follows. Section~\ref{sec:systemmodel} formally defines our system model and introduces definitions and notations that would be used in this paper. Section~\ref{sec:mainresults} presents our main results. Section~\ref{sec:algorithm} proposes our algorithms to compute the minimal number of required transmissions for the basic CDE problem. 
Section~\ref{sec:codeconstruction} presents an efficient way to construct the linear code. 
Section~\ref{sec:weightcost} and Section~\ref{sec:slo} show our method can be used to efficiently solve two generalized topic, CDE with weighted cost and successive local omniscience.
Section~\ref{sec:conclusion} concludes our work.

\section{System Model and Definitions}\label{sec:systemmodel}
Before we formally introduce the problem, we define some notations.
Let $[n]$ denote the integer set $\{1,\dots, n\}$.
%For any vector $u$, we use $supp(u)$ to denote the support of vector $u$ and $u_i$ to denote the $i^{th}$ entry of $u$.
For any vector $u$, we use $u_i$ to denote the $i^{th}$ entry of $u$.
For any matrix $E$, we use $E_{ij}$ to denote the entry at $i^{th}$ row $j^{th}$ column.
Let $w_H(u)$ denote the number of non-zero entries in vector $u$.
For any set of vectors $\mathbf{U} = \{u_1,u_2,\dots\}$ and subset of vectors $\mathbf{S} \subseteq \mathbf{U},$ let
$u_{\mathbf{S}}$ denote the bitwise \textbf{OR} (or, equivalently, the componentwise maximum) of all the vectors in the set $\mathbf{S}$.

Consider a fully connected network which has $N$ nodes and a desired file composed of $K$ packets. 
Let $\mathbf{N} = [N]$ and $\mathbf{P} = \{P_i,i\in[K]\}$ denote the set of nodes and set of packets, respectively.
Each $P_i \in \mathbb{F}$, where $\mathbb{F}$ is some finite field.
Without loss of generality, we assume that every packet is initially available at least at two nodes and at most at $N-1$ nodes\footnote{If there is a packet that is only initially available at one node, the optimal strategy is just letting that node send the uncoded packet to the others. If there is a packet that is available at all nodes, then no one needs to recover it.}.
The set of the packets initially available at node $i$ is denoted as $\mathbf{X}_i$ ($\forall i \in \mathbf{N}:  \mathbf{X}_i\subseteq \mathbf{P}$).
The union set of the packets initially available at a subset of node $\mathbf{I}\subseteq \mathbf{N}$ is denoted as $\mathbf{X}_\mathbf{I} = \bigcup_{i\in \mathbf{I}} \mathbf{X}_i$.
We assume that all the nodes collectively have all packets, which means $\mathbf{X}_\mathbf{N} = \mathbf{P}$.
The notation $\mathbf{X}_\mathbf{I}^c = \mathbf{P} \setminus \mathbf{X}_\mathbf{I}$ denotes the jointly missing packets at nodes in set $\mathbf{I}$.
Let $\mathcal{M} = \min_{i\in \mathbf{N}} |\mathbf{X}_i|$ be the minimum number of initially available packets at any single node.

\begin{definition}
	Define the packet distribution matrix $E$ as the $N\times K$ matrix with entry at $i^{th}$ row $j^{th}$ column:
		\begin{align}
			E_{ij} =
			\left\{
			\begin{aligned}
				1&, &P_j \in \mathbf{X}_i\\
				0&, &otherwise 
			\end{aligned}
			\right. 
		\end{align}
	We will refer to the $K$-dimensional binary (row) vector $e_i$, the $i^{th}$ row of $E$, as the Packet Distribution Vector (PDV) of node $i$. 
\end{definition}

Let $\mathbf{T} = \{T_1,\dots,T_R\}$ denote a linear coding scheme with $R$ transmissions\footnote{Only linear coding schemes are considered since it has been proved that they are sufficient to optimally solve the CDE problem~\cite{courtade2014coded,gonen2015coded}.},
which means that each transmission $T_i$ is a linear combination of packets available at the sender node.
Let $\mathbf{r} = [r_1,\dots,r_N]^\mathsf{T}$ denote the rate vector where each $r_i$ is the number of transmissions made by node $i$.
Hence, the total number of transmissions can be expressed as $R = \sum_{i=1}^{N} r_i$.
Let $R^*$ denote the minimal number of required transmissions.
Define the coefficient matrix as matrix $A$ with entries $a_{ij} (i\in [R], j\in [K]$), where $\alpha_i = [a_{i1},\dots,a_{iK}]$ and $\beta_j = [a_{1j},\dots,a_{Rj}]^\mathsf{T}$ are the $i^{th}$ row and $j^{th}$ column vectors of $A$, respectively. 
Then we have:
\begin{align}
\begin{bmatrix}
T_1       \\
T_2\\
\vdots\\
T_R
\end{bmatrix}
=
\begin{bmatrix}
a_{11} & a_{12} & \dots  & a_{1K} \\
a_{21} & a_{22} & \dots  & a_{2K} \\
\vdots & \vdots & \ddots & \vdots \\
a_{R1} & a_{R2} & \dots  & a_{RK}
\end{bmatrix}
\begin{bmatrix}
P_1       \\
P_2\\
\vdots\\
P_K
\end{bmatrix}  \label{eq-Amatrix}
=
\begin{bmatrix}
\alpha_1\\\alpha_2\\\vdots\\\alpha_R
\end{bmatrix}
\begin{bmatrix}
P_1       \\
P_2\\
\vdots\\
P_K
\end{bmatrix}
=
\begin{bmatrix}
\beta_1&\beta_2&\dots&\beta_K
\end{bmatrix}
\begin{bmatrix}
P_1       \\
P_2\\
\vdots\\
P_K
\end{bmatrix}
\end{align}

It has been shown that any rate vector $\mathbf{r}$ which achieves universal recovery should satisfy the following Slepian-Wolf constraints~\cite{courtade2010optimal}:
\begin{align}
\sum_{i\in\mathbf{N}\setminus \mathbf{I}}r_i \ge \left| \mathbf{X}_\mathbf{I}^c\right|, \forall \mathbf{I}\subsetneq \mathbf{N}\label{eq:sw-constr}
\end{align}
Let $\Omega = \{\mathbf{r} = [r_1,\dots,r_N]^\mathsf{T}: \sum_{i\in\mathbf{N}\setminus \mathbf{I}}r_i \ge \left| \mathbf{X}_\mathbf{I}^c\right|, \forall \mathbf{I}\subsetneq \mathbf{N}\}$ denote the set of all rate vectors $\mathbf{r}$ which satisfy~\eqref{eq:sw-constr}.
The minimal number of required transmissions for achieving universal recovery can be computed by solving the following integer linear program:
\begin{align}
R^*  = \min_{\mathbf{r} \in \Omega} &\sum_{i=1}^{N} r_i  . \label{ILP}
\end{align}

\begin{example}
	\label{EX:CCDE}
	Consider a CDE problem for the fully connected network with $N=4$ nodes and $K=9$ packets. 
	The packet distribution matrix is as follows:
	
	\begin{equation*}
	E =\begin{bmatrix}
	1  & 1 & 1 & 1   & 1 & 1  & 0  & 0 & 0  \\
	1  & 1 & 1 & 0 & 0  & 0  & 1 & 1  & 1 \\
	0  & 0 & 0 & 1 & 1  & 1  & 1 & 1  & 1 \\
	1 & 0 & 1 & 0& 0 & 1 & 0 & 1 &0
	\end{bmatrix}
	\end{equation*}	
	
	The number of non-empty proper subset of nodes is 18.
	Thus, we can write down 18 linear constraints and solve the inequalities. For example, for $\mathbf{I} = \{1\}$, the constraint for total number of transmissions made by nodes $\{2,3,4\}$ is
	\begin{align}
		\sum_{i=\{2,3,4\}} r_i \ge |\mathbf{X}_{1}^c| = 3
	\end{align}
	By using the methods proposed in~\cite{milosavljevic2016efficient,courtade2014coded}, based on minimizing submodular function,
	the integer linear program can be solved in polynomial time and the minimal number of required transmissions should be $5$.
	After knowing the minimal number of transmissions, 
	generating the coding scheme is a multicast network code construction problem 
	and can be solved by polynomial time algorithms proposed in~\cite{jaggi2005polynomial}.
	One feasible coding scheme could be: node 1 sends $T_1 = P_1+P_5$ and $T_2 = P_2+P_6$, node 2 sends $T_3 = P_3+P_7$, node 3 sends $T_4 = P_4+P_8$ and $T_5 = P_9$.
\end{example}

In general, there are multiple different optimal coding schemes that achieve universal recovery.
Although not all nodes have to make transmissions, 
the existing algorithms which solve the integer linear program~\eqref{ILP} have to consider constraints introduced by all non-empty proper subset of nodes. 
In Example~\ref{EX:CCDE}, the optimal coding scheme does not require node $4$ to make any transmission, 
but the algorithms still have to consider the constraints related to node $4$.
However, we will show that without knowing the exact packet distribution information at some nodes (in this example, node 4), but only knowing the number of initially available packets at them, it is still possible to compute the minimum number of required transmissions and construct the optimal  coding scheme which achieves universal recovery with the smallest number of transmissions. 

\begin{definition}[$(d,K)$-Basis]
	A set of $K$-dimensional binary linearly independent vectors $(\mathbf{V} = \{v_i: i \in [K-d]\},0\le d\le K-1)$ is called a $(d,K)$-Basis if
	\begin{align}
	%&w_H(v_i) = d+1,&\ \forall i\in [K-d]\label{Eq:dB1}\\\
	&w_H(v_{\mathbf{S}}) \ge |\mathbf{S}|+d,&\ \forall \emptyset \ne \mathbf{S}\subseteq \mathbf{V} . \label{Eq:dB2}
	\end{align}
\end{definition}

\begin{definition}[Balanced $(d,K)$-Basis]
	A $(d,K)$-Basis $(\mathbf{V} = \{v_i: i \in [K-d]\},0\le d\le K-1)$ is called a balanced $(d,K)$-Basis if
	\begin{align}
		&w_H(v_i) = d+1,&\ \forall i\in [K-d] .\label{Eq:dB1}
	\end{align}
\end{definition}

Condition~(\ref{Eq:dB2}) requires that $w_H(v_\mathbf{S})$, the number of dimensions spanned by vectors in $\mathbf{S}$, be no less than the number of vectors plus $d$. Hence, the number of vectors in each subspace of the $K$-dimensional space is limited.
%Condition~(\ref{Eq:dB2}) limits the number of basis vectors in each subspace of the $K$-dimensional space.

\begin{definition}
	A binary vector $u$ can generate another binary vector $v$ if $u$ and $v$ have the same dimension and 
	\begin{equation}
	\{m: v_{m} = 1 \} \subseteq \{n: u_{n} = 1 \}.
	\end{equation}
	Moreover, let $\mathcal{G}(u)$ denote the set of all binary vectors that can be generated by $u$. Define $\mathcal{G}(\mathbf{S}) = \cup_{u\in \mathbf{S}} \mathcal{G}(u)$ and $\mathcal{G}(u,d) = \{v: v\in \mathcal{G}(u), w_H(v)=d+1\}$. 
\end{definition}

%Let $\mathcal{B}(u,d) = \{b_i \in \mathcal{G}(u,d): i\in[w_H(u)-d], w_H(b_\mathbf{S})\ge |\mathbf{S}|+d, \forall \mathbf{S}\subseteq [w_H(u)-d]\}$.Note that $\mathcal{B}(u,d)$ is a set of $d$-Basis vectors that can be generated by $u$. Since the maximum number of $d$-Basis vector that $u$ can generated is $w_H(u)-d$, given $\mathcal{B}(u,d)$, adding any other $b\in \mathcal{G}(u,d) \setminus \mathcal{B}(u,d)$ into $\mathcal{B}(u,d)$ will always violate condition~(\ref{Eq:dB2}).

\begin{definition}
	A set of $K$-dimensional binary vectors $\mathbf{U} =\{u_1,\dots,u_L\}$ is able to generate a $(d,K)$-Basis $\{v_i: i\in [K-d]\}$ if $\forall i\in [K-d]$, $v_i \in \mathcal{G}(\mathbf{U},d)$.
	Let $(d^*,K)$-Basis denote the $(d,K)$-Basis with largest $d$ that can be generated by given vectors.
\end{definition}

%\begin{definition}[$d^*$-Basis ]
%	Given a set of $K$-dimensional binary vectors $\mathbf{U}$, $d^*$-Basis is the $d$-Basis with largest $d$ which can be generated by $\mathbf{U}$. 
%\end{definition}

\begin{lemma}\label{LM:d}
	If a set of $K$-dimensional binary vectors is able to generate a $(d_1,K)$-Basis, then it is also able to generate a $(d_2,K)$-Basis for any $d_2\le d_1$.
\end{lemma}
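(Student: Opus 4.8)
The plan is to reduce the statement to a single downward step and then give an explicit construction. Since any $d_2\le d_1$ is reached by decreasing $d$ one unit at a time, it suffices to prove that whenever a set $\mathbf{U}$ of $K$-dimensional binary vectors can generate a $(d,K)$-Basis $\mathbf{V}=\{v_1,\dots,v_{K-d}\}$ with $1\le d\le K-1$, it can also generate a $(d-1,K)$-Basis. I would obtain the new basis by \emph{shrinking} the supports of the vectors already in hand: each $v_i$ has weight $d+1$ and, by the generation hypothesis, has support contained in some $u\in\mathbf{U}$; hence every weight-$d$ vector whose support $\{m:(v_i)_m=1\}$ is reduced by one element again lies in $\mathcal{G}(\mathbf{U},d-1)$. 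This makes the ``generated by $\mathbf{U}$'' requirement automatic and reduces the whole problem to a purely combinatorial question about the supports $S_i=\{m:(v_i)_m=1\}\subseteq[K]$.

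Next I would fix the bookkeeping. A $(d-1,K)$-Basis must contain $K-d+1$ vectors, exactly one more than $\mathbf{V}$, each of weight $d$, one less than before. The construction I would use is therefore to delete a single, carefully chosen coordinate $x_i$ from each $S_i$, producing $K-d$ weight-$d$ sets $S_i'=S_i\setminus\{x_i\}$, and then to adjoin one further weight-$d$ subset of some $S_i$ to reach the required count $K-d+1$. Applying the full-set case of~(\ref{Eq:dB2}) gives $w_H(v_{\mathbf{V}})\ge (K-d)+d=K$, so the supports already cover $[K]$; the adjoined vector is what keeps this covering intact after the deletions.

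The heart of the argument is choosing the deleted coordinates $x_i$ so that the new family simultaneously (i) still satisfies the spread condition with parameter $d-1$, i.e.\ $w_H(v'_{\mathbf{S}})\ge|\mathbf{S}|+(d-1)$ for every nonempty $\mathbf{S}$, and (ii) remains linearly independent. For (i) the slack is exactly right: for any $\mathbf{S}$ the hypothesis gives $w_H(v_{\mathbf{S}})\ge|\mathbf{S}|+d$, so I can afford to lose at most one covered coordinate per subset. Deleting coordinates \emph{arbitrarily} fails this, since the union over a two-element $\mathbf{S}$ can drop by two and thus fall to $d$ instead of the required $d+1$. I would therefore force the $x_i$ to be \emph{redundant} within every subset: first extract a system of distinct representatives $r_i\in S_i$ (which exists because the spread condition already gives Hall's inequality $w_H(v_{\mathbf{S}})\ge|\mathbf{S}|$), and then delete from each $S_i$ a coordinate other than its own $r_i$, preferably one that is the representative of another set, so that it is never globally lost and the surplus $d$ absorbs the local loss. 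This matching/Hall-type selection, powered by the surplus, is the step I expect to require the most care and to be the main obstacle.

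Finally, for (ii) I would secure linear independence either by arranging that each new vector keeps a coordinate used by no other new vector, which triangularizes the coefficient matrix, or by tracking how the deletions act on the already-independent family $\mathbf{V}$. Checking that the redundant-deletion rule of the previous paragraph is always feasible — each $S_i$ has size $d+1\ge2$, so it always offers a coordinate other than $r_i$ to remove, and the adjoined vector can be chosen to restore any coordinate that would otherwise be dropped — and that conditions (i) and (ii) can be met at once, is the technical core. Granting that, iterating the single downward step from $d_1$ to $d_2$ establishes the lemma.
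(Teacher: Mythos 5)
Your overall strategy (shrink each support by one coordinate, adjoin one extra weight-$d$ set, iterate the single downward step $d\to d-1$) is coherent and takes the weight requirement in the definition of ``generate'' seriously, but as written it is a plan rather than a proof: the step you yourself label ``the technical core'' is exactly the part that is missing. Concretely, you never establish that a valid system of deleted coordinates $x_1,\dots,x_{K-d}$ exists. The surplus in~(\ref{Eq:dB2}) is off by far more than you need --- for a subset $\mathbf{S}$ the union can a priori lose up to $|\mathbf{S}|$ coordinates while the budget is one --- and your proposed remedy, deleting from $S_i$ only a coordinate that is the representative $r_j$ of another set, is not always available: writing $T=[K]\setminus\{r_1,\dots,r_{K-d}\}$ (so $|T|=d$), nothing prevents $S_i=\{r_i\}\cup T$, in which case $S_i$ contains no representative other than its own. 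Even when such an $x_i=r_j$ exists, the coordinate $r_j$ survives only inside $S_j'$, which need not belong to the particular subset $\mathbf{S}$ being tested, so ``never globally lost'' does not control $w_H(v'_{\mathbf{S}})$. The choice of the adjoined $(K-d+1)$-st vector and the linear independence of the new $0$--$1$ family are likewise asserted rather than argued. Since the entire difficulty of the lemma is concentrated in precisely these selections, this is a genuine gap.

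For comparison, the paper's proof goes a completely different and much shorter way: it keeps the $(d_1,K)$-Basis vectors themselves as the first $K-d_1$ members of the $(d_2,K)$-Basis, observing that $w_H(v_{\mathbf{S}})\ge|\mathbf{S}|+d_1\ge|\mathbf{S}|+d_2$ holds automatically, and then pads the collection to size $K-d_2$ with repeated copies of $v_1$, checking that the at most $d_1-d_2$ repetitions cost at most $d_1-d_2$ in the union bound. No supports are shrunk and no matching argument is needed. (That argument is itself loose about the weight-$(d_2+1)$ requirement in $\mathcal{G}(\mathbf{U},d_2)$ and about the linear independence of repeated vectors, so your instinct that the supports must actually be reduced is defensible --- but then the combinatorial selection has to be carried out, which your proposal does not do.)
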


\begin{proof}
	Consider a set of binary vectors $\{u_1,\dots,u_N\}$ that is able to generate a $(d_1,K)$-Basis $\mathbf{V} = \{v_1,\dots,v_{K-d_1}\}.$ Then
	\begin{align}
		\forall i \in [K-d_1], \exists j \in [N]: \{m: v_{im}\}\subseteq \{n:u_{jn}\} 
	\end{align}
	Hence any vector generated by $v_i$ should also be able to be generated by the corresponding $u_j$.
	Thus, to prove this lemma, it suffices to show that $\forall d_2 \le d_1$, there exists a $(d_2,K)$-Basis $\mathbf{Q} =\{q_1,\dots,q_{K-d_2}\}$ that can be generated by $\{v_1,\dots,v_{K-d_1}\}$.
	Since $\mathbf{V}$ is a $(d_1,K)$-Basis and $d_2\le d_1$, $\forall \mathbf{S} \subseteq\mathbf{V}$, we have
	\begin{align}
		w_H(v_{\mathbf{S}}) \ge |\mathbf{S}| + d_1 \ge |\mathbf{S}| + d_2
	\end{align}
	Thus all vectors in $\{v_1,\dots,v_{K-d_1}\}$ satisfy the constraints for vectors of $(d_2,K)$-Basis.
	We can choose $q_i = v_i$,  $\forall i = [K-d_1]$.
	Moreover, $\forall j \in \{K-d_1+1,\dots,K-d_2\}$, we choose $q_{j} = v_1$ to be the repeated vector.
	Then, $\forall \hat{\mathbf{S}} \subseteq \mathbf{Q}$:
	\begin{align}
		w_H(q_{\hat{\mathbf{S}}}) \ge |\hat{\mathbf{S}}| +d_1 - c \ge |\hat{\mathbf{S}}|+d_2
	\end{align} 
	where $c = |\hat{\mathbf{S}}\cap\{q_j: j\in \{K-d_1+1,\dots,K-d_2\}\}| \le d_1-d_2$ is the number of the repeated vectors.
	Hence $\mathbf{Q} = \{q_1,\dots,q_{K-d_2}\}$ is a $(d_2,K)$-Basis.  
\end{proof}

\section{Main Results}\label{sec:mainresults}
In this section, we present our main results and proofs.
The relationship between a $(d,K)$-Basis and a coding scheme that can enable nodes with at least $d$ packets to recover all missing packets is revealed by the following theorem.

\begin{theorem}\label{Thm:dt}
	If for some subset of nodes $\mathbf{I} \subseteq \mathbf{N}$ there exists a $(d,K)$-Basis $\mathbf{V} \subseteq \mathcal{G}(\{e_i,i\in \mathbf{I}\},d)$, then the nodes of $\mathbf{I}$ can generate a coding scheme $\mathbf{T} = \{T_1,\dots,T_R\}$ with $R= K-d$ such that $\forall i \in \mathbf{N}, w_H(e_i)\ge d$, node $i$ can recover all packets. 
\end{theorem}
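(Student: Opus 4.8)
The plan is to use the $(d,K)$-Basis $\mathbf{V}=\{v_1,\dots,v_{K-d}\}$ to prescribe the \emph{support pattern} of the coefficient matrix $A$ in~\eqref{eq-Amatrix}, and then to fill in its nonzero entries so that $A$ becomes the generator matrix of a $[K,K-d]$ MDS code, i.e.\ so that every $K-d$ columns of $A$ are linearly independent. Concretely, I would let row $\alpha_i$ of $A$ be supported exactly on the coordinates where $v_i=1$. Since $\mathbf{V}\subseteq\mathcal{G}(\{e_i,i\in\mathbf{I}\},d)$, each $v_i$ has $w_H(v_i)=d+1$, so every transmission $T_i=\alpha_i[P_1,\dots,P_K]^\mathsf{T}$ is a linear combination of exactly $d+1$ packets; moreover there is a node $j\in\mathbf{I}$ with $\mathrm{supp}(v_i)\subseteq\mathrm{supp}(e_j)$, so node $j$ holds all $d+1$ packets in $T_i$ and can actually form it. This already produces a scheme with exactly $R=K-d$ transmissions, each generated by some node in $\mathbf{I}$.

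Next I would establish decoding. Any node $i$ with $w_H(e_i)\ge d$ is missing a set $M_i$ of packets with $|M_i|=K-w_H(e_i)\le K-d$. After subtracting the (known) contributions of its own packets from all $K-d$ received transmissions, node $i$ is left with a linear system whose coefficient matrix is the column-submatrix of $A$ indexed by $M_i$, of size $(K-d)\times|M_i|$. If $A$ has the MDS property, then any set of at most $K-d$ columns of $A$ is linearly independent, hence this submatrix has full column rank $|M_i|$ and node $i$ recovers all its missing packets uniquely. I would emphasize that this step uses only $|M_i|\le K-d$, which is why the scheme works regardless of \emph{which} packets node $i$ actually holds.

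The crux is constructing the MDS-with-prescribed-support matrix, and this is where I expect the main difficulty to lie. I would treat the nonzero entries of $A$ as independent indeterminates and show that, for every choice of $K-d$ columns $\mathbf{C}$, the determinant of the corresponding $(K-d)\times(K-d)$ submatrix is a nonzero polynomial. By the Leibniz expansion, distinct permutations contribute distinct monomials with no cancellation, so the determinant is nonzero as a polynomial exactly when the bipartite graph between the $K-d$ rows and the columns of $\mathbf{C}$ (edges at the $1$-entries of the support pattern) admits a perfect matching. This is precisely where the defining inequality~\eqref{Eq:dB2} of the $(d,K)$-Basis enters: for every subset of rows $\mathbf{S}$, the neighborhood of $\mathbf{S}$ inside $\mathbf{C}$ is $\mathrm{supp}(v_{\mathbf{S}})\cap\mathbf{C}$, and since $|[K]\setminus\mathbf{C}|=d$ we get $|\mathrm{supp}(v_{\mathbf{S}})\cap\mathbf{C}|\ge w_H(v_{\mathbf{S}})-d\ge(|\mathbf{S}|+d)-d=|\mathbf{S}|$, so Hall's condition holds and a perfect matching exists for every $\mathbf{C}$.

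Finally, since each of the finitely many maximal minors is a nonzero polynomial, their product is a nonzero polynomial, and by the Schwartz--Zippel lemma there is a single assignment of the indeterminates over a sufficiently large finite field $\mathbb{F}$ making all these minors simultaneously nonzero. This yields the desired $A$, completing the construction; the Hall/matching argument fed by~\eqref{Eq:dB2} is the heart of the proof, while the transmission assignment and decoding steps are routine. I would note that the later code-construction section replaces this existence argument with the explicit Vandermonde-based realization of $A$.
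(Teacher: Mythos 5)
Your proposal is correct and follows essentially the same route as the paper's own proof: prescribe the support of the coefficient matrix $A$ by the basis vectors, use condition~\eqref{Eq:dB2} to verify Hall's condition and hence a perfect matching for every $(K-d)$-column submatrix, conclude via Edmonds' theorem that each maximal minor is a nonzero polynomial, and pick a simultaneous nonzero assignment over a large enough field. Your decoding step is spelled out slightly more explicitly than the paper's, but the argument is the same.
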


\begin{proof}
	In our coding scheme, each transmission $T_i$ is a linear combination (with appropriate coefficients) of the packets indexed by the non-zero entries in $v_i$.
	Since the vectors $v_i$'s are a subset of the vectors generated by the PDVs of the nodes in $\mathbf{I},$ there is one node in $\mathbf{I}$
	for each $v_i$ that can locally produce and transmit said linear combination.
	The overall code can thus be characterized by a matrix $A$ as in Eqn.~\eqref{eq-Amatrix} where in row $i,$ only the elements indexed by $v_i$ are non-zero.
	%Hence, $v_i = supp(\alpha_i)$.
	
	For any $\mathcal{C} \subset [K]$ with $|\mathcal{C}| = R,$ let $A(\mathcal{C})$ denote the submatrix of $A$ consisting of the $R$ columns indexed by $\mathcal{C}.$
	Due to constraint~(\ref{Eq:dB2}), $\forall \emptyset \ne \mathbf{S} \subseteq V$, we have
	\begin{align}
		w_H(v_{\mathbf{S}}) \ge |\mathbf{S}|+d.
	\end{align} 
	Denote the $i^{th}$ of row of $A(\mathcal{C})$ by $\alpha_i(\mathcal{C})$.
	Then $\forall \hat{\mathbf{S}} \subseteq  \{\alpha_1(\mathcal{C}), \cdots, \alpha_R(\mathcal{C})\} $, we have
	\begin{align}
		w_H(\alpha_{\hat{\mathbf{S}}}(\mathcal{C})) \ge w_H(v_{\hat{\mathbf{S}}}) - d  \ge |\hat{\mathbf{S}}|.\label{Eq:hm}
	\end{align} 
	Let $\mathbb{G}(A(\mathcal{C}))$ denote the bipartite graph corresponding to $A(\mathcal{C})$, where there is an edge between $i^{th}$ left vertex and $j^{th}$ right vertex if and only if $A(\mathcal{C})_{ij}\ne 0$.
	Since Enq.~\eqref{Eq:hm} satisfies the condition of Hall's marriage theorem, there exists a perfect matching in $\mathbb{G}(A(\mathcal{C}))$.
	According to Edmond's Theorem~\cite{motwani2010randomized}, the existence of perfect matching in bipartite graph $\mathbb{G}(A(\mathcal{C}))$ implies that $\det(A(\mathcal{C})) \not \equiv 0$.
	
	The product of determinants of all submatrices with $R$ columns, denoted by $\prod_{\mathcal{C}}\det(A(\mathcal{C}))$, is a multivariate polynomial of non-zeros entries of $A$.
	For a large enough finite field, there always exists a good choice of non-zero entries of $A$ such that  $\prod_{\mathcal{C}}\det(A(\mathcal{C})) \not \equiv 0$~\cite{ho2006random}.
	For such choices, any $R$ columns of $A$ can be linearly independent at the same time.
	In other words, given any $d$ packets, the other $R$ missing packets can be recovered from our coding scheme.
\end{proof}

\begin{remark}
	In Theorem~\ref{Thm:dt}, we proved that if the PDVs of nodes are able to generate a $(d,K)$-Basis, they can also generate a coding scheme such that nodes with at least $d$ packets can recover all missing packets from the coding scheme. 
	The coefficient matrix used in the proposed coding scheme can be associated with a constrained generator matrix for an MDS code~\cite{dau2013balanced}. We will introduce an efficient way to construct it by performing elementary row operations on a Vandermonde matrix in Section~\ref{sec:codeconstruction}.
\end{remark}

Theorem~\ref{Thm:dt} characterizes a certain class of coding schemes. Their unique feature is that each transmission is a (judiciously chosen) linear combination of exactly the same number of pure packets, namely, $d+1.$ Initially, this last feature may appear to be too restrictive to attain optimal performance. However, in the sequel, we will establish in two steps that there always exists an optimal scheme with this special property. Nonetheless, let us recall that in general, the optimal data exchange scheme is not unique, so there may be alternative schemes attaining the same (optimal) number of transmissions while not satisfying the special property.
To establish existence of an optimal scheme with the special property, we will next establish that if a (linear) scheme enabling universal recovery exists, then the nodes are also able to generate a corresponding basis (and hence, by Theorem~\ref{Thm:dt}, a scheme with the special property must exist). More precisely, we have the following theorem:

%As we pointed out, the optimal coding scheme (using the minimum number of transmissions to achieve universal recovery) is not unique.
%This means that besides the coding scheme we proposed in Theorem~\ref{Thm:dt}, there generally also exist other optimal coding schemes.
%Furthermore, the proposed coding scheme based on $(d,K)$-Basis can enable all nodes with at least $d$ packets to recover all packets.
%In particular settings of problems, the sets of initially available packets at nodes which  has size $d$ may not cover all subset of packets with size $d$.
%In example~\ref{EX:CCDE}, instead of considering all subset of packets with size $4$, the coding scheme only have to enable node 4 to fully recover all its missing packets. 
%Thus, it may be intuitive to think the coding scheme based on $(d,K)$-Basis is over-constrained, i.e. there may exist other coding scheme that achieves universal recovery by using $K-d$ transmissions, but no $(d,K)$-Basis based coding scheme
%can be generated.
%However, we would show that the existence of any universal recovery achievable coding scheme with $K-d$ transmissions also implies the existence of a proposed $(d,K)$-Basis based coding scheme which uses the same number of transmissions and achieves universal recovery by the following theorem.

\begin{theorem}\label{Thm:2}
	If a subset of nodes is able to generate a linear coding scheme with $R$ ($R=K-d$) transmissions which achieves universal recovery,
	then the PDVs of the nodes can generate a $(d,K)$-Basis $\mathbf{V} = \{v_1\dots,v_{R}\}$.
\end{theorem}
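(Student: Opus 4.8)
The plan is to reverse-engineer the argument that proves Theorem~\ref{Thm:dt}. First I would translate universal recovery into a rank condition on the coefficient matrix $A$. Each transmission $T_i = \alpha_i\,[P_1,\dots,P_K]^{\mathsf{T}}$ has its support contained in the PDV of its sender, so $\mathrm{supp}(\alpha_i)\subseteq \mathrm{supp}(e_{\pi(i)})$ for the sending node $\pi(i)$. A node $j$ knows the packets in $\mathbf{X}_j$ and can cancel their contribution from the received transmissions, so it recovers its missing packets if and only if the columns of $A$ indexed by $\mathbf{X}_j^c$ are linearly independent, i.e.\ that submatrix has full column rank $|\mathbf{X}_j^c| = K - w_H(e_j)$. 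Universal recovery is exactly the statement that this holds for every $j\in\mathbf{N}$. As an immediate consequence, since the submatrix has only $R=K-d$ rows, we need $K-w_H(e_j)\le K-d$, i.e.\ $w_H(e_j)\ge d$ for every node, which is precisely the regime in which a $(d,K)$-Basis is meaningful and matches the hypothesis of Theorem~\ref{Thm:dt}.

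The core step is to extract, from these rank conditions, the expansion inequality~\eqref{Eq:dB2} that defines a $(d,K)$-Basis. Here I would invoke Hall's and Edmonds' theorems in the converse direction to the proof of Theorem~\ref{Thm:dt}: full column rank of a submatrix forces a matching saturating those columns in the bipartite support graph, whereas a failure of Hall's condition would make every maximal minor vanish and collapse the rank. Concretely, fix any subcollection $\mathbf{S}$ of transmissions and let $\bar U$ be the packets touched by no transmission in $\mathbf{S}$, so that $|\bar U| = K - w_H(b_{\mathbf{S}})$ where $b_i=\mathrm{supp}(\alpha_i)$; these packets can appear only in the remaining $R-|\mathbf{S}|$ transmissions. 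Relating $\bar U$ to a node that misses these packets and applying the rank/recovery condition yields $|\bar U|\le R-|\mathbf{S}|$, which rearranges to $w_H(b_{\mathbf{S}})\ge |\mathbf{S}|+d$. The delicate point, which I expect to require the most care, is that a single physical node missing exactly $\bar U$ need not exist; I would close this gap by arguing through the Slepian--Wolf characterization~\eqref{eq:sw-constr}, which the feasible rate vector $\mathbf{r}$ of the scheme necessarily satisfies, rather than through one concrete node.

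Finally I would produce an actual $(d,K)$-Basis, whose vectors have weight exactly $d+1$ and lie in $\mathcal{G}(\{e_i,i\in\mathbf{I}\},d)$. Each $b_i$ is already contained in a PDV and, by the previous step, the $b_i$ satisfy the expansion property, so it remains to thin each $b_i$ down to a weight-$(d+1)$ subvector $v_i\subseteq b_i$ while preserving both $w_H(v_{\mathbf{S}})\ge|\mathbf{S}|+d$ and linear independence of $\{v_1,\dots,v_R\}$. I would drive this selection by the saturating matchings produced above: processing the transmissions one at a time and using the matching to guarantee that each chosen $v_i$ retains a coordinate not covered by the others gives both a system of distinct representatives (hence linear independence over $\mathbb{F}_2$) and the global expansion inequality, the latter because $\{v_1,\dots,v_R\}$ is then forced to cover all of $[K]$. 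An alternative I would keep in reserve is an induction that peels off one transmission together with one packet, reducing the claim from $(d,K)$ to $(d,K-1)$ while maintaining feasibility. I anticipate that simultaneously enforcing weight exactly $d+1$, membership in a PDV, the union-weight condition, and independence is the main obstacle, and that the matching structure inherited from the full-rank hypothesis is the right tool to control all four at once.
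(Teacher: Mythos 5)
Your core step does not go through: the expansion inequality $w_H(b_{\mathbf S})\ge|\mathbf S|+d$ is in general \emph{false} for the supports of a given scheme achieving universal recovery, so no amount of care with Hall's theorem or the Slepian--Wolf constraints can establish it. Concretely, take $K=3$, $N=3$, $\mathbf X_1=\{P_1,P_2\}$, $\mathbf X_2=\{P_2,P_3\}$, $\mathbf X_3=\{P_1,P_3\}$, and the scheme $T_1=P_1$ (sent by node 1), $T_2=P_2+P_3$ (sent by node 2). This achieves universal recovery with $R=2$, i.e.\ $d=1$, but for $\mathbf S=\{b_1\}$ one has $w_H(b_1)=1<|\mathbf S|+d=2$; equivalently $|\bar U|=|\{P_2,P_3\}|=2>R-|\mathbf S|=1$. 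The reason your rank argument fails is exactly the point you flagged as ``delicate'': recovery only forces full column rank of $A$ restricted to each \emph{individual} node's missing set $\mathbf X_j^c$, and no node need miss all of $\bar U$ at once. The Slepian--Wolf characterization cannot rescue this, since it constrains the rate vector rather than the supports, and it is satisfied in the example above even though the support inequality fails.

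The idea your proposal is missing --- and the heart of the paper's proof --- is that the theorem only asserts that the PDVs \emph{can generate} a $(d,K)$-Basis, not that the given scheme's supports already form one. The paper argues by \emph{enlarging} supports: if $w_H(\hat\alpha_{\mathbf S})<|\mathbf S|+d$, then at least $R-|\mathbf S|+1$ columns vanish on the rows of $\mathbf S$, so some column set $\mathcal C$ of that size gives a rank-deficient submatrix $\hat A(\mathcal C)$; if the senders of the transmissions in $\mathbf S$ had no further available packets among $P_{\mathcal C}$, they could not recover $P_{\mathcal C}$ themselves, contradicting universal recovery. Hence those senders do hold additional packets in $P_{\mathcal C}$ and can add them to their linear combinations, and iterating this repairs every violating subset $\mathbf S$ while keeping each support inside its sender's PDV. (In the example, node 1 still holds $P_2$ and fattens $T_1$ to $P_1+P_2$.) Your final thinning step to weight exactly $d+1$ is essentially the content of Lemma~\ref{LM:balanceddk} and is not needed for Theorem~\ref{Thm:2} itself; even there, the paper subtracts suitable linear combinations of the other rows rather than running a matching argument.
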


\begin{proof}
	We assume that a subset of nodes $\mathbf{I}$ can generate $R$ linearly independent transmissions $\hat{\mathbf{T}} = \{\hat{T}_1,\dots,\hat{T}_{R}\}$ which achieves universal recovery.
	%We have $w_H(e_i)\ge d, \forall i\in \mathbf{N}$.
	The code can be charaterized by a matrix $\hat{A}$ as in Eqn.~\eqref{eq-Amatrix} with rows $\hat{\alpha}_i$'s and columns $\hat{\beta}_j$'s. 
	Let $\hat{\mathbf{V}} = \{\hat{v}_1,\dots,\hat{v}_R\}$ where each $v_i = supp(\hat{\alpha}_i)$.
	That means in row $i$ of $\hat{A}$, only the elements indexed by $\hat{v}_i$ are non-zero.
	We would like to show that if $\hat{\mathbf{V}} = \{\hat{v}_1,\dots,\hat{v}_R\}$ does not satisfy Constraint~\eqref{Eq:dB2} of the $(d,K)$-Basis, then the nodes which generate the corresponding transmissions are able to add more packets into the linear combinations until the Constraint~\eqref{Eq:dB2} is satisfied. 
	
	For each non-empty subset $\mathbf{S}\subseteq \{\hat{\alpha}_1,\dots,\hat{\alpha}_R\}$ such that $w_H(\hat{\alpha}_{\mathbf{S}}) < |\mathbf{S}|+d$,
	we have
	\begin{align}
		K-w_H(\hat{\alpha}_{\mathbf{S}}) > K- |\mathbf{S}|-d \ge R-|\mathbf{S}|+1
	\end{align} 
	For the row vectors in $\mathbf{S}$, at least $R-|\mathbf{S}|+1$ columns are all zeros.
	Hence, there must exist a subset of columns $\mathcal{C}\subset [K]$ and corresponding subset of column vectors $\mathbf{C}\subseteq \{\hat{\beta}_1,\dots,\hat{\beta}_K\}$ such that
	\begin{align}
		|\mathcal{C}|=|\mathbf{C}|=R-|\mathbf{S}|+1
	\end{align}
	\begin{align}
		R-w_H(\hat{\beta}_{\mathbf{C}}) \ge |\mathbf{S}| \Rightarrow w_H(\hat{\beta}_{\mathbf{C}}) \le R- |\mathbf{S}| < |\mathbf{C}|
	\end{align}
	Let $\hat{A}(\mathcal{C})$ denote the submatrix which is composed of the columns indicated by subset of column vectors $\mathcal{C}$.
	Then submatrix $\hat{A}(\mathcal{C})$  is rank deficient.
	Let $P_\mathcal{C}\doteq \{P_i:i\in\mathcal{C}\}$ denote the set of packets indexed by $\mathcal{C}$.
	If the set $\mathcal{N}$ of nodes that generate transmissions $\{\hat{T}_i: \alpha_i\in \mathbf{S}\}$ cannot add any more packets into the linear combination for their transmissions,
	they have no more extra available packets in $P_\mathcal{C}$ and each transmission is a linear combination of all its sender node's available packets.
	This assumption leads to a contradiction that nodes in $\mathcal{N}$ cannot recover all missing packets. 
	Thus, nodes in $\mathcal{N}$ must have more packets in $P_\mathcal{C}$ and can add them into the linear combination to generate new transmissions $\{T_i:\alpha_i\in\mathbf{S}\}$ such that $w_H(\alpha_{\mathbf{S}}) = |\mathbf{S}|+d$, where
	$\alpha_i$ denotes the coefficient vector of transmission $T_i$.
	By replacing $\{\hat{T}_i:\alpha_i \in \mathbf{S}\}$ with $\{T_i:\alpha_i\in \mathbf{S}\}$, we have a new coding scheme $\mathbf{T}$ such that the set of corresponding support vectors $\mathbf{V}=\{v_i,\dots,v_R\}$ forms a $(d,K)$-Basis.
	For each transmission $T_i$ and the corresponding $\hat{T}_i$, we have $\hat{v_i}\in \mathcal{G}(v_i)$.
	Given that $\hat{\mathbf{T}}$ can achieve universal recovery, $\mathbf{T}$ can also achieve universal recovery.
%	For each $T_i$ with $w_H(\alpha_i) > d+1$, we would like to show that it can be reduced to a linear combination of $d+1$ packets.
%	$\forall \tilde{\mathbf{S}} \subseteq \{\alpha_j: j\ne i\}$, $w_H(\alpha_{\tilde{\mathbf{S}}})\ge |\tilde{\mathbf{S}}|+d$.
%	The linear combination of $\{T_j: j\ne i\}$ can provide $K-d-1$ degree of freedoms among the used packets.
%	Hence, by subtracting a proper linear combination of $\{T_j: j\ne i\}$ from $T_i$, we can get $\bar{T_i}$ with $w_H(\bar{\alpha}_i) = d+1$.
%	Thus the corresponding $\bar{\mathbf{V}}$ forms a $d$-Basis.
\end{proof}

\begin{lemma}\label{LM:balanceddk}
	If a subset of nodes can generate a linear coding scheme based on $(d,K)$-Basis which enables nodes with at least $d$ packets to recover all packets, they also can generate an equivalent linear coding scheme based on balanced $(d,K)$-Basis.
\end{lemma}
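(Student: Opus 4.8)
The plan is to take an arbitrary $(d,K)$-Basis $\mathbf{V} = \{v_1,\dots,v_{K-d}\}$ that the nodes can generate and transform it, one vector at a time, into a balanced $(d,K)$-Basis $\mathbf{V}' = \{v_1',\dots,v_{K-d}'\}$ satisfying $w_H(v_i') = d+1$ for all $i$, while preserving two invariants at every step: (i) the collection remains a valid $(d,K)$-Basis, i.e. it still satisfies Constraint~\eqref{Eq:dB2}; and (ii) each new vector is still generatable by the nodes, i.e. $v_i' \in \mathcal{G}(\{e_j : j\in\mathbf{I}\}, d)$. Once such a $\mathbf{V}'$ is produced, Theorem~\ref{Thm:dt} immediately yields the desired equivalent coding scheme in which every transmission is a linear combination of exactly $d+1$ packets.

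First I would observe that the defining property of a $(d,K)$-Basis already forces a lower bound on weights: applying Constraint~\eqref{Eq:dB2} to the singleton $\mathbf{S} = \{v_i\}$ gives $w_H(v_i) \ge 1 + d = d+1$ for every $i$. Hence the only way a basis can fail to be balanced is that some vector $v_i$ is \emph{too heavy}, with $w_H(v_i) > d+1$. The core of the argument is therefore a weight-reduction step: given an overweight vector $v_i$, I want to delete one of its nonzero coordinates (replacing $v_i$ by some $v_i' \in \mathcal{G}(v_i,\cdot)$ with $w_H(v_i') = w_H(v_i) - 1$) without violating Constraint~\eqref{Eq:dB2} for any subset $\mathbf{S}$ containing $v_i$. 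Iterating this brings each $v_i$ down to weight exactly $d+1$. The generatability invariant (ii) is automatic under this operation, since $\mathcal{G}(v_i) \subseteq \mathcal{G}(\{e_j\},d)$ once the support only shrinks.

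The hard part will be showing that a coordinate can always be safely removed from an overweight vector. Fix $v_i$ with $w_H(v_i) > d+1$, and consider the subsets $\mathbf{S} \ni v_i$ that are \emph{tight}, meaning $w_H(v_{\mathbf{S}}) = |\mathbf{S}| + d$; these are the only subsets at risk of being violated when we delete a coordinate. I would argue that we may pick a nonzero coordinate $m$ of $v_i$ that is \emph{not} a unique coordinate for $v_i$ within any tight set containing it --- that is, some other $v_j$ in each such tight set also has a $1$ in position $m$, so that dropping it from $v_i$ leaves $w_H(v_{\mathbf{S}})$ unchanged and hence still equal to $|\mathbf{S}|+d$. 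The existence of such a coordinate follows from a counting argument: if \emph{every} nonzero coordinate of the overweight $v_i$ were uniquely covered across its minimal tight sets, the slack $w_H(v_i)-(d+1) > 0$ would be incompatible with the union bound $w_H(v_{\mathbf{S}}) \ge \sum_{v\in\mathbf{S}} (\text{unique coordinates of } v) $ that tightness imposes. For subsets $\mathbf{S}$ that are \emph{not} tight we have strict slack $w_H(v_{\mathbf{S}}) \ge |\mathbf{S}|+d+1$, so removing a single coordinate keeps them feasible regardless. I would make the tight-set analysis precise (possibly via a submodularity / intersection argument on the family of tight sets, which is closed under union and intersection when the inequality is tight), conclude that a deletable coordinate exists, and then iterate the reduction until all vectors have weight $d+1$, at which point $\mathbf{V}'$ is the required balanced $(d,K)$-Basis.
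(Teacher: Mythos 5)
Your proposal is correct in substance but takes a genuinely different route from the paper. The paper works algebraically on the code itself: for each overweight transmission $T_i$ it subtracts a suitable linear combination of the other $K-d-1$ transmissions to cancel coefficients until only $d+1$ nonzero entries remain, so the row space (and hence decodability) is preserved by construction and no appeal back to Theorem~\ref{Thm:dt} is needed. You instead work purely combinatorially on the support vectors, deleting coordinates from overweight $v_i$ while preserving Constraint~\eqref{Eq:dB2}, and then regenerate a code via Theorem~\ref{Thm:dt}. Your key step is completable exactly as you sketch: the function $\mathbf{S}\mapsto w_H(v_{\mathbf{S}})-|\mathbf{S}|$ is submodular, so the tight sets containing $v_i$ are closed under intersection and admit a unique minimal element $\mathbf{S}_0$; if every coordinate of $v_i$ were uncovered by $\mathbf{S}_0\setminus\{v_i\}$ then $w_H(v_{\mathbf{S}_0}) \ge (d+2)+(|\mathbf{S}_0|-1+d) > |\mathbf{S}_0|+d$, contradicting tightness, so a safely deletable coordinate exists (and a deletion harmless for $\mathbf{S}_0$ is harmless for every tight set, since all contain $\mathbf{S}_0$). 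Your approach buys one thing the paper's does not address: because supports only shrink, the reduced transmission is automatically still producible by the original sender node, whereas subtracting other rows from $T_i$ can a priori introduce nonzero coefficients on packets outside the sender's support, a point the paper's proof leaves implicit. What the paper's approach buys is immediacy of ``equivalence'' (same span, same decoder) and brevity. Two small polish items for your write-up: the intermediate vectors of weight strictly between $d+1$ and $w_H(v_i)$ are not literally in $\mathcal{G}(\cdot,d)$ as defined (weight exactly $d+1$), so state invariant (ii) only for the final vectors; and you should note that pairwise distinctness of the reduced vectors is guaranteed by \eqref{Eq:dB2} applied to two-element subsets.
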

\begin{proof}
	For any linear coding scheme $\mathbf{T} = \{T_1,\dots,T_{K-d}\}$ based on $(d,K)$-Basis $\mathbf{V}=\{v_1,\dots,v_{K-d}\},$
	let $A$ denote the coefficient matrix of $\mathbf{T}$ and $\alpha_i$ denote the $i^{th}$ row of $A$.
	For each $T_i$ with $w_H(\alpha_i) > d+1$, we show that it can be reduced to a linear combination of $d+1$ packets.	
	$\forall \tilde{\mathbf{S}} \subseteq \{\alpha_j: j\ne i\}$, $w_H(\alpha_{\tilde{\mathbf{S}}})\ge |\tilde{\mathbf{S}}|+d$.
	The linear combination of $\{T_j: j\ne i\}$ can provide $K-d-1$ degree of freedoms among the used packets.
	Hence, by subtracting a proper linear combination of $\{T_j: j\ne i\}$ from $T_i$, we can get $\bar{T_i}$ with $w_H(\bar{\alpha}_i) = d+1$.
	Thus the corresponding $\bar{\mathbf{V}}$ is a balanced $(d,K)$-Basis.
\end{proof}

\begin{example*}
	We already know a coding scheme with $5$ transmissions that achieves universal recovery. 
	But each coded packet for transmission is a linear combination of two packets or just one pure packet. 
	According to Theorem~\ref{Thm:2} and Lemma~\ref{LM:balanceddk}, there must exist another coding scheme in which every coded packet for transmission is a linear combination of $5$ packets. It is easy to verify that coding scheme with the following coefficient matrix (over finite field $GF(2^4)$ with primitive polynomial $\alpha^4+\alpha+1$) also achieves universal recovery.
	\begin{equation}\label{eq:A}
	A =
	\begin{bmatrix}
	5 & 4 & 4 & 1 &1&0&0&0  &0 \\
	15 & 11 & 14 & 14 &0&1&0&0 & 0 \\
	3 & 6 & 13 &0&0&0&15&14 & 0 \\
	9 & 12 & 7 &0&0&0&15&0 & 14 \\
	0 & 0&0  & 10& 14 &6&9&8&0 
	\end{bmatrix}
	\end{equation}
	Each transmission is a linear combination of $5$ packets.
	%Let $V$ be a binary matrix that has same dimensions as $A$ 
	%and $V_{ij} = \mathbf{1}_{\{A_{ij} \ne 0\}}$.
	Define binary matrix $V$ such that %$V_{ij} = \mathbf{1}_{\{A_{ij} \ne 0\}}$.
		\begin{align}
			V_{ij} =
			\left\{
			\begin{aligned}
			&1, &A_{ij} \ne 0\\
			&0, &A_{ij} = 0
			\end{aligned}
			\right.
		\end{align}
	Then we have
	\begin{equation}\label{eq:V}
	V=
	\begin{bmatrix}
	v_1\\v_2\\v_3\\v_4\\v_5
	\end{bmatrix}
	=
	\begin{bmatrix}
	1 & 1 & 1 & 1 &1&0&0&0  &0 \\
	1 & 1 & 1 & 1 &0&1&0&0 & 0 \\
	1 & 1 & 1 &0&0&0&1&1 & 0 \\
	1 & 1 & 1 &0&0&0&1&0 & 1 \\
	0 & 0&0  & 1& 1 &1&1&1&0 
	\end{bmatrix}
	\end{equation}
	The row vectors of $V$ actually form a balanced $(4,9)$-Basis.
	As mentioned in Theorem~\ref{Thm:dt}, given any 4 packets, the other 5 packets can be recovered from the coding scheme based on coefficient matrix $A$. 
	Hence, in this example, the detail information of available packets at node 4 is not necessary. 
	As long as it initially has 4 packets, it can always recover the other packets by receiving these coded packets. 
\end{example*}

Now we have the connection between optimal coding schemes with minimum number of required transmissions and  balanced $(d,K)$-Bases. 
Thus, we can search balanced $(d,K)$-Bases to get achievable (upper) bounds on the minimum number of required transmissions.
Extending the search over all values of $d$ (and using Theorem~\ref{Thm:2} and Lemma~\ref{LM:balanceddk}) then establishes optimal performance.
More precisely, we have the following theorem:

\begin{theorem}\label{Thm:CCDE}
	For the CDE in the fully connected network, the minimal number of required transmissions $R^*$ satisfies:
	\begin{equation}
	R^* = K-\min \{\mathcal{M}, d^*\}
	\end{equation}
	where the $(d^*,K)$-Basis is the largest $(d,K)$-Basis that can be generated by the PDVs of nodes.
\end{theorem}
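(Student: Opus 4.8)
The plan is to prove the two inequalities $R^* \le K - \min\{\mathcal{M}, d^*\}$ (achievability) and $R^* \ge K - \min\{\mathcal{M}, d^*\}$ (converse) separately. Write $\bar d = \min\{\mathcal{M}, d^*\}$ and observe that $K - \bar d = \max\{K - \mathcal{M},\, K - d^*\}$. The converse thus reduces to establishing the two lower bounds $R^* \ge K - \mathcal{M}$ and $R^* \ge K - d^*$ and taking their maximum, while achievability exhibits a single scheme attaining $K - \bar d$.

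For achievability I would first invoke Lemma~\ref{LM:d}: since $\bar d \le d^*$ and, by the definition of $d^*$, the PDVs generate a $(d^*,K)$-Basis, they also generate a $(\bar d, K)$-Basis $\mathbf{V} \subseteq \mathcal{G}(\{e_i : i \in \mathbf{N}\}, \bar d)$. Applying Theorem~\ref{Thm:dt} with $\mathbf{I} = \mathbf{N}$ then produces a coding scheme with $K - \bar d$ transmissions under which every node $i$ with $w_H(e_i) \ge \bar d$ recovers the whole file. Because $\bar d \le \mathcal{M} = \min_i w_H(e_i)$, \emph{every} node satisfies this condition, so the scheme achieves universal recovery and $R^* \le K - \bar d$. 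The point worth emphasizing is that using $\bar d$ rather than $d^*$ is precisely what lets one scheme serve all nodes: taking $d = d^*$ could leave the minimum-weight node unable to decode when $d^* > \mathcal{M}$, which is why the minimum with $\mathcal{M}$ appears.

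For the converse, the bound $R^* \ge K - \mathcal{M}$ is the easy half. Letting $i^*$ be a node with $|\mathbf{X}_{i^*}| = \mathcal{M}$ and applying the Slepian--Wolf constraint~\eqref{eq:sw-constr} to $\mathbf{I} = \{i^*\}$ gives $\sum_{i \ne i^*} r_i \ge |\mathbf{X}_{i^*}^c| = K - \mathcal{M}$, so $R = \sum_i r_i \ge K - \mathcal{M}$ for every feasible rate vector, in particular $R^* \ge K - \mathcal{M}$. The harder half, $R^* \ge K - d^*$, is where Theorem~\ref{Thm:2} does the work. I would take an optimal scheme attaining $R^*$ and argue that it may be assumed to consist of linearly independent transmissions, since otherwise a redundant transmission could be dropped while preserving universal recovery, contradicting minimality of $R^*$. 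Setting $d = K - R^*$, which is nonnegative because $R^* \le K$ (e.g.\ by transmitting each packet once) and satisfies $d \le K-1$ because $R^* \ge 1$, Theorem~\ref{Thm:2} shows the PDVs generate a $(d, K)$-Basis. Since $d^*$ is by definition the largest parameter for which such a basis is generatable, $K - R^* = d \le d^*$, that is $R^* \ge K - d^*$. Combining the two lower bounds gives $R^* \ge \max\{K - \mathcal{M},\, K - d^*\} = K - \bar d$, which with achievability yields equality.

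I expect the main obstacle to be this last step of the converse: making the reduction to linearly independent transmissions rigorous so that Theorem~\ref{Thm:2} genuinely applies to the optimal scheme, and verifying that $d = K - R^*$ is a legitimate basis parameter (in particular $0 \le d \le K-1$) so that comparing it against $d^*$ is meaningful. Everything else is bookkeeping with the definitions together with Theorem~\ref{Thm:dt}, Theorem~\ref{Thm:2}, and Lemma~\ref{LM:d}.
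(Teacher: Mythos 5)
Your proposal is correct and follows essentially the same route as the paper: achievability comes from stepping $d^*$ down to $\bar d=\min\{\mathcal{M},d^*\}$ via Lemma~\ref{LM:d} and invoking Theorem~\ref{Thm:dt}, and the converse comes from Theorem~\ref{Thm:2}. The one substantive difference is that you explicitly supply the second lower bound $R^*\ge K-\mathcal{M}$ from the singleton Slepian--Wolf constraint~\eqref{eq:sw-constr} applied to a node with $|\mathbf{X}_{i^*}|=\mathcal{M}$; the paper's own proof, in the case $\mathcal{M}<d^*$, only exhibits a scheme with $K-\mathcal{M}$ transmissions and leaves this matching converse implicit (its only stated lower bound, $K-d^*$, is strictly weaker there). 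So your version is slightly more complete on that point, and your flagged concerns about reducing to linearly independent transmissions and checking $0\le K-R^*\le K-1$ before applying Theorem~\ref{Thm:2} are legitimate details that the paper also glosses over.
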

\begin{proof}
	%	Since the PDVs can generate $d^*$-Basis, according to Theorem~\ref{Thm:dt}, the node set is also able to generate a linear coding scheme that uses $K-d^*$ transmissions and enable all nodes with at least $d^*$ packets to recover all missing packets.
	%	Moreover, since $d^*$-Basis is the largest $d$-Basis that can be generated by the set of PDVs, 
	%	there does not exist any coding scheme in which each of the transmissions is a linear combination of more than $d^*+1$ packets and can also achieve universal recovery. 
	%	Hence according to Theorem~\ref{Thm:2}, there does not exist any transmission scheme that can achieve universal recovery and uses less than $K-d^*$ transmissions.  
	%	
	%	If $\mathcal{M} = \min_{i\in \mathbf{N}} |X_i| \ge d^*$, every node initially has at least $d^*$ packets and can recover their missing packets from the coding scheme based on the $d^*$-Basis.
	%	If $\mathcal{M} < d^*$, coding schemes based on $d^*$-Basis cannot achieve universal recovery at nodes which initially have fewer than $d^*$ packets. 
	%	However, according to Lemma~\ref{LM:d}, the set of PDVs is able to generate any $d$-Basis such that $d\le d^*$. 
	%	Hence, nodes set can generate coding scheme which based on $d$-Basis such that $d= \mathcal{M}$ and achieve universal recovery at all nodes.

        By assumption, $d^*$ is the largest value of $d$ for which a $(d,K)$-Basis can be generated by the PDVs. But then, by Theorem~\ref{Thm:2}, there does not exist any linear coding scheme that can achieve universal recovery by using fewer than $K-d^*$ transmissions.  
	
	Suppose that $\mathcal{M} \ge d^*$. Then every node has at least $d^*$ packets. Since a $(d^*,K)$-Basis can be generated by the PDVs, according to Theorem~\ref{Thm:dt}, there is a linear coding scheme with $K-d^*$ transmissions such that every nodes with at least $d^*$ packets can recover all missing packets. 
	
	Now suppose that $\mathcal{M} < d^*$. According to Lemma~\ref{LM:d}, the PDVs can also generate a $(d,K)$-Basis with $d= \mathcal{M}$.
	According to Theorem~\ref{Thm:dt}, there is a linear coding scheme with $K-\mathcal{M}$ transmissions such that every nodes with at least $\mathcal{M}$ packets can recover all missing packets. 
	
	Hence, the minimum number of required transmissions satisfies $R^* = K-\min \{\mathcal{M}, d^*\}$.
\end{proof}

\section{Algorithm}\label{sec:algorithm}

According to Theorem~\ref{Thm:CCDE}, to solve the CDE problem for the fully connected network, we need to find the largest value of $d$ such that a $(d,K)$-Basis that can be generated by the PDVs of nodes. We denote this optimal value of $d$ by $d^*.$
This problem can be decomposed into two subproblems:
\begin{itemize}
	\item[(1)] Given a fixed $d$, determine whether any balanced $(d,K)$-Basis can be generated by the PDVs of nodes or not.
	\item[(2)] Find the maximum value of $d$ such that the PDVs of nodes can generate one balanced $(d,K)$-Basis.
\end{itemize}

\subsection{Existence of $(d,K)$-Basis}

Given the packet distribution matrix $E$ and a specific parameter $d$, 
Algorithm~\ref{alg_sdb} is proposed to check whether any balanced $(d,K)$-Basis can be generated by the PDVs of nodes or not.
Due to constraint~\eqref{Eq:dB2}, only nodes with at least $d+1$ packets can generate the $(d,K)$-Basis vectors.
Hence, we only consider the PDVs with $w_H(e_i) > d$ as the candidates to generate basis vectors.

\begin{definition}
	For any binary vector $u$ with $w_H(u) > d$, let $\mathcal{J} = \{j_1,\dots,j_{w_H(u)} \}$ denote the set of indices of the non-zero entries of $u$.
	Define the set $\mathcal{B}(u,d) = \{b_i: i\in [w_H(u)-d]\}$, where $b_{ij_k}$ is the $j_k^{th}$ entry of vector $b_i$ and satisfies 
	 \begin{align}
	 b_{ij_k} = \left\{
	 \begin{aligned}
	 &1, & \text{if } k \in [d] \cup \{d+i\} \text{ and } j_k \in \mathcal{J},\\
	 &0, & \text{otherwise.}
	 \end{aligned}
	 \right.
	 \end{align}
\end{definition}
The set $\mathcal{B}(u,d)$ is a particular set of binary vectors which are generated by $u$. 
Specifically, each of the vectors in the set has weight $d+1$ and it can be verified that the vectors satisfy the Constraint~\eqref{Eq:dB2} of the definition of the $(d,K)$-Basis.
Therefore, they are basis vector candidates for balanced $(d,K)$-Basis.

\begin{example}
	Given the $e_1 = [1,1,1,1,1,1,0,0,0]$, we can assign the $\mathcal{B}(e_1,4) = \{b_1,b_2\}$ where $b_1 = [1,1,1,1,1,0,0,0,0]$ and $b_2 = [1,1,1,1,0,1,0,0,0]$.	
\end{example}

\begin{lemma}
	For any binary vector $v \in \mathcal{G}(u,d) \setminus \mathcal{B}(u,d)$, let $\mathbf{S} = \mathcal{B}(u,d) \cup \{v\}$.
	We have $w_H(b_{\mathbf{S}})< |\mathbf{S}| + d$. 
\end{lemma}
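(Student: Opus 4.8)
The plan is to show that the family $\mathcal{B}(u,d)$ by itself already covers the \emph{entire} support of $u$, so that adjoining any further weight-$(d+1)$ vector generated by $u$ cannot enlarge the joint support, even though it does increase the cardinality of the set by one. A one-line count then forces the strict inequality. Throughout I interpret $b_{\mathbf{S}}$ as the bitwise OR of all vectors in $\mathbf{S}$, following the convention fixed in Section~\ref{sec:systemmodel}.

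First I would unpack the definition of $\mathcal{B}(u,d)$. Writing $\mathcal{J}=\{j_1,\dots,j_{w_H(u)}\}$ for the support of $u$, the definition gives each $b_i$ the support $\{j_1,\dots,j_d\}\cup\{j_{d+i}\}$; thus all the $b_i$ share the common block $\{j_1,\dots,j_d\}$, and $b_i$ alone contributes the coordinate $j_{d+i}$. As $i$ ranges over $[w_H(u)-d]$, the indices $j_{d+i}$ range over the whole tail $\{j_{d+1},\dots,j_{w_H(u)}\}$. Consequently the OR over $\mathcal{B}(u,d)$ has support exactly $\mathcal{J}$, so $w_H(b_{\mathcal{B}(u,d)})=w_H(u)$.

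Next I would bring in $v$. Because $v\in\mathcal{G}(u,d)\subseteq\mathcal{G}(u)$, the support of $v$ is contained in the support of $u$, i.e.\ in $\mathcal{J}$. Hence OR-ing $v$ into the family introduces no new coordinate, and $w_H(b_{\mathbf{S}})=w_H(b_{\mathcal{B}(u,d)})=w_H(u)$. Finally I would count cardinalities: $|\mathcal{B}(u,d)|=w_H(u)-d$ by definition, and since $v\notin\mathcal{B}(u,d)$ we have $|\mathbf{S}|=w_H(u)-d+1$, so $|\mathbf{S}|+d=w_H(u)+1>w_H(u)=w_H(b_{\mathbf{S}})$, which is exactly the claimed bound $w_H(b_{\mathbf{S}})<|\mathbf{S}|+d$.

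There is no genuine obstacle here: the statement is a direct consequence of the fact that $\mathcal{B}(u,d)$ saturates the support of $u$. The only point needing a little care is to check that the extra vector $v$ simultaneously \emph{fails} to enlarge the OR (because $v\in\mathcal{G}(u)$) and \emph{does} increase the set size by exactly one (because $v\notin\mathcal{B}(u,d)$); both are guaranteed precisely by the hypothesis $v\in\mathcal{G}(u,d)\setminus\mathcal{B}(u,d)$. The vacuous case $w_H(u)=d+1$, where $\mathcal{G}(u,d)=\mathcal{B}(u,d)$ and no such $v$ exists, needs no separate treatment.
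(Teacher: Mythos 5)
Your proof is correct and follows essentially the same route as the paper's: both arguments reduce to showing $w_H(b_{\mathcal{B}(u,d)}) = w_H(u)$ and then counting $|\mathbf{S}| + d = w_H(u) - d + 1 + d = w_H(u) + 1 > w_H(b_{\mathbf{S}})$. The only (cosmetic) difference is that you compute the joint support of $\mathcal{B}(u,d)$ directly from its definition, whereas the paper sandwiches $w_H(b_{\mathcal{B}(u,d)})$ between the upper bound $w_H(u)$ (since every $b_i$ is generated by $u$) and the lower bound $|\mathcal{B}(u,d)|+d$ (from Constraint~\eqref{Eq:dB2}); your direct computation is, if anything, a bit more self-contained.
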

\begin{proof}
	Since $\mathcal{B}(u,d) \subset \mathcal{G}(u,d)$, $w_H(b_{\mathcal{B}(u,d)}) \le w_H(u)$. 
	Also, according to the definition of $\mathcal{B}(u,d)$, we have 
	\begin{align}
		w_H(b_{\mathcal{B}(u,d)}) \ge |\mathcal{B}(u,d)|+d = w_H(u)
	\end{align}
	Hence, $w_H(b_{\mathcal{B}(u,d)}) = w_H(u)$. 
	For any $v \in \mathcal{G}(u,d) \setminus \mathcal{B}(u,d)$, $|\mathbf{S}| = |\mathcal{B}| +1 = w_H(u) -d +1$
	\begin{align}
		w_H(b_{\mathbf{S}}) = w_H(u) < w_H(u)-d+1+d = |\mathbf{S}| + d 
	\end{align} 
\end{proof}
Thus, any vector $v \in \mathcal{G}(u,d) \setminus \mathcal{B}(u,d)$ is not compatible with $\mathcal{B}(u,d)$ in terms of the Constraint~\eqref{Eq:dB2}.
\begin{corollary}
	For each PDV $e_i$,  it is sufficient to check vectors of $\mathcal{B}(e_i,d)$ instead of all vectors of $\mathcal{G}(e_i,d)$.
\end{corollary}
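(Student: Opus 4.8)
My plan is to read the corollary as a feasibility-reduction statement and prove it in that form: the PDVs can generate a balanced $(d,K)$-Basis using arbitrary vectors from $\bigcup_i \mathcal{G}(e_i,d)$ if and only if they can generate one using only vectors from the much smaller pool $\bigcup_i \mathcal{B}(e_i,d)$. Since $\mathcal{B}(e_i,d)\subseteq\mathcal{G}(e_i,d)$ for every $i$, the ``if'' direction is immediate, so the work is the ``only if'' direction. First I would fix an arbitrary witnessing balanced $(d,K)$-Basis $\mathbf{V}$, assign to each of its vectors one node that generates it, and write $\mathbf{V}_i$ for the subfamily assigned to node $i$, so that $\mathbf{V}=\bigcup_i\mathbf{V}_i$ is a disjoint union and every vector of $\mathbf{V}_i$ has support inside $\mathrm{supp}(e_i)$.

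Two facts already in hand would drive the argument. Applying Constraint~\eqref{Eq:dB2} to the subset $\mathbf{V}_i$ and using that all its supports lie in $\mathrm{supp}(e_i)$ gives $|\mathbf{V}_i|+d\le w_H(e_i)$, so no node contributes more than $w_H(e_i)-d$ basis vectors. On the other side, $\mathcal{B}(e_i,d)$ consists of exactly $w_H(e_i)-d$ weight-$(d+1)$ vectors satisfying~\eqref{Eq:dB2}, and the computation inside the preceding lemma shows their componentwise OR has weight $w_H(e_i)$; hence $\mathcal{B}(e_i,d)$ covers all of $\mathrm{supp}(e_i)$ and is maximal, since by that same lemma no vector of $\mathcal{G}(e_i,d)\setminus\mathcal{B}(e_i,d)$ can be adjoined to it without violating~\eqref{Eq:dB2}. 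So $\mathcal{B}(e_i,d)$ is a maximum-size admissible family from node $i$ whose reachable coordinates coincide with those of $\mathcal{G}(e_i,d)$.

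From here my plan is an exchange argument that rewrites $\mathbf{V}$ one vector at a time: if some $v\in\mathbf{V}_i$ lies outside $\mathcal{B}(e_i,d)$, I would produce a $b\in\mathcal{B}(e_i,d)\setminus\mathbf{V}$ such that $(\mathbf{V}\setminus\{v\})\cup\{b\}$ is again a balanced $(d,K)$-Basis, so that iterating strictly decreases the number of vectors lying outside $\bigcup_i\mathcal{B}(e_i,d)$ and terminates. The natural framework is matroidal: for $d=1$, Constraint~\eqref{Eq:dB2} on weight-$2$ vectors is exactly the forest (acyclic) condition of the graphic matroid, $\mathcal{B}(e_i,1)$ is the star centered at the lowest-indexed coordinate of $\mathrm{supp}(e_i)$ (which spans the same vertices as the clique on $\mathrm{supp}(e_i)$), and the required swap is a standard graphic-matroid basis exchange. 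For general $d$ I would exploit the common $d$-element core $M$ shared by all vectors of $\mathcal{B}(e_i,d)$, choosing the extra coordinate $x$ of $b=M\cup\{x\}$ so that the swap restores both the weight-$(d+1)$ condition~\eqref{Eq:dB1} and the independence of $\mathbf{V}$, using the Hall/Edmonds machinery already invoked in the proof of Theorem~\ref{Thm:dt} to certify that the resulting support matrix keeps full rank.

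The hard part will be verifying that a single exchange simultaneously preserves Constraint~\eqref{Eq:dB2} for every subset that mixes the new vector with vectors of other nodes, while keeping the whole family linearly independent. The obstruction is that subsets of $\mathcal{B}(e_i,d)$ are coverage-tight (a $k$-subset covers exactly $k+d$ coordinates, meeting~\eqref{Eq:dB2} with equality), so one cannot argue by claiming coverage never decreases; instead the preservation must be read off from the sunflower structure of $\mathcal{B}(e_i,d)$ together with the admissibility of the untouched part of $\mathbf{V}$. For $d=1$ this is clean because the admissible families genuinely form a matroid, but for $d\ge2$ they need not, so the exchange has to be justified directly from the common-core structure rather than by a generic basis-exchange axiom, and making that step watertight is where the real effort lies.
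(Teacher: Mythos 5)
Your reading of the corollary as a feasibility-reduction statement is reasonable, and your two preliminary observations (the cap $|\mathbf{V}_i|\le w_H(e_i)-d$ coming from Constraint~\eqref{Eq:dB2} applied to the subfamily assigned to node $i$, and the fact that $\mathcal{B}(e_i,d)$ attains this cap while covering all of $\mathrm{supp}(e_i)$ and admitting no further vector of $\mathcal{G}(e_i,d)$) are exactly the facts the paper leans on. But what you have written is a plan rather than a proof: the entire content of the ``only if'' direction sits in the single-vector exchange step, and you explicitly leave that step unestablished for $d\ge 2$. As you yourself note, for $d\ge 2$ the families of weight-$(d+1)$ vectors satisfying Constraint~\eqref{Eq:dB2} need not form a matroid, so no generic basis-exchange axiom is available; and the difficulty is genuine, because every vector of $\mathcal{B}(e_i,d)$ contains the same $d$-element core $M$, so a swapped-in $b=M\cup\{x\}$ can form a subset $\mathbf{S}$ together with vectors of \emph{other} nodes whose componentwise OR has strictly smaller weight than the corresponding subset containing the removed $v$, and nothing in your outline excludes a violation of~\eqref{Eq:dB2} (or a loss of linear independence) there. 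Until that exchange lemma is actually proved or replaced, the proposal does not establish the corollary.

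For comparison, the paper does not attempt an exchange argument at all. It reads the corollary off the lemma immediately preceding it: $\mathcal{B}(u,d)$ has exactly $w_H(u)-d$ members, its componentwise OR has weight exactly $w_H(u)$, so Constraint~\eqref{Eq:dB2} holds with equality on the full family and no vector of $\mathcal{G}(u,d)\setminus\mathcal{B}(u,d)$ can be adjoined to it; since $w_H(u)-d$ is also the maximum number of basis vectors node $i$ can ever contribute, restricting the candidate pool to $\mathcal{B}(e_i,d)$ sacrifices neither the count nor the reachable coordinates. That argument is terse and leaves the cross-node compatibility question you raise largely implicit, so your instinct that a careful statement requires a replacement argument is sound — but the replacement argument is precisely what is missing from your proposal.
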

Although for each PDV $e_i$, there are as many as $w_H(e_i) \choose d+1$ balanced $(d,K)$-Basis vectors that can be generated, we can select any $\mathcal{B}(e_i,d)$ and only consider them as the candidate basis vectors.
Any other $v\in \mathcal{G}(e_i,d) \setminus \mathcal{B}(e_i,d)$ can be ignored.

\begin{lemma}
	Let $\mathbf{S} = \{v_1,\dots,v_{|\mathbf{S}|}\}$ denote a set of binary vectors with weight $w_H(v_i) = d+1, \forall v_i\in \mathbf{S}$ and $v_{\mathbf{S}}$ denote the bitwise \textbf{OR} result of all vectors in $\mathbf{S}$. For any vector $v \in \mathcal{G}(v_{\mathbf{S}},d)\setminus \mathbf{S}$, let $\hat{\mathbf{S}} = \mathbf{S}\cup \{v\}$, we have $w_H(v_{\hat{\mathbf{S}}}) < |\hat{\mathbf{S}}| + d$ if 
	\begin{equation}
		w_H(v_{\mathbf{S}}) \le  \sum_{i\in \mathbf{S}} w_H(v_i) - (|\mathbf{S}|-1)d\label{Ieq:Mq1}
	\end{equation}
\end{lemma}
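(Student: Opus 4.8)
The plan is to observe that, after substituting the weight constraint $w_H(v_i) = d+1$ into the right-hand side of~\eqref{Ieq:Mq1}, the hypothesis is essentially a restatement of the desired conclusion, so the proof reduces to a short chain of (in)equalities once one structural fact is noted.

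First I would exploit the containment built into the hypothesis $v \in \mathcal{G}(v_{\mathbf{S}},d)$. By the definition of $\mathcal{G}(\cdot)$, this means $\{m: v_m = 1\} \subseteq \{n: (v_{\mathbf{S}})_n = 1\}$; that is, the support of $v$ is already contained in the support of the bitwise \textbf{OR} of $\mathbf{S}$. Consequently, adjoining $v$ to $\mathbf{S}$ cannot enlarge the \textbf{OR}, so I would conclude $v_{\hat{\mathbf{S}}} = v_{\mathbf{S}}$, and hence $w_H(v_{\hat{\mathbf{S}}}) = w_H(v_{\mathbf{S}})$. This is the only genuinely structural step, and it is precisely where the assumption $v \in \mathcal{G}(v_{\mathbf{S}},d)$ (as opposed to $v$ being an arbitrary weight-$(d+1)$ vector) is used.

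Next I would simplify the right-hand side of~\eqref{Ieq:Mq1} using the fact that every $v_i \in \mathbf{S}$ has weight exactly $d+1$. Then $\sum_{i\in\mathbf{S}} w_H(v_i) = |\mathbf{S}|(d+1)$, so that
\[
\sum_{i\in\mathbf{S}} w_H(v_i) - (|\mathbf{S}|-1)d = |\mathbf{S}|(d+1) - (|\mathbf{S}|-1)d = |\mathbf{S}| + d,
\]
and the hypothesis reads simply $w_H(v_{\mathbf{S}}) \le |\mathbf{S}| + d$.

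Finally I would combine the two observations together with $|\hat{\mathbf{S}}| = |\mathbf{S}|+1$ (valid since $v \notin \mathbf{S}$), obtaining
\[
w_H(v_{\hat{\mathbf{S}}}) = w_H(v_{\mathbf{S}}) \le |\mathbf{S}| + d < (|\mathbf{S}|+1) + d = |\hat{\mathbf{S}}| + d,
\]
which is exactly the claim. I do not anticipate any real obstacle: once the support-containment observation collapses $w_H(v_{\hat{\mathbf{S}}})$ to $w_H(v_{\mathbf{S}})$, the statement is a one-line arithmetic simplification. The only point meriting a little care is checking that the strict inequality survives, i.e.\ that the hypothesis $w_H(v_{\mathbf{S}}) \le |\mathbf{S}|+d$ indeed yields the target $w_H(v_{\hat{\mathbf{S}}}) < |\mathbf{S}|+1+d$, which it does with room to spare.
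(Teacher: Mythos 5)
Your proof is correct and follows essentially the same route as the paper's: both collapse $w_H(v_{\hat{\mathbf{S}}})$ to $w_H(v_{\mathbf{S}})$ via the support containment implied by $v \in \mathcal{G}(v_{\mathbf{S}},d)$, simplify the right-hand side of the hypothesis to $|\mathbf{S}|+d$ using $w_H(v_i)=d+1$, and finish with the strict inequality from $|\hat{\mathbf{S}}|=|\mathbf{S}|+1$. No gaps.
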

\begin{proof}
	Since $v$ and all vectors in $\mathbf{S}$ can be generated by $v_{\mathbf{S}}$, we have 
	\begin{align}
		w_H(v_{\hat{\mathbf{S}}}) = w_H(v_{\mathbf{S}}) &\le \sum_{i\in \mathbf{S}} w_H(v_i) - (|\mathbf{S}|-1)d\\
		& = \sum_{i\in \mathbf{S}} (w_H(v_i)-d) + d\\
		& = |\mathbf{S}| + d < |\hat{\mathbf{S}}| +d
	\end{align}
\end{proof}
Thus, any vector $v \in \mathcal{G}(v_{\mathbf{S}},d)\setminus \mathbf{S}$ is not compatible with $\mathbf{S}$ in terms of the Constraint~\eqref{Eq:dB2} if Inequality~\eqref{Ieq:Mq1} holds.
Hence, once we find any set of basis vectors that satisfy Inequality~\eqref{Ieq:Mq}, any vector that can be generated by the merged vector should not be consider.
\begin{remark}
	Binary vector $v_m$ which has weight larger than $d+1$ can be treated as a merged vector of $\mathcal{B}(v_m,d)$.
	Therefore, $w_H(v_m)-d = |\mathcal{B}(v_m,d)|$.
	Condition~\eqref{Ieq:Mq1} also works for the cases where some of the vectors have weight larger than $d+1$.
\end{remark}
We use set $\mathbf{V}$ to store the balanced $(d,K)$-Basis vectors that have been generated by previous PDVs and set $\mathbf{Q}$ to store merged $(d,K)$-Basis vectors.
Any set of vectors which satisfy Condition~\eqref{Ieq:Mq1} will be merged as one vector and stored in $\mathbf{Q}$. 
Only $b\in \mathcal{B}(e_i,d)$ that cannot be generated by any vector in $\mathbf{Q}$ can be selected as the basis vectors.
After all vectors in $\mathcal{B}(e_i,d)$ have been checked, there must exist $e_i$ or a vector that can generate $e_i$ in $\mathbf{Q}$.

In the subspace spanned by any two vectors in $\mathbf{Q}$, there must exist at least one vector that should be added to form the $(d,K)$-Basis.
Instead of checking every subset of $\mathbf{Q}$ for merging, it is sufficient to only check the newly added vector with any subset $\mathbf{S}\subseteq \mathbf{Q}$ with $|\mathbf{S}|\le 2$ and treat the merged vector as the newly added vector for further merging until no merging possibility.

At the end, if $K-d$ such vectors are found, the PDVs of nodes are able to generate a $(d,K)$-Basis which is stored by $\mathbf{V}$ and the algorithm returns \textbf{True} and the corresponding basis $\mathbf{V}$. Otherwise, return \textbf{False}.

\begin{algorithm}[!tbp]
	\caption{Search balanced $(d,K)$-Basis (SdB)}
	\label{alg_sdb}
	\begin{algorithmic}[1]
	\State \textbf{Input:} $E = [e_1,\dots, e_N] ^\mathsf{T}$ and $d$. 
	\State \textbf{Output:} \textbf{True}, $\mathbf{r}$, $\mathbf{V}$ or \textbf{False}.
	\State \textbf{Initialization}: $\mathbf{Q} = \emptyset$, $\mathbf{V} = \emptyset$, $\mathbf{r} = [r_1,\dots,r_N]^\mathsf{T}=\mathbf{0}_{1\times N}$.
	%\For {$e_i: w_H(e_i) > d$}
	\For{$i: i\in \{1,\dots,N\}$}
	\For{$b\in \mathcal{B}(e_i,d)$}		
	\If{$b\not \in \mathcal{G}(\mathbf{Q},d)$ } 
	\State $r_i = r_i+1$
	\State $\mathbf{V} = \mathbf{V} \cup \{b\}$
	
	\While{$ \exists \mathbf{S}\subseteq \mathbf{Q}, |\mathbf{S}|\le2: (\ref{Ieq:Mqa})\textit{ holds}$}
	\begin{equation}
	w_H(q_{\mathbf{S}}\lor b) \le \sum_{q_i\in \mathbf{S}} w_H(q_i) +w_H(b) - |\mathbf{S}|d\label{Ieq:Mqa}
	\end{equation}
	\State $b = b\lor q_{\mathbf{S}}$, $\mathbf{Q} = \mathbf{Q}\setminus \mathbf{S}$ 
	\EndWhile
	\State  $\mathbf{Q} = \mathbf{Q} \cup \{b\}$	
	\EndIf 	
	\If{$|\mathbf{V}| = K-d$}
	\State return \textbf{True}, $\mathbf{r}$ and $\mathbf{V}$
	\EndIf
	\EndFor
	\EndFor
	\State return \textbf{False}
\end{algorithmic}
\end{algorithm}

\subsection{Searching for $d^*$}
We propose Algorithm~\ref{alg_r} which uses binary search method to find the $(d^*,K)$-Basis that can be generated by PDVs of nodes.
Let $e^*$ be the PDV of the node which has largest number of available packets initially, $e^* = \arg\max_{e_i} w_H(e_i)$.
According to Theorem~\ref{Thm:CCDE}, if the PDVs of nodes can generate any $(d,K)$-Basis such that $d\ge \mathcal{M}$, we do not have to check for any larger $d$.
Also, the $(d,K)$-Basis with largest $d$ that can be generated should alway be no larger than $w_H(e^*)-1$.
Therefore, we start from $d_{max} = \min \{\mathcal{M},w_H(e^*)-1\}$ instead of $K$.
	
	\begin{algorithm}[!tbp]
		\caption{Minimal Number of Required Transmissions and d-Basis}
		\label{alg_r}
		\begin{algorithmic}[1]
			\State \textbf{Input:} $E_{N\times K} = [e_1,\dots, e_N] ^\mathsf{T}$. 
			\State \textbf{Output:} $R^*$, $\mathbf{V}^*$
			\State \textbf{Initialization}: $d_{min} = 1$, $d_{max} = \min \{\mathcal{M},w_H(e^*)-1\}$.
			\State $(F,\mathbf{r},\mathbf{V}) = \textit{SdB}(E,d_{max})$
			\If {$F$ is \textbf{True}}
			\State $d^*= d_{max}$, $\mathbf{V}^*=\mathbf{V}$
			\Else
			\State $(F,\mathbf{r},\mathbf{V}) = \textit{SdB}(E,d_{min})$
			\If{$\lnot F$}
			\State $d^*= 0$, $\mathbf{V}^* = I_{K}$			
			\Else
			\While{$d_{max}-d_{min} > 1$}
			\State $d= \lfloor \frac{d_{min}+d_{max}}{2}\rfloor$
			\State $(F,\mathbf{r},\mathbf{V}) = \textit{SdB}(E,d)$
			\If{$F$}
			\State $d_{min}=d$, $\mathbf{V}^*=\mathbf{V}$
			\Else
			\State $d_{max}=d$
			\EndIf
			\EndWhile
			\State $d^*= d_{min}$
			\EndIf
			\EndIf
			\State $R^* = K- d*$
		\end{algorithmic}
	\end{algorithm}
	
	\subsection{Complexity}
	
	In Algorithm~\ref{alg_r}, binary search method is used to find the $(d^*,K)$-Basis that can be generated by the PDVs of nodes which has complexity bounded by $\log(K)$. 
	%For each possible $d$, we check whether the PDVs are able to generate the corresponding $d$-Basis or not. 
	For each specific $d$, Algorithm~\ref{alg_sdb} is used to search the existence of $(d,K)$-Basis.
	Let $M(d)$ denote the number of nodes that have at least $d+1$ packets.
	The first \textit{For} loop has at most $M(d)$ iterations.
	For the $i^{th}$ candidate PDV $e_i$, the size of set $\mathcal{B}(e_i,d)$ satisfies $|\mathcal{B}(e_i,d)|=w_H(e_i)-d$.
	Hence, the second \textit{For} loop has at most $w_H(e_i)-d$ iterations.
	The number of subsets of vectors in $\mathbf{Q}$ with size 1  and 2 are  $|\mathbf{Q}|$ and $|\mathbf{Q}| \choose 2$, respectively.
	For the $i^{th}$ checked node, $|\mathbf{Q}| \le i$, because basis vectors generated by the same PDV can always be merged to one vector and basis vectors generated by different PDV may still be merged.
	The number of possible merging iteration for each candidate basis vector is less than the size of $(d,K)$-Basis vector which is $K-d$.
	Then, the \textit{While} loop has at most $i+ {i \choose 2}(K-d)$ iterations for the $i^{th}$ PDV.
	Hence the complexity\footnote{Computing bitwise \textbf{AND} or \textbf{OR} of two $K$-dimensional binary vector has complexity of $K$ basic operations. In step 6, we compute bitwise \textbf{OR} between $b$ and each vector in $\mathbf{Q}$ and this results are also used in merging checking. Hence complexity of step 6 is not considered.} of Algorithm~\ref{alg_sdb} is bounded by $\sum_{i=1}^{M(d)} (i+ {i \choose 2})(w_H(e_i)-d)(K-d)K$.
	Since $M(d) \le N$ and $w_H(e_i) \le K$, 
	we have the overall complexity is bounded by $\mathcal{O}(N^3K^3\log (K))$, which is much lower than the complexity of existing algorithms proposed in~\cite{milosavljevic2016efficient} based on minimizing a submodular function $\mathcal{O}((N^6K^3+N^7)\log(K))$ and algorithm based on subgradient methods $\mathcal{O}((N^4\log(N)+N^4K^3)K^2\log(K))$.

	%for each of such nodes, we have to check whether there are any $d$-Basis vectors that can be generated by current nodes 
	%given all merged vectors in $Q$. 
	%For each PDV $e_i$, we select a $\mathcal{B}(e_i,d)$ and only check $w_H(e_i)-d$ possible $d$-Basis vectors in it. 
	%Each vector has to be checked whether they can be generated by any one of the vectors in $Q$. 
	%The number of vectors in $Q$ is always no larger than the number of nodes that have already been checked, 
	%since $d$-Basis vectors that generated by the same node always satisfy Inequality~(\ref{Ieq:Mq}) and they should always be merged together.
	%To decide whether we should merge some of the vectors with the new added vector requires 
	%comparing the new added vector with each subset of $Q$ with size no larger than 2 and iteratively treating merged vector as new added vector.
	%The maximum number of iterations for merging is less than the number of vectors of a $K$-dimensional $d$-Basis, which is $K-d$.

	\begin{example*}\label{EX:CCDE2}
		Apply our algorithms on Example~\ref{EX:CCDE}.
		Node 4  and node 1 initially have the smallest and the largest number of packets respectively, which means $\mathcal{M} = 4$ and $w_H(e^*) = 6$.
		Therefore we have $d_{max}= 4$.
		Algorithm~\ref{alg_r} will first check whether it is possible to generate a $(4,9)$-Basis from $\{e_1,e_2,e_3\}$ by $SdB(E,4)$.
		The PDV of the $4^{th}$ node, $e_4$, will not be considered as the candidate, since $w_H(e_4)=4$ and it can not generate any binary vector with $5$ ones. 
		In this example, $SdB(E,4)$ returns \textbf{True}. 
		The minimal number of required transmissions is $5$.
		For general cases, if $(d_{max},K)$-Basis cannot be generated, binary search method would be used to find the $d^*$.
		
		Now we investigate the detail of $Sdb(E,4)$.
		The first \textit{For} loop only runs for $\{e_1,e_2,e_3\}$.
		\item 
		\begin{itemize}
			\item 	For $e_1$, $\mathcal{B}(e_1,4) = \{b_{11},b_{12}\}$ where $b_{11} = [1,1,1,1,1,0,0,0,0]$ and $b_{12}=[1,1,1,1,0,1,0,0,0]$.
			\item 	For $e_2$, $\mathcal{B}(e_2,4) = \{b_{21},b_{22}\}$ where $b_{21} = [1,1,1,0,0,0,1,1,0]$ and $b_{22}=[1,1,1,0,0,0,1,0,1]$.
			\item 	For $e_3$, $\mathcal{B}(e_3,4) = \{b_{31},b_{32}\}$ where $b_{31} = [0,0,0,1,1,1,1,1,0]$ and $b_{32}=[0,0,0,1,1,1,1,1,1]$.
		\end{itemize}
		The second \textit{For} loop runs for each $b_{ij}\in \mathcal{B}(e_i,4)$.
		\begin{itemize}
			\item For $b_{11}$, since $\mathbf{Q} = \emptyset$, $b_{11}$ will be added into $\mathbf{V}$ as $v_1$ and $\mathbf{Q}$ as $q_1$ directly.
			\item For $b_{12}$, since it cannot be generated by $q_1$, $b_{12}$ will be added into $\mathbf{V}$ as $v_2$. 
			Now, $\mathbf{Q}$ is not empty and has $q_1$. We have to check whether $b_{12}$ should be merged with $q_1$ as one vector or not.
			Since $w_H(b_{12}\lor q_1) \le w_H(b_{12}) + w_H(q_1) - d$ satisfies Inequality~\eqref{Ieq:Mqa}. We should merge them and update as $q_1 = [1,1,1,1,1,1,0,0,0]$ \footnote{In fact, $b_{12}$ and $q_1$ can be merged without checking Condition~\eqref{Ieq:Mq}, since $q_1 =b_{11}$ and $b_{12}$ are generated by the same PDV, $e_1$.}.
			\item For $b_{21}$, since it cannot be generated by $q_1$, $b_{21}$ will be added into $\mathbf{V}$ as $v_3$.
			The merging possibility between $b_{21}$ and $q_1$ will be checked and it turns out that they should not be merged. Hence $b_{21}$ will be added into $\mathbf{Q}$ as $q_2$.
			\item For $b_{22}$, since it cannot be generated by $q_1$ or $q_2$, $b_{22}$ will be added into $\mathbf{V}$ as $v_4$.
			It can be verified that $b_{22}$ should be merged with $q_2$ but not with $q_1$. Hence $q_2$ is updated as $q_2 = [1,1,1,0,0,0,1,1,1]$.
			\item For $b_{31}$, since it cannot be generated by $q_1$ or $q_2$, $b_{31}$ will be added into  $\mathbf{V}$ as $v_5$.
			Now we have enough $(4,9)$-Basis vectors. The algorithm $SdB(E,4)$ will return \textbf{True} and corresponding $\mathbf{V}$ shown as  Equation~(\ref{eq:V}).
		\end{itemize}
	    Actually, if we check the merging possibility between $b_{31}$ and $\{q_1,q_2\}$, we will find that $b_{31}$ should not be merged with $q_1$ or $q_2$ individually, but should be merged with them together. And when we have the complete $(d,K)$-Basis, we can always merge all vectors in $\mathbf{Q}$ into one vector with $K$ ones.
	    Although $b_{32}$ is in $\mathcal{B}(e_3,4)$, it is not used, because we got enough basis vectors before its iteration.
		
	\end{example*}

	\section{Code Construction}\label{sec:codeconstruction}
	In previous sections, we presented how to compute the minimum number of required transmissions and the corresponding algorithms. 
	To completely solve the CDE problem, we still need to give the coding scheme which achieves universal recovery by using the minimum number of transmissions.
	In this section, we explain how to explicitly design the optimal coding scheme.
	
	After knowing the number of transmissions which should be made by each node, 
	designing the coding scheme can be formulated as a multicast network code construction problem. 
	Methods based on the mixed matrix completion algorithm~\cite{harvey2005deterministic} and the Jaggi {\it et al.} algorithm~\cite{jaggi2005polynomial} are presented in~\cite{milosavljevic2016efficient}.
	However, those methods have to take all packet distribution information into consideration and generate a coding scheme that may only works for this particular setting. 
	As Theorem~\ref{Thm:dt} pointed out, 
	it is possible to construct a coding scheme which enables universal recovery at all nodes with at least $K-R^*$ packets.
	Packet distribution information of nodes which do not send anything is not necessary for constructing the code and can be ignored.
	This class of codes is based on MDS codes which can be constructed efficiently by starting from Vandermonde matrices.  
	
	Consider an $R\times K$ Vandermonde matrix over a finite field $\mathbb{F}_q$, where $R=K-d$:
	\begin{equation}\label{eq:vm}
	\mathcal{V} =
	\begin{bmatrix}
	1 & 1 & 1  & \dots &1&1 \\
	\theta_1 & \theta_2 & \theta_3&\dots & \theta_{K-1}&\theta_{K} \\
	%\alpha_1^2 & \alpha_2^2 & \alpha_3^2&\dots & \alpha_{K-1}^2&\alpha_{K}^2 \\
	\vdots & \vdots & \vdots &\ddots & \vdots &\vdots \\
	\theta_1^{R-1} & \theta_2^{R-1} & \theta_3^{R-1} &\dots & \theta_{K-1}^{R-1} &\theta_{K}^{R-1}
	\end{bmatrix}
	\end{equation}
	For large enough $q$, there exists $\{\theta_1,\dots,\theta_K\}$ such that any $m$ ($m\le R$) columns of $\mathcal{V}$ are linearly independent. 
	Apparently $\mathcal{V}$ is the generator matrix of an MDS code. 
	However, the coefficient matrix $A$ cannot simply be set equal to $\mathcal{V}$, 
	since the number of non-zero entries of each row cannot be larger than the number of available packets at the node which generates this transmission.
	Nevertheless, by performing elementary transformations on $\mathcal{V},$ we can transform it into a coefficient matrix $A$
	with the property that each row has $K-R+1$ non-zero entries. 
	
	\begin{lemma}
		For any $R\times K$ $(R\le K)$ Vandermonde matrix $\mathcal{V}$, by performing elementary row operations on $\mathcal{V}$, it is possible to get a matrix $A$ with row vectors $\{\alpha_1,\dots,\alpha_R\}$ such that
		\begin{align}
			&w_H(\alpha_i) = K-R+1 &\forall i \in [K-R]\label{eq:vd1}\\
			&w_H(\alpha_{\mathbf{S}}) \ge |\mathbf{S}| + K-R & \emptyset \ne \mathbf{S} \subseteq \{\alpha_1,\dots,\alpha_R\}\label{eq:vd2}
		\end{align}
	\end{lemma}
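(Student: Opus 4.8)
The plan is to reduce $\mathcal{V}$ to systematic form $A = [I_R \mid P]$ by elementary row operations and then read off both claimed properties from the fact that the row space is an MDS code. First I would observe that since the nodes $\theta_1,\dots,\theta_K$ are distinct, the leftmost $R\times R$ block of $\mathcal{V}$ is itself a square Vandermonde matrix and hence invertible; left-multiplying $\mathcal{V}$ by its inverse realizes a sequence of elementary row operations and produces $A=[I_R\mid P]$ for some $R\times(K-R)$ matrix $P$. Because elementary row operations preserve the row space, $A$ and $\mathcal{V}$ generate the same code.

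Next I would use that this code is MDS. Any $R$ columns of $\mathcal{V}$, say those indexed by $J\subseteq[K]$ with $|J|=R$, form an $R\times R$ Vandermonde matrix on the distinct nodes $\{\theta_j:j\in J\}$ and are therefore linearly independent. Hence $\mathcal{V}$ (and thus $A$) generates an $[K,R]$ code in which every $R$ columns are independent, i.e. an MDS code, whose minimum distance equals $K-R+1$ by the Singleton bound.

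The crux is to show that every entry of $P$ is nonzero, and I would deduce this directly from the minimum distance rather than from a submatrix computation. Each row $\alpha_i$ of $A=[I_R\mid P]$ is a nonzero codeword, so $w_H(\alpha_i)\ge K-R+1$; on the other hand $\alpha_i$ has a single nonzero in the identity block and at most $K-R$ nonzeros in the $P$ block, so $w_H(\alpha_i)\le K-R+1$. Equality forces $w_H(\alpha_i)=K-R+1$ for every row and, in particular, that every entry of $P$ is nonzero, which is exactly~\eqref{eq:vd1}. This minimum-distance step is the only genuinely nontrivial part of the argument; everything else is counting.

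Finally, for~\eqref{eq:vd2} I would fix any $\emptyset\ne\mathbf{S}\subseteq\{\alpha_1,\dots,\alpha_R\}$ and count the support of $\alpha_{\mathbf{S}}$ coordinate block by block. The identity block contributes the $|\mathbf{S}|$ distinct pivot coordinates of the rows in $\mathbf{S}$, while, since every entry of $P$ is nonzero, even a single row of $\mathbf{S}$ already covers all $K-R$ coordinates of the $P$ block; hence $w_H(\alpha_{\mathbf{S}})=|\mathbf{S}|+(K-R)$, meeting~\eqref{eq:vd2} with equality. I would also note that this realizes precisely the systematic-form construction announced before the lemma and that it degenerates correctly in the boundary case $R=K$, where $A=I_K$, each row has weight $1=K-R+1$, and $w_H(\alpha_{\mathbf{S}})=|\mathbf{S}|$.
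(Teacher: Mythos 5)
Your proposal is correct, and it follows the same overall route as the paper: split $\mathcal{V}=[\mathcal{V}_l \mid \mathcal{V}_r]$, note that $\mathcal{V}_l$ is invertible because any $R$ columns of a Vandermonde matrix with distinct nodes are independent, and left-multiply by $\mathcal{V}_l^{-1}$ to reach the systematic form $[I_R \mid P]$. The difference lies in how the key fact --- that every entry of $P$ is nonzero --- is justified. The paper argues only that $D\mathcal{V}_r$ has full column rank $K-R$ and then asserts that ``each row could have $K-R$ non-zero entries,'' which, as stated, does not follow: full column rank of $P$ is compatible with zero entries in individual rows. Your minimum-distance argument closes this gap cleanly: each row of $[I_R\mid P]$ is a nonzero codeword of an $[K,R]$ MDS code, hence has weight at least $K-R+1$, while its support is trivially bounded by $1+(K-R)$; equality forces every entry of $P$ in that row to be nonzero, which gives \eqref{eq:vd1}, and then \eqref{eq:vd2} follows by your block-by-block support count (with equality, in fact). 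So your write-up is not a different construction but a more rigorous justification of the paper's central step, and it is the argument the paper should be read as intending.
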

	\begin{proof}
		Suppose we have a $R\times K$ Vandermonde matrix depicted as Eqn.~\eqref{eq:vm}.
		We use $\mathcal{V}_l$ and $\mathcal{V}_r$ to denote the first $R$ columns submatrix and the last $K-R$ columns submatrix of $\mathcal{V}$, respectively.
		Then, $\mathcal{V} = \begin{bmatrix}
		\mathcal{V}_l & \mathcal{V}_r
		\end{bmatrix}$.
		Since any $R$ columns of $\mathcal{V}$ are linearly independent, $\mathcal{V}_l$ is always a full rank matrix and invertible.
		Performing elementary row operations on $\mathcal{V}$ is equivalent to left multiplying a $R\times R$ matrix to $\mathcal{V}$.
		Let $D$ denote a $R \times R$ matrix and $D = \mathcal{V}_l^{-1}$.
		\begin{align}
			D \mathcal{V} = D
			\begin{bmatrix}
				\mathcal{V}_l & \mathcal{V}_r
			\end{bmatrix}
			=
			\begin{bmatrix}
			I_{R} & D\mathcal{V}_r
			\end{bmatrix} 
		\end{align}
		where $I_{R}$ is the $R\times R$ identity matrix.
		Since $D$ is invertible and is a full rank matrix, we have 
		\begin{align}
			\text{rank} (D\mathcal{V}_r) = \text{rank} (\mathcal{V}_r) = K-R
		\end{align}
		Hence, $D\mathcal{V}_r$ is a column full rank matrix and each row could have $K-R$ non-zero entries.
		Let $A = D\mathcal{V}$, then we have row vectors of $A$ satisfy
		\begin{align}
			&w_H(\alpha_i) = 1+K-R &\forall i \in [K-R]\\
			&w_H(\alpha_{\mathbf{S}}) \ge |\mathbf{S}| + K-R & \emptyset \ne \mathbf{S} \subseteq \{\alpha_1,\dots,\alpha_R\}
		\end{align}
	\end{proof}

	The matrix $D\mathcal{V}$ with $D = \mathcal{V}_l^{-1}$ satisfies both conditions of the balanced $(d,K)$-Basis with $d = K-R$. Hence it can be a coefficient matrix for the coding scheme based on the $(d,K)$-Basis.
	Normally, the places of non-zero entries of the $(d,K)$-Basis generated by the PDVs of nodes are different from matrix $D\mathcal{V}$.
	However, since any row vector with $d+1$ ones is in the space spanned by row vectors of $D\mathcal{V}$, further elementary row operations can be performed on $D\mathcal{V}$ to get the coefficient matrix with non-zeros entries at the same places as the $(d,K)$-Basis generated by the PDVs of nodes.

	%it is possible to arbitrarily set $d$ entries of each row to zero by subtracting proper linear combinations of other rows. 
	%Also if we consider $\mathcal{V}=[\mathcal{V}_l \ \mathcal{V}_r]$, where $\mathcal{V}_l$ is the first $d$ columns of $\mathcal{V}$ and $\mathcal{V}_r$ is the last $R$ columns of $\mathcal{V}$.
	%The matrix $B$ whose every square submatrix is nonsingular is $\mathcal{V}_l\mathcal{V}_r^{-1}$. 
	
	\begin{example*}
		Now we show how to use a Vandermonde matrix to construct the linear coding scheme for Example~\ref{EX:CCDE}.
		We know that $R^*=5$ and there exists a $(4,9)$-Basis $V$.
		Consider the Vandermonde matrix $\mathcal{V}$ over the finite field $GF(2^4)$ with primitive polynomial $\alpha^4+\alpha+1$.
		\begin{equation}
		\mathcal{V} =
		\begin{bmatrix}
		1 & 1 & 1 & 1 & 1 &1 &1 &1  &1 \\
		1 & 2 & 3 & 4 &5&6&7&8 & 9 \\
		1 & 4 & 5 &3&2&7&6&12 & 13 \\
		1 & 8 & 15 &12&10&1&1&10 & 15 \\
		1 & 3& 2  & 5 & 4 &6&7&15&14 
		\end{bmatrix}
		\end{equation}
		By elementary row transformations and Gaussian eliminations, we can get the coefficient matrix $A$ shown as~\eqref{eq:A}.
		Given any four packets, the other packets can be recovered from transmissions based on $A$.
		%	The corresponding matrix $B$ whose every square submatrix is nonsingular is shown as following
		%	\begin{equation}
		%	B=
		%	\begin{bmatrix}
		%	5 & 4 &4&1 \\
		%	15 & 11 &14&14 \\
		%	6 & 3 &6&14 \\
		%	9 & 12 &8&15 \\
		%	4 & 1 &5&15 
		%	\end{bmatrix}
		%	\end{equation}
		Suppose there is another node with PDV $e_5 = [1,0,1,0,1,0,0,1,0]$.
		It can also recover all its missing packets by receiving transmissions based on $A$. 
		The detail of its packet distribution information is not used for either computing the minimal number of required transmissions or designing the coding scheme.
		Although in this example the coefficient matrix of our method looks more complicated than that of methods based on Jaggi {\it et al.}'s algorithm, in general cases, the complexity of constructing the coefficient matrix via our method is much lower.
	\end{example*}

%\section{Extended Applications}
%We have shown the method based on $d$-Basis construction to solve the basic cooperative data exchange problem for the fully connected networks in previous sections.
%In this section, we will apply this methods on various extended topics. 

\section{Weighted Cost of Transmissions}\label{sec:weightcost}

In the basic CDE problem, every transmission incurs the same cost, irrespective of the transmitting node.
However, in more general cases, it is intuitive to consider that the (transmit) cost for different nodes are different.
Let $\mathbf{w} = [w_1,\dots,w_N]^\mathsf{T}$ denote the weight vector where each $w_i$ is the cost for node $i$ to make one transmission.
For any coding scheme with rate vector $\mathbf{r} = [r_1,\dots,r_N]^\mathsf{T}$, the weighted cost is denoted by $\mathcal{C}(\mathbf{r}) = \mathbf{w}^\mathsf{T}\cdot \mathbf{r}=\sum_{i=1}^{N} w_ir_i$.
Instead of minimizing the total number of transmissions (sum rate), the goal of the CDE problem with weighted cost is to achieve universal recovery by a coding scheme with rate vector which has  the minimum weighted cost.
Note that once the optimal rate vector is found, a corresponding optimal transmission scheme can be developed exactly along the lines of the unweighted case discussed earlier.
 
The minimum weighted cost for the CDE problem with weighted cost can be computed as
\begin{align}
\mathcal{C}^* = \min_{\mathbf{r} \in \Omega} \mathcal{C}(\mathbf{r}) =  \min_{\mathbf{r} \in \Omega} \sum_{i=1}^{N} w_ir_i .
\end{align}
Although the optimization should be over all vectors in $\Omega$, we can actually decompose this optimization problem into two sub-optimization problems.
We first find the optimal rate vector under the conditional that the sum rate is fixed.
Then, the further optimization should only be over the optimal rate vectors for different fixed sum rates. 
\begin{definition}
	Let $\mathcal{K}(R)$ denote the minimum weighted cost of all rate vectors that can achieve universal recovery and has sum-rate equal to $R$.
	\begin{align}
	\mathcal{K}(R) = \min_{\mathbf{r}\in\Omega, \mathcal{S}(\mathbf{r})=R}  \mathcal{C}(\mathbf{r}) =  \min_{\mathbf{r}\in\Omega, \mathcal{S}(\mathbf{r})=R}\sum_{i=1}^{N} w_ir_i\label{opt:fixR}					
	\end{align}
\end{definition}
Let $R_{min}$ denote the minimum sum rate such that rate vector can achieve universal recovery\footnote{In previous sections, for basic CDE problem, we use $R^*$ to denote the minimum sum rate such that universal recovery can be achieved. However, in CDE problem with weighted cost, the optimal rate vector may not have minimum sum rate.}. 
Only rate vectors with sum rate between $R_{min}$ and $K$ should be considered.
The minimum weighted cost can also be computed as
\begin{align}
\mathcal{C}^* &= \min_{R \in \{R_{min},\dots,K\}} \mathcal{K}(R)\nonumber\\
&= \min_{R \in \{R_{min},\dots,K\}} \min_{\mathbf{r}\in\Omega, \mathcal{S}(\mathbf{r})=R} \sum_{i=1}^{N} w_ir_i\label{opt:gopt}		
\end{align}
%We propose slightly modified algorithms of Algorithm~\ref{alg_sdb} and Algorithm~\ref{alg_r} to solve the two sub-optimization problems.

\begin{example}
	\label{EX:CCDEweight}
	Consider a CDE problem for the fully connected network with 5 nodes and 9 packets with the goal of minimizing the weighted cost of transmissions. The packet distribution matrix (PDM) is as following:
	\begin{equation*}
	E =\begin{bmatrix}
	0 & 1 & 0 & 1 & 0 & 0 & 1 & 1 & 1\\
	1 & 0 & 0 & 0 & 1 & 1 & 0 & 1 & 1\\
	0 & 1 & 1 & 0 & 0 & 1 & 0 & 1 & 1\\ 
	1 & 0 & 1 & 0 & 1 & 1 & 0 & 1 & 0\\
	1 & 1 & 0 & 1 & 1 & 0 & 1 & 0 & 1
	\end{bmatrix}
	\end{equation*}
	The weights of nodes are as following:
	\begin{center}
		\begin{tabular}{ |c|c|c|c|c|c|c|c|c| } 
			\hline
			Node(i) &  1 &  2   & 3  &  4  & 5 \\
			\hline
			\hline
			$w_i$ &  2 &  3   & 6  &  8  & 10\\
			\hline
		\end{tabular}
	\end{center}
	
	%Here we assume that the weights of the nodes are in a non-decreasing order. 
	%If this is not the case, we rearrange the nodes of the PDM such that the weights are in a non-decreasing order. 
	%Without loss of generality, we can assume that the weights of nodes are in a non-decreasing order.
	By using the methods proposed in~\cite{milosavljevic2016efficient,5743607}, 
	we can find that the optimal rate vector is $\mathbf{r}^* = [3,3,1,0,0]^\mathsf{T}$ and the minimum weighted cost is $21$.
	However, for the basic CDE problem (unweighted case) with the same packet distribution matrix, the optimal rate vector is $\mathbf{r} = [1,1,1,1,1]^\mathsf{T}$.
\end{example}

\begin{remark}
	By using algorithm in~\cite{courtade2014coded,milosavljevic2016efficient,Li1706:Cooperative}, we can show that the minimum sum rate $R_{min}$ for Example~\ref{EX:CCDEweight} is $5$. But for CDE problem with weighted cost, the optimal rate vector has sum rate $7$, which is larger than the minimum required sum rate. 
	%That is why we need to have the second optimization over $\mathcal{K}(R)$, where $R \in \{R_{min},\dots,K\}$.
	Thus, only finding the rate vector with sum rate $R_{min}$ is not enough, we have to optimize $\mathcal{K}(R)$ over all $R \in \{R_{min},\dots,K\}$.
	However, we show that it is not necessary to compute $\mathcal{K}(R)$ for all $R \in \{R_{min},\dots,K\}$.
	By exploiting the convexity of the function $\mathcal{K}(R)$, we can search the optimal $R$ and rate vector by the binary searching method.
\end{remark}

We propose an efficient deterministic algorithm based on $(d,K)$-Basis to solve the optimization problem~\eqref{opt:fixR}. 
For a given fixed number of transmission $R$, Algorithm~\ref{alg_sdbw} searches the existence of corresponding $(d,K)$-Basis where $d =K - R$.

\begin{algorithm}[!htbp]
	\caption{Search $(d,K)$-Basis (SdB)}
	\label{alg_sdbw}
	\begin{algorithmic}[1]
		\State \textbf{Input:} $E = [e_1,\dots, e_N] ^\mathsf{T}$ ($w_i\le w_j$ $\forall i\le j$) and $d$. 
		\State \textbf{Output:} \textbf{True}, $\mathbf{r}$, $\mathbf{V}$ or \textbf{False}.
		\State \textbf{Initialization}: $\mathbf{Q} = \emptyset$, $\mathbf{V} = \emptyset$, $\mathbf{r} = [r_1,\dots,r_N]^\mathsf{T}=\mathbf{0}_{1\times N}$.
		%\For {$e_i: w_H(e_i) > d$}
		\For{$i: i\in \{1,\dots,N\}$}
		\For{$b\in \mathcal{B}(e_i,d)$}		
		\If{$b\not \in \mathcal{G}(\mathbf{Q},d)$ } 
		\State $r_i = r_i+1$
		\State $\mathbf{V} = \mathbf{V} \cup \{b\}$
		
		\While{$ \exists \mathbf{S}\subseteq \mathbf{Q}, |\mathbf{S}|\le2: (\ref{Ieq:Mq})\textit{ holds}$}
		\begin{equation}
		w_H(q_{\mathbf{S}}\lor b) \le \sum_{q_i\in \mathbf{S}} w_H(q_i) +w_H(b) - |\mathbf{S}|d\label{Ieq:Mq}
		\end{equation}
		\State $b = b\lor q_{\mathbf{S}}$, $\mathbf{Q} = \mathbf{Q}\setminus \mathbf{S}$ 
		\EndWhile
		\State  $\mathbf{Q} = \mathbf{Q} \cup \{b\}$	
		\EndIf 	
		\If{$|\mathbf{V}| = K-d$}
		\State return \textbf{True}, $\mathbf{r}$ and $\mathbf{V}$
		\EndIf
		\EndFor
		\EndFor
		\State return \textbf{False}
	\end{algorithmic}
\end{algorithm}

\begin{theorem}\label{thm:wtopt}
	For any $R \in \{R_{min},\dots,K\}$ and $d= K-R$, let $\mathbf{r} = [r_1,\dots,r_N]^\mathsf{T}$ be the output rate vector of Algorithm~\ref{alg_sdbw} with input $E$ and $d$, then $\mathcal{K}(R) = \sum_{i=1}^{N}w_ir_i$.
\end{theorem}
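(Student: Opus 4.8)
The plan is to prove the two inequalities $\mathcal{K}(R)\le\sum_i w_i r_i$ and $\mathcal{K}(R)\ge\sum_i w_i r_i$ separately, the second being the substantive direction. It rests on a \emph{prefix-sum dominance} property of the greedy output combined with the input ordering $w_1\le\cdots\le w_N$. First I would settle \emph{feasibility}. Since $R\ge R_{min}$ and $R_{min}=K-\min\{\mathcal{M},d^*\}$, we have $d=K-R\le\min\{\mathcal{M},d^*\}\le d^*$, so by Lemma~\ref{LM:d} the PDVs can generate a $(d,K)$-Basis; hence Algorithm~\ref{alg_sdbw} terminates with $|\mathbf{V}|=K-d=R$ basis vectors and an output rate vector $\mathbf{r}$ with $\sum_i r_i=R$, where $r_i$ counts the basis vectors contributed by node $i$. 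By Theorem~\ref{Thm:dt} this $(d,K)$-Basis yields a universal-recovery scheme whose rate vector is exactly $\mathbf{r}$, so $\mathbf{r}\in\Omega$ and therefore $\mathcal{K}(R)\le\sum_i w_i r_i$.

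For the reverse inequality I would introduce the set function $\rho(\mathbf{I})$ equal to the maximum number of vectors in any partial $(d,K)$-Basis (a set of vectors satisfying \eqref{Eq:dB2}) generatable by $\{e_i:i\in\mathbf{I}\}$. Two facts then drive the argument. \textbf{(a)} The greedy attains $\rho$ on every prefix: after Algorithm~\ref{alg_sdbw} has processed nodes $\{1,\dots,k\}$, the stored set $\mathbf{V}$ is a \emph{maximal} partial $(d,K)$-Basis generatable by those nodes, which is precisely what the test ``$b\notin\mathcal{G}(\mathbf{Q},d)$'' and the merge step \eqref{Ieq:Mq} enforce (the preceding lemmas and corollary guarantee that testing the vectors $\mathcal{B}(e_i,d)$ against the merged vectors in $\mathbf{Q}$ loses no generality), so that $\sum_{i=1}^k r_i=\rho(\{1,\dots,k\})$. \textbf{(b)} Every feasible competitor is dominated: if $\mathbf{r}'\in\Omega$ achieves universal recovery with $\sum_i r_i'=R$, then by Theorem~\ref{Thm:2} the PDVs generate a $(d,K)$-Basis in which each basis vector is produced by the same node that made the corresponding transmission, so nodes $\{1,\dots,k\}$ produce $\sum_{i=1}^k r_i'$ of these vectors; this subcollection is itself a partial $(d,K)$-Basis generatable by $\{1,\dots,k\}$, whence $\sum_{i=1}^k r_i'\le\rho(\{1,\dots,k\})=\sum_{i=1}^k r_i$ for every $k$.

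With the prefix-sum dominance $\sum_{i=1}^k r_i\ge\sum_{i=1}^k r_i'$ for all $k$ and the boundary equality $\sum_{i=1}^N r_i=\sum_{i=1}^N r_i'=R$, summation by parts completes the proof: writing $D_k=\sum_{i\le k}(r_i-r_i')\ge0$ with $D_0=D_N=0$,
\begin{align}
\sum_{i=1}^N w_i(r_i-r_i')=\sum_{k=1}^{N-1}(w_k-w_{k+1})D_k\le 0,
\end{align}
since the input ordering forces $w_k-w_{k+1}\le0$. Hence $\sum_i w_i r_i\le\sum_i w_i r_i'$ for every feasible $\mathbf{r}'$ of sum rate $R$, giving $\mathcal{K}(R)\ge\sum_i w_i r_i$ and, together with feasibility, the claimed equality.

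I expect the main obstacle to be fact \textbf{(a)}, i.e. showing that the greedy genuinely reaches the maximum $\rho(\{1,\dots,k\})$ on each prefix rather than merely some maximal value. The clean way to secure this is to observe that the partial $(d,K)$-Bases form the independent sets of a matroid: independence of a set of weight-$(d+1)$ vectors is the count condition $|\mathbf{S}|\le w_H(v_{\mathbf{S}})-d$ for all nonempty $\mathbf{S}$, and since $\mathbf{S}\mapsto w_H(v_{\mathbf{S}})$ is monotone and submodular with value $d+1$ on singletons, the Dilworth-truncation construction yields a matroid. Its exchange property makes every maximal prefix-independent set maximum, so the order in which Algorithm~\ref{alg_sdbw} visits a node's candidate vectors is immaterial and $\sum_{i=1}^k r_i=\rho(\{1,\dots,k\})$ holds. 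Verifying that the merge-based test \eqref{Ieq:Mq} correctly decides membership in this matroid is the remaining technical point, and it is exactly what the lemmas preceding Algorithm~\ref{alg_sdb} are designed to supply.
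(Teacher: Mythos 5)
Your proposal is correct and follows the same skeleton as the paper's proof: the paper's Lemma~\ref{lm:w2} is exactly your prefix-sum dominance (stated in suffix form: the algorithm guarantees $\sum_{j\ge i} r_j^*$ basis vectors that only nodes $\{i,\dots,N\}$ can generate, hence $\sum_{j\ge i} r_j' \ge \sum_{j\ge i} r_j^*$ for any feasible competitor), and the paper's Lemma~\ref{lm:w3} closes the argument by explicitly pairing each unit decrement at a low-index node with a unit increment at a higher-index node and summing $w_j - w_i \ge 0$ over the pairs, which is precisely your Abel summation written out by hand. Where you genuinely diverge is in how the dominance step is justified: the paper simply \emph{asserts} that Algorithm~\ref{alg_sdbw} "guarantees" the requisite count of basis vectors not generatable by any prefix of nodes, i.e., it takes your fact (a) for granted as a property of the greedy, whereas you correctly isolate this as the crux and supply a matroid argument (independence defined by the count condition $|\mathbf{S}|\le w_H(v_{\mathbf{S}})-d$ induced by the monotone submodular coverage function, so every maximal prefix-independent set is maximum). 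This buys a cleaner and more robust justification of the step the paper leaves informal; the residual technical point you flag --- that restricting candidates to $\mathcal{B}(e_i,d)$ and testing against merged vectors in $\mathbf{Q}$ correctly decides independence, and that unbalanced basis vectors arising from Theorem~\ref{Thm:2} can be accounted for in the same rank function --- is present at the same level of informality in the paper itself (its Corollary~1 and the merging lemmas), so your proof is not weaker than the published one on that front.
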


The details of the proof of Theorem~\ref{thm:wtopt} are given in Appendix~\ref{Ap:proof1}, but for a brief outline, we may observe that for any other rate vector which has the same sum rate as the rate vector $\mathbf{r}$ output by Algorithm~\ref{alg_sdbw}, we must have either
(1) if it can achieve universal recovery, it has equal or larger weighted cost than $\mathbf{r}$; or
(2) it cannot achieve universal recovery, hence it should not be considered.

\begin{remark}
	In words, Theorem~\ref{thm:wtopt} says that the output rate vector of Algorithm~\ref{alg_sdbw} is the optimal rate vector  which  has the minimum weighted cost among all the rate vectors which have sum rate $R$ and can achieve universal recovery.
\end{remark}

Comparing to Algorithm~\ref{alg_sdb} which checks the existence of a balanced $(d,K)$-Basis for the basic CDE problem and outputs the corresponding $(d,K)$-Basis vectors if they exist,
Algorithm~\ref{alg_sdbw} requires that the input PDVs be ordered according to their weights.
The nodes with smaller weights have smaller indices.
The node with the smallest weight would be selected to generate as many $(d,K)$-Basis vectors as it can. 
Then, the nodes with larger weights would be selected to generate $(d,K)$-Basis vectors that can not be generated by previous nodes.
We show that by ordering the input PDVs in ascending order of their weights, Algorithm~\ref{alg_sdbw} can find the optimal rate vector and corresponding $(d,K)$-Basis vectors which can achieve universal recovery by using $K-d$ transmissions and has minimum overall weighted cost. 
%Since the algorithm just requires the ordering of PDVs according to their weights, the optimal rate vector would remain the same for a fixed number of transmissions irrespective of the  actual individual weights of the PDVs. 
The ordering of the PDVs according to their weights can be done before the start of Algorithm~\ref{alg_sdbw} and only requires complexity $\mathcal{O}(\log(N))$. As compared to the complexity of searching the existence of a $(d,K)$-Basis, which is $\mathcal{O}(N^3K^3)$, the complexity of pre-ordering nodes can be ignored.

Now we have a method to get the optimal solution to the sub-optimization problem~\eqref{opt:fixR}.
In order to get the globally optimal solution to optimization problem~\eqref{opt:gopt}, it is sufficient to only consider the rate vectors that are output by Algorithm~\ref{alg_sdbw} with different values of input parameter $d$ ($d= K-R$). 
However, it is not necessary to run Algorithm~\ref{alg_sdbw} with all possible $R \in \{R_{min},\dots,K\}$, by leveraging convexity of the function $\mathcal{K}(R)$ which is stated by the following Theorem and hence the optimal weighted cost and rate vector can be found by a binary search style method.

\begin{theorem}\label{thm:convexity}
	For $ R_{min} \le R \le K$, the function defined by~\eqref{opt:fixR}:  
	$\mathcal{K}(R) = \min_{\mathbf{r}\in\Omega, \mathcal{S}(\mathbf{r})=R}  \sum_{i=1}^{N} w_ir_i$ is convex.
\end{theorem}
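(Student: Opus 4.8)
The plan is to prove the discrete second-difference inequality
\begin{equation}
2\,\mathcal{K}(R) \le \mathcal{K}(R-1) + \mathcal{K}(R+1)
\end{equation}
for every integer $R$ with $R_{min}+1 \le R \le K-1$, which for a function on consecutive integers is equivalent to convexity of $\mathcal{K}$ on $[R_{min},K]$. Let $\mathbf{a}$ and $\mathbf{b}$ be integral rate vectors attaining $\mathcal{K}(R-1)$ and $\mathcal{K}(R+1)$ in~\eqref{opt:fixR}, so $\mathbf{a},\mathbf{b}\in\Omega$ with $\mathcal{S}(\mathbf{a})=R-1$ and $\mathcal{S}(\mathbf{b})=R+1$. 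The heart of the argument is a \emph{splitting claim}: there exist integral $\mathbf{a}',\mathbf{b}'\in\Omega$ with $\mathcal{S}(\mathbf{a}')=\mathcal{S}(\mathbf{b}')=R$ and $\mathbf{a}'+\mathbf{b}'=\mathbf{a}+\mathbf{b}$. Granting this, both $\mathbf{a}'$ and $\mathbf{b}'$ are feasible rate vectors of sum-rate exactly $R$, so each has weighted cost at least $\mathcal{K}(R)$; adding these two bounds and using $\mathbf{w}^{\mathsf{T}}\mathbf{a}'+\mathbf{w}^{\mathsf{T}}\mathbf{b}'=\mathbf{w}^{\mathsf{T}}\mathbf{a}+\mathbf{w}^{\mathsf{T}}\mathbf{b}=\mathcal{K}(R-1)+\mathcal{K}(R+1)$ yields the desired inequality at once.

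To establish the splitting claim I would exploit the submodular structure hidden in the Slepian--Wolf constraints~\eqref{eq:sw-constr}. Writing each constraint in terms of $\mathbf{J}=\mathbf{N}\setminus\mathbf{I}$, we have $\Omega=\{\mathbf{r}\ge 0 : r(\mathbf{J})\ge h(\mathbf{J})\ \forall\, \emptyset\ne\mathbf{J}\subsetneq\mathbf{N}\}$, where $r(\mathbf{J})=\sum_{i\in\mathbf{J}}r_i$ and $h(\mathbf{J})=K-|\mathbf{X}_{\mathbf{N}\setminus\mathbf{J}}|$. Since $\mathbf{S}\mapsto|\mathbf{X}_{\mathbf{S}}|$ is a coverage (hence submodular) function and precomposing with complementation preserves submodularity, the map $\mathbf{J}\mapsto|\mathbf{X}_{\mathbf{N}\setminus\mathbf{J}}|$ is submodular; therefore $h$ is \emph{supermodular} and $\Omega$ is a contra-polymatroid. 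Setting $\mathbf{c}=\mathbf{a}+\mathbf{b}$, a vector $\mathbf{x}$ yields a valid split (i.e.\ $\mathbf{x}\in\Omega$ and $\mathbf{c}-\mathbf{x}\in\Omega$) precisely when $\mathbf{0}\le\mathbf{x}\le\mathbf{c}$ and
\begin{equation}
h(\mathbf{J}) \le x(\mathbf{J}) \le c(\mathbf{J})-h(\mathbf{J}), \qquad \forall\, \emptyset\ne\mathbf{J}\subsetneq\mathbf{N}.
\end{equation}
The lower bound $h$ is supermodular while the upper bound $c(\mathbf{J})-h(\mathbf{J})$ is modular minus supermodular, hence submodular, so these two-sided bounds describe a generalized polymatroid, which is an integral polyhedron. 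The midpoint $\mathbf{m}=\tfrac{1}{2}(\mathbf{a}+\mathbf{b})$ obeys all of these bounds and has $\mathcal{S}(\mathbf{m})=R$, so the slice of this g-polymatroid by the integer hyperplane $\mathcal{S}(\mathbf{x})=R$ is nonempty; by integrality it contains a lattice point $\mathbf{a}'$, and $\mathbf{b}'=\mathbf{c}-\mathbf{a}'$ completes the split.

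The main obstacle is precisely this splitting claim, and the subtlety is that one cannot in general obtain $\mathbf{a}',\mathbf{b}'$ by the naive move of transferring a single unit of rate along one coordinate from $\mathbf{b}$ to $\mathbf{a}$: a feasible $\mathbf{b}$ with $\mathcal{S}(\mathbf{b})=R+1>R_{min}$ may be \emph{blocked}, meaning every coordinate lies in some tight constraint set, so that $\mathbf{b}-\mathbf{e}_j\notin\Omega$ for all $j$ (this can happen when two overlapping sets with $r(\mathbf{J}_1),r(\mathbf{J}_2)$ tight jointly cover the support of $\mathbf{b}$). What rescues the argument is that a genuine redistribution of $\mathbf{c}$ nonetheless exists, and this is exactly what the g-polymatroid integrality guarantees; the lattice property of the tight sets of $\mathbf{b}$, a direct consequence of the supermodularity of $h$, is the structural fact that drives it. As an essentially equivalent fallback I would keep in reserve the observation that the optimal value of the parametric linear program~\eqref{opt:fixR}, with the right-hand side of the sum-rate constraint treated as a real parameter, is convex by standard linear-programming duality, and then invoke integrality of the contra-polymatroid to conclude that this convex envelope coincides with $\mathcal{K}(R)$ at every integer $R$.
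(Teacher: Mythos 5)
Your reduction of convexity to the splitting claim is clean and correct, and it is a genuinely different route from the paper, which never touches the polyhedral structure of $\Omega$: the paper instead analyzes the greedy Algorithm~3 directly, proving (Lemmas~\ref{lm:ws} and~\ref{lm:ws4}) how its output rate vector evolves as the sum rate increases by one, and then bounds the second difference $\mathcal{F}(l+1)-\mathcal{F}(l)$ by pairing nodes whose rates go up with nodes whose rates go down. However, your justification of the splitting claim has a genuine gap. The assertion that a supermodular lower bound $h$ together with a submodular upper bound $c-h$ ``describe a generalized polymatroid'' is false as an inference: a pair $(p,b)$ defines a g-polymatroid only if it is \emph{paramodular}, which additionally requires the cross-inequality $b(A)-p(B)\ge b(A\setminus B)-p(B\setminus A)$, here equivalent to $c(A\cap B)\ge [h(A)-h(A\setminus B)]+[h(B)-h(B\setminus A)]$. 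This can fail for optimizers $\mathbf{a},\mathbf{b}$: take $N=3$ with two packets held by each pair $\{1,2\},\{1,3\},\{2,3\}$ (so $h(\{i\})=0$, $h(\{i,j\})=2$, $K=6$) and $w_1$ very large; then $\mathbf{a}=(1,1,1)$ is optimal at sum rate $3$ and $\mathbf{b}=(0,3,2)$ at sum rate $5$, giving $c_1=1<4=h(\{1,2\})+h(\{1,3\})$ for $A=\{1,2\}$, $B=\{1,3\}$. So the polyhedron of valid splits is not a g-polymatroid, and its integrality does not follow from the argument you give.

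A second, related problem is the appeal to ``integrality of the contra-polymatroid'': $\Omega$ is \emph{not} a contra-polymatroid in the textbook sense, because the Slepian--Wolf constraints~\eqref{eq:sw-constr} omit the ground-set constraint ($\mathbf{I}=\emptyset$ would force $r(\mathbf{N})\ge K$), and $\Omega$ genuinely has fractional vertices (e.g.\ $(1/2,1/2,1/2)$ in the three-node example with one packet per pair). What is actually true, and what both your main route and your LP-duality fallback need, is integrality of each fixed-sum-rate slice $\Omega\cap\{x(\mathbf{N})=R\}$ for integer $R\ge R_{min}$; this holds (it is essentially the Dilworth-truncation/base-polyhedron integrality underlying the cited algorithms of Courtade--Wesel and Milosavljevic et al.), and combined with the g-polymatroid intersection theorem it would rescue your splitting claim, but it is a nontrivial fact that must be invoked or proved rather than asserted. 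Note that the paper explicitly positions its contribution as extending the known convexity of the \emph{relaxed} (real-valued) problem to integral rate vectors, so the integrality of the slices is precisely the point that cannot be waved through. If you supply that ingredient, your argument becomes a valid and arguably more conceptual alternative to the paper's algorithm-specific pairing proof.
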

The proof is given in Appendix~\ref{Ap:proof2}.
To prove Theorem~\ref{thm:convexity}, it is sufficient to only consider coding schemes with rate vectors output by Algorithm~\ref{alg_sdbw}, since they are the conditionally optimal solution for fixed sum rate $R$.
In particular, we exploit some properties of rate vector output by Algorithm~\ref{alg_sdbw} to show that the second order difference of $\mathcal{K}(R)$ is non-negative, i.e. $\mathcal{K}(R+2)+\mathcal{K}(R)-2\mathcal{K}(R+1) \ge 0$. By induction, we prove that $\mathcal{K}(R)$ is a convex function of $R$. 

\begin{remark}
	In~\cite{milosavljevic2016efficient}, it has been proved that the function $\mathcal{K}(R)$ defined in~\eqref{opt:fixR} is convex for $R_{min} \le R\le K$ for a relaxed condition where each entry of $\mathbf{r}= [r_1,\dots,r_N]^\mathsf{T}$ can be non-integer rate vector.
	However, the rate vector should always be integer for the cooperative data exchange problem.
	The improvement of our theorem is we prove that for integer rate vectors, the function $\mathcal{K}(R)$ defined in~\eqref{opt:fixR} is still convex for $R_{min} \le R\le K$.
\end{remark}

Since the function $\mathcal{K}(R)$ is a convex function, it is not necessary to search all possible $R$ to get the optimal solution to optimization problem~\eqref{opt:gopt}.
We propose Algorithm~\ref{alg_bs} to compute the minimum weighted cost by using a binary searching style method.
\begin{algorithm}[!htbp]
	\caption{Finding $\mathbf{r}^*$ and $\mathcal{C}^*$ using Binary Search Algorithm}
	\label{alg_bs}
	\begin{algorithmic}[1]
		\State \textbf{Input:} $E = [e_1,\dots, e_N] ^\mathsf{T}$, $K$ and $\mathbf{w} = [w_1,\dots, w_N] ^\mathsf{T}$ such that ($w_i\le w_j$ $\forall i\le j$)
		\State \textbf{Output:} $\mathbf{r^*}$ and $\mathcal{C}^*$
		\State \textbf{Initialization}: $d_{start} = 0$, $d_{end} = \mathcal{M}$
		
		\While {$ d_{start} <  d_{end} $}
		\State $d = \max \{\lfloor \frac{d_{start}+d_{end}}{2}\rfloor,d_{start}+1 \}$
		\State $(F,\mathbf{r}, \mathbf{V}) = SdB( E, d)$
		\If{$F$ is \textbf{False}}
		\State $d_{end} = d$
		\Else
		\State $\hat{d} = d-1$
		\State $(\hat{F},\hat{\mathbf{r}}, \hat{\mathbf{V}}) = SdB( E, \hat{d})$
		\If{$\mathbf{w}^\mathsf{T}\cdot\mathbf{r} > \mathbf{w}^\mathsf{T}\cdot\hat{\mathbf{r}}$}
		\State $d_{end} = \hat{d}$, $\mathbf{r}^* = \hat{\mathbf{r}}$
		\Else
		\State $d_{start} = d$, $\mathbf{r}^* = \mathbf{r}$
		\EndIf
		\EndIf
		\EndWhile
		\State $R^* = K-d$, $\mathcal{C}^* = \mathbf{w}^\mathsf{T}\cdot \mathbf{r}$
		
	\end{algorithmic}
\end{algorithm}

The complexity of the binary search of Algorithm~\ref{alg_bs} is approximately $\mathcal{O}(\log(K))$.
Hence, the overall complexity of our two algorithms is $\mathcal{O}(N^3K^3\log(K))$ which is the same as complexity as the complexity of algorithms for basic CDE problem. 

\begin{example3*}
	On applying Algorithm~\ref{alg_sdbw} on Example~\ref{EX:CCDEweight} for $d = \{0,1,2,3,4\}$, we can get the results as shown in Table~\ref{tb:1}. 
	\begin{table}[h]
		\centering
		\caption{Sum rate, optimal weighted cost and rate vector}\label{tb:1}
		\begin{tabular}{ |c|c|c|c|c|c|c|c| } 
			\hline
			d&R=K-d &  $\mathcal{K}(R)$ &  $r_1$ &$ r_2 $  &$ r_3 $& $r_4$  & $r_5$ \\
			\hline
			\hline
			4&5 & 29 & 1 & 1 & 1& 1& 1  \\ 
			\hline
			3&6 & 22 & 2 & 2 & 2& 0& 0  \\  
			\hline
			2&7 & 21 & 3 & 3 & 1& 0& 0 \\ 
			\hline
			1&8 & 23 & 4 & 3 & 1& 0& 0 \\ 
			\hline
			0&9 & 25 & 5 & 3 & 1& 0& 0  \\ 
			\hline
		\end{tabular}
	\end{table}
	As can be seen from the table, the minimum cost is achieved by a coding scheme that uses $7$ transmissions, which is larger than the minimum number of required transmissions ($R_{min}=5$) for achieving universal recovery. Additionally, if we plot the function $\mathcal{K}(R)$ vs $R$ for example~\ref{EX:CCDEweight} and connect the points, it is easy to see the convexity in Fig.~\ref{fig:1}.
	
	\begin{figure}[!htbp]
		\centering
		\includegraphics[width=0.3\columnwidth]{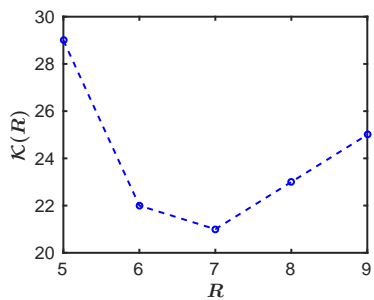}
		\caption{Optimal weighted cost ($\mathcal{K}(R)$) vs Sum rate ($R$) for Example~\ref{EX:CCDEweight}.}
		\label{fig:1}
	\end{figure}	
\end{example3*}

\section{Successive Local Omniscience}\label{sec:slo}
In the basic CDE problems, all nodes have the same priority and should be able to recover all packets at the end of the communication phase.
%However, if the nodes are partitioned into groups which have different priorities, the optimal coding schemes have multiple rounds.
%Let $\mathbf{G} = \{\mathbf{G}_1,\dots,\mathbf{G}_M\}$ be the partition of node set $\mathbf{N}$ where nodes in $\mathbf{G}_i$ have higher priority than nodes in $\mathbf{G}_j$ if $1\le i<j \le M$.
%Let $\mathbf{G}_{[i]} = \cup_{j=1}^{i} \mathbf{G}_j$ denote the union set of nodes.
%Apart from the basic CDE problem, {\it Successive Local Omniscience} (SLO)~\cite{2017arXiv170201773H} requires that nodes in $\mathbf{G}_{[i]}$ should recover all packets in $\mathbf{X}_{\mathbf{G}_{[i]}}$ and only nodes in $\mathbf{G}_{[i]}$ can send coded packets.
%We show that the $(d,K)$-Basis method can efficiently solve SLO problem.
In this section, we consider a generalized problem called {\it Successive Local Omniscience (SLO)}\cite{8006942} where nodes have different priorities. Specifically, let $\mathbf{G} = \{\mathbf{G}_1,\dots,\mathbf{G}_M\}$ be a partition of node set $\mathbf{N}.$ In the SLO problem, communication occurs in $M$ rounds, numbered from 1 to $M$ and taking place in this order, as follows:
\begin{list}{$\bullet$}{\setlength{\itemsep}{-1pt}\setlength{\topsep}{-1pt}}
\item In round $i,$ only the nodes in the set $\mathbf{G}_{[i]} \stackrel{\mathrm{def}}{=} \cup_{j=1}^{i} \mathbf{G}_j$ are allowed to transmit.
\item After round $i,$ all nodes in the set $\mathbf{G}_{[i]}$ must be able to recover all packets that were initially present at all the nodes in the set $\mathbf{G}_{[i]}.$
\end{list}
In this sense, if $i<k,$ then nodes in $G_i$ can be thought of as having priority over nodes in $G_k$ (although in the general case, no node is guaranteed to attain full omniscience of {\it all }  packets before the end of the last round).

Let $\mathbf{r}^{i} = [r_1^i,\dots,r_N^i]^{\mathsf{T}}$ denote the accumulated rate vector up to and including the $i^{th}$ round, where each $r_j^i$ denotes the total number of transmissions made by node $j$ from the first round to the $i^{th}$ round.
The corresponding entries of rate vectors $\mathbf{r}^i$ and $\mathbf{r}^{i+1}$ satisfy $r_j^i \le r_j^{i+1}$ for every node $j \in \mathbf{N}.$
Let $\Omega (\mathbf{G}_{[i]})$ be the set of rate vectors up to and including the $i^{th}$ communication round satisfying
\begin{align}
	\sum_{j\in \mathbf{G}_{[i]} \setminus \mathbf{I}}r_j^i \ge \left| \mathbf{X}_{\mathbf{G}_{[i]}} \setminus \mathbf{X}_\mathbf{I}\right|, \forall \mathbf{I}\subsetneq \mathbf{G}_{[i]}
\end{align}
Then we have the following lemma characterizing solutions to the SLO problem:
\begin{lemma}
	Any solution to the SLO problem is also a solution to the following multi-objective linear program:
	\begin{align}
	\min_{\mathbf{r}^i \in \Omega(\mathbf{G}_{[i]}) } \sum_{j=1}^{N} r_j^i, \forall i \in [M]
	\end{align}
	
\end{lemma}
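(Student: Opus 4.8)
The plan is to prove the two defining properties of a multi-objective-LP solution separately: first that every accumulated rate vector arising from an SLO schedule is \emph{feasible} for the corresponding $\Omega(\mathbf{G}_{[i]})$, and second that an \emph{optimal} SLO schedule (which is what ``solution'' must mean here, since the program is a minimization) drives each $\sum_{j} r_j^i$ down to the smallest value the constraints permit.

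For feasibility I would run the Slepian--Wolf / cut-set argument separately for each round $i$ and each proper subset $\mathbf{I}\subsetneq \mathbf{G}_{[i]}$. By the rules of the SLO problem, through the end of round $i$ only nodes in $\mathbf{G}_{[i]}$ have transmitted, so every received symbol is a linear function of the packets in $\mathbf{X}_{\mathbf{G}_{[i]}}$. The nodes in $\mathbf{I}$ start with exactly $\mathbf{X}_\mathbf{I}$ and, by the round-$i$ requirement, must recover all of $\mathbf{X}_{\mathbf{G}_{[i]}}$, hence in particular the packets in $\mathbf{X}_{\mathbf{G}_{[i]}}\setminus \mathbf{X}_\mathbf{I}$. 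Transmissions made by nodes inside $\mathbf{I}$ are functions of packets already held by $\mathbf{I}$ and convey no new information to $\mathbf{I}$, so the only source of those $|\mathbf{X}_{\mathbf{G}_{[i]}}\setminus \mathbf{X}_\mathbf{I}|$ missing packets is the $\sum_{j\in \mathbf{G}_{[i]}\setminus \mathbf{I}} r_j^i$ symbols broadcast by the complementary nodes. Since each symbol carries at most one packet's worth of information, this gives $\sum_{j\in \mathbf{G}_{[i]}\setminus \mathbf{I}} r_j^i \ge |\mathbf{X}_{\mathbf{G}_{[i]}}\setminus \mathbf{X}_\mathbf{I}|$, i.e. $\mathbf{r}^i\in \Omega(\mathbf{G}_{[i]})$ -- exactly the necessity argument behind~\eqref{eq:sw-constr}, now applied to the sub-network $\mathbf{G}_{[i]}$ with packet universe $\mathbf{X}_{\mathbf{G}_{[i]}}$. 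Feasibility then yields the lower bound $\sum_{j=1}^N r_j^i \ge \min_{\mathbf{r}\in\Omega(\mathbf{G}_{[i]})}\sum_j r_j$ for every $i$ and every SLO schedule.

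It remains to show these lower bounds are attained simultaneously, which is where the main work lies. I would use an incremental, round-by-round construction. Round $1$ is an ordinary CDE problem on $\mathbf{G}_{[1]}=\mathbf{G}_1$ with packet set $\mathbf{X}_{\mathbf{G}_1}$, so by the achievability results established earlier (Theorem~\ref{Thm:dt} together with Slepian--Wolf sufficiency on the fully connected sub-network) local omniscience is reachable with exactly $\min_{\mathbf{r}\in\Omega(\mathbf{G}_{[1]})}\sum_j r_j$ transmissions. Passing from round $i-1$ to round $i$, the new target is $\mathbf{X}_{\mathbf{G}_{[i]}}$ for the larger set $\mathbf{G}_{[i]}\supseteq \mathbf{G}_{[i-1]}$; crucially, every transmission a node in $\mathbf{G}_{[i-1]}$ could make is still permitted in round $i$, and every symbol already sent in rounds $1,\dots,i-1$ counts toward $\mathbf{r}^i$, so delaying or carrying forward transmissions never forces extra cost. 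Hence it suffices to append, in round $i$, transmissions by $\mathbf{G}_{[i]}$ that complete a sum-rate minimizer of $\Omega(\mathbf{G}_{[i]})$ on top of $\mathbf{r}^{i-1}$.

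The step I expect to be the main obstacle is establishing this \emph{consistency}: that a sum-rate minimizer for $\Omega(\mathbf{G}_{[i]})$ can be chosen to dominate $\mathbf{r}^{i-1}$ componentwise, so the monotonicity $r_j^{i-1}\le r_j^i$ holds while each round stays individually optimal. I would handle it through the polymatroid structure of the constraint sets: each $\Omega(\mathbf{G}_{[i]})$ is the contra-polymatroid of the supermodular function $\mathbf{I}\mapsto |\mathbf{X}_{\mathbf{G}_{[i]}}\setminus \mathbf{X}_\mathbf{I}|$, the restriction of the round-$i$ constraints to the earlier node set $\mathbf{G}_{[i-1]}$ is dominated by the round-$(i-1)$ constraints, and the minimum-sum-rate faces of these nested contra-polymatroids can be chosen nested. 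This lets me extend an optimal $\mathbf{r}^{i-1}$ to an optimal $\mathbf{r}^i$ without backtracking, so the schedule hits every objective's minimum at once and is a solution of the multi-objective program, while the feasibility lower bound shows no SLO schedule can beat any single objective. Combining the two directions completes the proof.
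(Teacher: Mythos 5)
Your feasibility half is sound, and it is essentially all that the paper's own (one-sentence) proof contains: the cut-set / Slepian--Wolf necessity argument, applied round by round to the sub-network on $\mathbf{G}_{[i]}$ with packet universe $\mathbf{X}_{\mathbf{G}_{[i]}}$, shows that every SLO schedule has $\mathbf{r}^i\in\Omega(\mathbf{G}_{[i]})$ for all $i$. Up to that point you match the paper.

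The second half contains a genuine gap, and it is precisely the step you flag as the main obstacle. You claim the $M$ lower bounds are attained \emph{simultaneously}, i.e.\ that a sum-rate minimizer of $\Omega(\mathbf{G}_{[i]})$ can always be chosen to dominate $\mathbf{r}^{i-1}$ componentwise. This is false, and the paper's own theorem immediately following this lemma already shows it: the SLO-optimal accumulated rate for round $i$ is $R_i^*=K_i-\min\{\mathcal{M}_i,d_1^*,\dots,d_i^*\}$, whereas the isolated minimum of the round-$i$ objective over $\Omega(\mathbf{G}_{[i]})$ is $K_i-\min\{\mathcal{M}_i,d_i^*\}$; whenever some earlier $d_j^*<d_i^*$ the former is strictly larger, so no SLO schedule minimizes the round-$i$ objective. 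Concretely, take $\mathbf{G}_1=\{1,2\}$ with $\mathbf{X}_1=\{P_1\}$, $\mathbf{X}_2=\{P_2\}$: round 1 forces two transmissions (each node must hear from the other). If $\mathbf{G}_2=\{3\}$ with $\mathbf{X}_3=\{P_1,P_2\}$, then $\Omega(\mathbf{G}_{[2]})$ contains the rate vector $(0,0,1)$ of total $1$ (node 3 broadcasts $P_1+P_2$), which cannot dominate any $\mathbf{r}^1$ of total $2$. So the minimum-sum-rate faces of the nested contra-polymatroids are \emph{not} nested in the sense you need, and the proposed polymatroid argument cannot be repaired. The lemma is only tenable if ``solution of the multi-objective program'' is read as lexicographic (or Pareto) optimality subject to the coupling $r_j^i\le r_j^{i+1}$ --- minimize round 1 first, then round 2 given round 1, and so on --- which is the reading the rest of the section implicitly uses; your proof instead attempts the stronger, false statement of simultaneous optimality of all objectives.
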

\begin{proof}
	For any $i\in [M]$, rate vectors $\mathbf{r}^i \in \Omega(\mathbf{G}_{[i]})$ satisfy the Slepian-Wolf constraints for
	achieving local omniscience and only nodes in $\mathbf{G}_{[i]}$ are allowed to make transmissions. 
	The minimization gives the minimum sum rate. 
	Thus, for $M$ communication rounds, the overall optimal solutions achieve successive local omniscience.
\end{proof}

The main goal and contribution of this section is to present a more efficient solution of the SLO problem via the $(d,K)$-Basis method.
Let $E_{\mathbf{G}_{[i]}}$ denote the packet distribution matrix of the nodes in $\mathbf{G}_{[i]}$. 
If we run Algorithm~\ref{alg_r} with $E_{\mathbf{G}_{[i]}}$ as input in the subspace indexed by the collectively available packets of $\mathbf{G}_{[i]}$, it will return the minimum number of required transmissions for achieving local omniscience as well as the corresponding $(d,K_i)$-Basis vectors.
Algorithm~\ref{alg_r} can be called for every $E_{\mathbf{G}_{[i]}}$, $i \in [M]$ and we can get the $d_i$-Basis vectors for local omniscience achieved by each $\mathbf{G}_{[i]}$.
If $d_i \ge d_{i+1}$, the $d_i$-Basis vectors can also be used to generate $d_{i+1}$-Basis vectors by adding $0$'s to the dimensions that are added by packets in $\mathbf{X}_{\mathbf{G}_{[i+1]}}\setminus \mathbf{X}_{\mathbf{G}_{[i]}}$.
If $d_i < d_{i+1}$, the $d_i$-Basis vectors cannot be used to generate $d_{i+1}$-Basis vectors.
Hence, the optimal strategy is to use the coding scheme based on $d_{i}$-Basis in the subspace indexed by packets of $\mathbf{X}_{\mathbf{G}_{[i+1]}}$ so that every transmissions used in the previous round are useful in the current round.

\begin{theorem}
	For successive local omniscience problem with $\mathbf{G}_{[i]}$ and corresponding packet distribution submatrix $E_{\mathbf{G}_{[i]}}$, for $i \in [M]$, the minimum number of required transmissions $R_i^*$ for round $i$ is
	\begin{align}
		R_i^* = K_i -\min\{\mathcal{M}_i, d_1^*,\dots,d_i^*\}
	\end{align} 
	where $K_i = |\mathbf{X}_{\mathbf{G}_{[i]}}|$ is the number of packets collectively available at nodes in $\mathbf{G}_{[i]}$, $\mathcal{M}_i = \min_{j \in \mathbf{G}_{[i]}} |X_j|$ is the minimum number of available packets at any single node in $\mathbf{G}_{[i]}$ and $d_i^*$ is the maximum $(d,K_i)$-Basis that can be generated by PDVs of nodes in $\mathbf{G}_{[i]}$.
\end{theorem}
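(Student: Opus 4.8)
The plan is to prove the formula by establishing matching lower and upper bounds on $R_i^*$, the minimal accumulated sum-rate after round $i$. Throughout I index the packets so that $\mathbf{X}_{\mathbf{G}_{[1]}} \subseteq \mathbf{X}_{\mathbf{G}_{[2]}} \subseteq \cdots$ occupy nested prefixes of the coordinates, write $\tilde{d}_i = K_i - R_i$ for the effective basis parameter realized at round $i$, and set $\delta_i = \min\{\mathcal{M}_i, d_1^*, \dots, d_i^*\}$. Note that $\mathcal{M}_i$ is non-increasing in $i$ (the minimum is taken over a growing node set), so $\delta_i$ is non-increasing; this reflects the intuition that the reusable degrees of freedom can only shrink as rounds progress. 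The goal is to show $R_i^* = K_i - \delta_i$ for every $i \in [M]$ simultaneously, so that one single nested scheme attains the minimum of the multi-objective program at each round.

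For the converse I would combine three bounds. First, the Slepian--Wolf constraint~\eqref{eq:sw-constr}, applied inside $\mathbf{G}_{[i]}$ to the singleton $\mathbf{I}=\{v\}$ where $v$ attains $\mathcal{M}_i$, forces the other nodes to send at least $K_i-\mathcal{M}_i$ combinations, so $R_i \ge K_i-\mathcal{M}_i$. Second, the round-$j$ transmissions alone form a linear scheme achieving local omniscience for $\mathbf{G}_{[j]}$; by Theorem~\ref{Thm:2} the PDVs of $\mathbf{G}_{[j]}$ therefore generate a $(\tilde d_j,K_j)$-Basis, whence $\tilde d_j \le d_j^*$. Third, to relate $R_i$ to the earlier $d_j^*$ I would exploit the accumulation structure: every node of $\mathbf{G}_{[j]}$ initially holds only packets in $\mathbf{X}_{\mathbf{G}_{[j]}}$, yet by the end of round $i$ it must recover all $K_i-K_j$ packets of $\mathbf{X}_{\mathbf{G}_{[i]}}\setminus\mathbf{X}_{\mathbf{G}_{[j]}}$. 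Since the round-$\le j$ transmissions are supported on the first $K_j$ coordinates, their coefficient vectors vanish on the new coordinates; hence the new-coordinate projections of the $R_i-R_j$ later transmissions must span $\mathbb{F}^{K_i-K_j}$, forcing $R_i-R_j \ge K_i-K_j$, i.e.\ $\tilde d_i \le \tilde d_j \le d_j^*$. Taking the tightest bound over all $j\le i$ gives $\tilde d_i \le \delta_i$, that is $R_i \ge K_i-\delta_i$.

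For achievability I would construct one nested scheme by induction on the round, so that the round-$(i-1)$ transmissions are literally a prefix of the round-$i$ transmissions. In round~$1$, since $\delta_1\le d_1^*$, Lemma~\ref{LM:d} guarantees that the PDVs of $\mathbf{G}_{1}$ generate a $(\delta_1,K_1)$-Basis; as $\delta_1 \le \mathcal{M}_1$, Theorem~\ref{Thm:dt} then lets every node of $\mathbf{G}_1$ recover, using $R_1=K_1-\delta_1$ transmissions. For the inductive step I would zero-pad the current $(\delta_{i-1},K_{i-1})$-Basis into the enlarged coordinate space; because $\delta_i\le\delta_{i-1}$, each padded vector still has weight at least $\delta_i+1$ and the padded family still satisfies constraint~\eqref{Eq:dB2} with parameter $\delta_i$, so it is a partial $(\delta_i,K_i)$-Basis. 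Using $\delta_i\le d_i^*$ (a full $(\delta_i,K_i)$-Basis over $\mathbf{G}_{[i]}$ exists by Lemma~\ref{LM:d}), I would adjoin the $(K_i-K_{i-1})+(\delta_{i-1}-\delta_i)$ further weight-$(\delta_i+1)$ vectors, generated by nodes of $\mathbf{G}_{[i]}$, needed to complete a $(\delta_i,K_i)$-Basis. Theorem~\ref{Thm:dt} (whose proof only needs constraint~\eqref{Eq:dB2}, not balance) together with $\delta_i\le\mathcal{M}_i$ then guarantees local omniscience for $\mathbf{G}_{[i]}$ with exactly $R_i=K_i-\delta_i$ accumulated transmissions, matching the converse.

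The main obstacle is the extension step: showing that the \emph{specific} zero-padded round-$(i-1)$ basis can always be completed to a full $(\delta_i,K_i)$-Basis generatable by $\mathbf{G}_{[i]}$, not merely that some such basis exists. I expect this to follow from the matroid-augmentation property underlying constraint~\eqref{Eq:dB2}---exactly the property Algorithm~\ref{alg_r} exploits when it scans nodes one at a time and greedily merges compatible candidates. Concretely, I would run \textit{SdB} on $E_{\mathbf{G}_{[i]}}$ with parameter $\delta_i$ under a node ordering listing the groups $\mathbf{G}_1,\dots,\mathbf{G}_i$ in order, and argue that the partial basis produced after exhausting $\mathbf{G}_{[i-1]}$ can be taken to agree with a re-parametrised version of the round-$(i-1)$ basis, the appended vectors then being precisely those generated while scanning $\mathbf{G}_i$. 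The delicate case is $\delta_i<\delta_{i-1}$, where reused heavy vectors of weight $\delta_{i-1}+1$ coexist with fresh light vectors of weight $\delta_i+1$; there I would treat each heavy reused vector as a merged block of weight-$(\delta_i+1)$ candidates (as in the earlier observation that a vector of weight larger than $d+1$ behaves like a merge of its $\mathcal{B}(\cdot,d)$ generators), which keeps both the augmentation argument and the Hall/matching step of Theorem~\ref{Thm:dt} intact.
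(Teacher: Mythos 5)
Your proposal is correct and arrives at the same formula, but it is organized quite differently from the paper's argument, and on the converse side it is strictly more complete. The paper proceeds round by round with a case split on where $\min\{d_1^*,\dots,d_i^*\}$ is attained: if the minimum is $d_i^*$ it asserts that all earlier transmissions can be reused inside a $(d_i^*,K_i)$-Basis scheme, and otherwise it asserts that one must fall back to a $(d_j^*,K_i)$-Basis so that the earlier transmissions remain usable; the optimality of this choice, i.e.\ the lower bound $R_i^*\ge K_i-d_j^*$ for $j<i$, is never argued explicitly. You instead separate converse from achievability, and your converse supplies exactly the missing ingredient: because round-$\le j$ transmissions are supported only on the $K_j$ coordinates of $\mathbf{X}_{\mathbf{G}_{[j]}}$, a node of $\mathbf{G}_{[j]}$ can learn the $K_i-K_j$ new packets only from the later transmissions, forcing $R_i-R_j\ge K_i-K_j$; combined with Theorem~\ref{Thm:2} applied to the round-$j$ subproblem this gives $K_i-R_i\le K_j-R_j\le d_j^*$ for every $j\le i$, and the Slepian--Wolf singleton bound adds $K_i-R_i\le\mathcal{M}_i$, yielding the full lower bound. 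On the achievability side you and the paper lean on the same unproved step --- that the zero-padded round-$(i-1)$ basis extends to a $(\delta_i,K_i)$-Basis generated by $\mathbf{G}_{[i]}$ --- but you isolate it and name the right tool: constraint~\eqref{Eq:dB2} is the independence condition of the matroid induced by the monotone submodular function $\mathbf{S}\mapsto w_H(v_{\mathbf{S}})-d$, so any independent family augments to one of maximal size using candidate vectors generated by the PDVs of $\mathbf{G}_{[i]}$, which closes the gap. One small point to add to your write-up: for a single nested scheme to serve every round simultaneously, the coefficient assignment in Theorem~\ref{Thm:dt} must make the relevant submatrices nonsingular for all rounds at once; this still follows from taking the product of all the corresponding determinant polynomials over a sufficiently large field.
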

\begin{proof}
	For the first round, $R_1^* = K_1 - \min\{\mathcal{M}_1,d_1^*\}$, according to Theorem~\ref{Thm:CCDE}.
	For the $i^{th}$ round, since $\mathbf{G}_{[j]} \subset \mathbf{G}_{[i]}$, $\forall j < i$, $\mathcal{M}_i \le \mathcal{M}_j$.
	According to Theorem~\ref{Thm:CCDE}, nodes in $\mathbf{G}_{[i]}$ can generate a coding scheme based on $\{\mathcal{M}_i,d_i^*\}$-Basis can achieve local omniscience.
	If $d_i^* = \min \{d_1^*\dots,d_i^*\}$, and all transmissions used in previous rounds can also be used as the transmissions of coding schemes based on $\{\mathcal{M}_i,d_i^*\}$-Basis.
	Thus, in the $i^{th}$ round, only additional transmissions are required and the total minimum number of required transmissions for achieving local omniscience is  $R_i^* = K_i - \min\{\mathcal{M}_i,d_i^* \}$.
	If $d_j^* = \min \{d_1^*\dots,d_i^*\}$ and $j < i$, then transmissions generated in the $j^{th}$ round cannot all be used for coding scheme based on $\{\mathcal{M}_i,d_i^*\}$-Basis.
	In order to make use of all previously generated transmissions, coding scheme based on $\{\mathcal{M}_i,d_j^*\}$-Basis can be used to achieve local omniscience for nodes in $\mathbf{G}_{[i]}$ and the total number of required transmissions is $R_i^* = K_i - \{\mathcal{M}_i,d_j^*\}$.
	Therefore, $R_i^* = K_i -\min\{\mathcal{M}_i, d_1^*,\dots,d_i^*\}$.
\end{proof}

We propose Algorithm~\ref{alg_slo} to compute the minimum number of required transmissions ($R_i^*$) and the local optimal rate vector ($\mathbf{r}_i^*$) for nodes in each group with different priorities. Algorithm~\ref{alg_slo} iteratively calls Algorithm~\ref{alg_sdb} to find the existence of $(d,K_i)$-Basis that can be generated for linear coding scheme to achieve local omniscience.

\begin{algorithm}[!htbp]
	\caption{Successive Local Omniscience}
	\label{alg_slo}
	\begin{algorithmic}[1]
		\State \textbf{Input:} $E = [e_1,\dots, e_N] ^\mathsf{T}$ and $\mathbf{G} = \{\mathbf{G}_1,\dots,\mathbf{G}_M\}$
		\State \textbf{Output:} $R_1^*,\dots,R_M^*$ and $\mathbf{r}_1^*,\dots,\mathbf{r}_M^*$
		\State \textbf{Initialization}: $d^* = K$		
		\For{$i = 1 \dots M$}
		\State $d_{min} = 1$, $d_{max} = \min \{\mathcal{M}_i,d^*\}$
		\State $(F,\mathbf{r},\mathbf{V}) = SdB(E_{\mathbf{G}_{[i]}},d_{end})$
		\If{$F$ is \textbf{True}}
			\State $d_i^* = d_{max}$, $\mathbf{V}_i^* = \mathbf{V}$, $\mathbf{r}_i^* = \mathbf{r}$
		\Else
			\State $(F,\mathbf{r},\mathbf{V}) = SdB(E_{\mathbf{G}_{[i]}},d_{min})$
			\If{$F$ is \textbf{False}}
				\State $d_i^* = 0$, $\mathbf{V}_i^* = I_{K_i}$
			\Else
				\While{$d_{max} - d_{min} > 1$}
				%\State $d= \max \{\lfloor \frac{d_{min}+d_{max}}{2}\rfloor, d_{min}+1 \}$
				\State $d= \lfloor \frac{d_{min}+d_{max}}{2}\rfloor$
				\State $(F,\mathbf{r},\mathbf{V}) = \textit{SdB}(E,d)$
				\If{$F$ is \textbf{True}}
				\State $d_{min}=d$, $d_i^* = d$, $\mathbf{V}_i^*=\mathbf{V}$, $\mathbf{r}_i^* = \mathbf{r}$
				\Else
				\State $d_{max}=d$
				\EndIf
				\EndWhile					
			\EndIf

		\EndIf
		
		\State $d^* = d_i^*$, $R_i^* = K_i - d_i^*$
		\EndFor
		
	\end{algorithmic}
\end{algorithm}

Based on the $(d,K_i)$-Basis vectors $\mathbf{V}_i^*$ and local optimal rate vector $r_i^*$, the corresponding linear coding scheme can be generated to achieve local omniscience. Instead of generating linear coding scheme for each communication round individually, it is possible to globally generate a linear coding scheme in which the first $R_i^*$ transmissions can achieve local omniscience.

In terms of the complexity of our approach, in each communication round, the minimum number of required transmissions and the accumulated rate vector are found by using binary search method and iteratively call Algorithm~\ref{alg_sdb}. 
The total number of outer iteration is equal to the number of priority groups, $M$.
The binary search method for the $i^{th}$ round has complexity bounded by $\mathcal{O}(\log(K_i))$.
For the $i^{th}$ round, Algorithm~\ref{alg_sdb} has complexity bounded by $\mathcal{O}(|\mathbf{G}_{[i]}|^3K_i^3)$, since the number of nodes and packets considered in the $i^{th}$ round are $|\mathbf{G}_{[i]}|$ and $K_i$, respectively.
Hence, the total number of computation can be expressed as $\sum_{i=1}^{M}|\mathbf{G}_{[i]}|^3K_i^3\log(K_i)$.
Since $\forall i:$ $|\mathbf{G}_{[i]}|\le N$ and $K_i \le K$, the overall complexity of our $(d,K)$-Basis method for solving SLO problem is bounded by $\mathcal{O}(N^3K^3M\log(K))$.

\begin{example}
	Consider the successive local omniscience problem with packet distribution matrix 
	\begin{align}
		E =
		\begin{bmatrix}
		1 & 1 & 1 & 1 & 0 & 0 & 0 & 0 & 0	\\
		0 & 1 & 1 & 1 & 1 & 0 & 0 & 0 & 0	\\
		1 & 1 &  0& 0 & 0 & 1 & 0 & 0 & 0	\\
		0 & 0 & 1 & 1 & 0 & 0 & 1 & 0 & 0	\\
		1 & 0 & 1 & 1 & 1 & 1 & 1 & 1 & 0	\\
		1 & 1 & 1 & 1 & 1 & 0 & 1 & 0 & 1	
		\end{bmatrix}
	\end{align}
	And the nodes are partitioned into three groups with decreasing priorities: $\mathbf{G}_1 = \{1,2\}$, $\mathbf{G}_2 = \{3,4\}$ and $\mathbf{G}_3 = \{5,6\}$.
	Since nodes in $\mathbf{G}_1$ collectively only have packets $P_1,\dots,P_5$, the optimization for the first communication round is equivalent to the basic CDE problem with packet distribution matrix $E_{\mathbf{G}_1}$, which is a submatrix of the first two rows of $E$.
	\begin{align}
		E_{\mathbf{G}_1} =
		\begin{bmatrix}
		1 & 1 & 1 & 1 & 0 	\\
		0 & 1 & 1 & 1 & 1 	\\
		\end{bmatrix}
	\end{align}
	It is apparent that only two transmissions are required to achieve local omniscience for $\mathbf{G}_{1}$.
	Consider the following two coding schemes:
	\begin{itemize}
		\item Coding scheme 1:  Node 1 sends $P_1$ and Node 2 sends $P_5$.
		\item Coding scheme 2: Node 2 sends $P_1+P_2+P_3+P_4$ and Node 2 sends $P_2+P_3+P_4+P_5$.
	\end{itemize}
	In Coding scheme 1, each transmission is a linear combination of as few packets as possible, while in Coding scheme 2, each transmission is a linear combination of as many packets as possible.
	Both coding schemes can enable two nodes to fully recover packets that are collectively available at them.
	However, we will show that Coding scheme 1 is suboptimal but Coding scheme 2 is optimal.
	In the second communication round, the goal is to enable node in $\mathbf{G}_{[2]}$ to recover packets which are collectively available at them.
	Similarly, we have packet distribution matrix $E_{\mathbb{G}_{[2]}}$, which is a submatrix of the first four rows of $E$.
	\begin{align}
		E_{\mathbf{G}_{[2]}} =
		\begin{bmatrix}
		1 & 1 & 1 & 1 & 0 & 0 & 0	\\
		0 & 1 & 1 & 1 & 1 & 0 & 0 \\
		1 & 1 &  0& 0 & 0 & 1 & 0 	\\
		0 & 0 & 1 & 1 & 0 & 0 & 1 	
		\end{bmatrix}
	\end{align}
	If we treat this as a packet distribution matrix of a basic CDE problem,  it is easy to find that the minimum number of required transmission is $5$, since the $(2,7)$-Basis is the $(d,7)$-Basis with largest $d$ value that can be generated by row vectors of $E_{\mathbf{G}_{[2]}}$.
	And this implies that in the successive local omniscience problem, the total number of required transmissions is at least $5$.
	If we choose Coding scheme 1 in the first transmission round, the packet distribution matrix becomes
	\begin{align}
	\hat{E}_{\mathbf{G}_{[2]}} =
	\begin{bmatrix}
	1 & 1 & 1 & 1 & 1 & 0 & 0	\\
	1 & 1 & 1 & 1 & 1 & 0 & 0 \\
	1 & 1 &  0& 0 & 1 & 1 & 0 	\\
	1 & 0 & 1 & 1 & 1 & 0 & 1 	
	\end{bmatrix}
	\end{align}
	As the row vectors of $\hat{E}_{\mathbf{G}_{[2]}}$ can only generate a $(3,7)$-Basis which has largest $d$ value, $4$ transmissions are required in the second communication round to achieve local omniscience for nodes in $\mathbf{G}_{[2]}$.
	Hence, the total number of transmissions for the first and second rounds is $2+4=6$ which is larger than the lower bound $5$.
	However, if coding scheme 2 is chosen in the first round,
	actually it is possible to generate a coding scheme based on $(2,7)$-Basis in which the first two transmissions achieve local omniscience for nodes in $\mathbf{G}_1$.
	The desired $(2,7)$-Basis generated by $E_{\mathbf{G}_{[2]}}$ is 
	\begin{align}
		\begin{bmatrix}
		v_1\\v_2\\v_3\\v_4\\v_5
		\end{bmatrix}
		=
		\begin{bmatrix}
		1 & 1 & 1 & 1 & 0 & 0 & 0	\\
		0 & 1 & 1 & 1 & 1 & 0 & 0	\\
		0 & 1 & 1 & 1 & 1 & 0 & 0	\\
		1 & 1 & 0 & 0 & 0 & 1 & 0	\\
		0 & 0 & 1 & 1 & 0 & 0 & 1	\\
		\end{bmatrix}
	\end{align}
	As you can see the first $5$ columns of $v_1$ and $v_2$ can actually form a $(3,5)$-Basis.
	And the coding scheme based on them can achieve local omniscience for nodes in $\mathbf{G}_1$.
	Similarly, we can show that $2$ transmissions are required in the third communication round to achieve omniscience for nodes in $\mathbf{G}_{[3]}$.
	Instead of generating coefficients for linear combinations of packets for each round individually, we can deal with them together by constructing a linear coding scheme based on the final $(d,K)$-Basis we need, which is $(2,9)$-Basis in this case.
	Given the rate vector in each round:
	\begin{align}
		\mathbf{r}_1 = [1,1,0,0,0,0]^{\mathsf{T}}\\
		\mathbf{r}_2 = [0,1,1,1,0,0]^{\mathsf{T}}\\
		\mathbf{r}_3 = [0,0,0,0,1,1]^{\mathsf{T}}
	\end{align}
	And the $(2,9)$-Basis that generated by row vectors of $E$
		\begin{align}
	\begin{bmatrix}
	v_1\\v_2\\v_3\\v_4\\v_5\\v_6\\v_7
	\end{bmatrix}
	=
	\begin{bmatrix}
	1 & 1 & 1 & 1 & 0 & 0 & 0 & 0 & 0	\\
	0 & 1 & 1 & 1 & 1 & 0 & 0 & 0 & 0	\\
	0 & 1 & 1 & 1 & 1 & 0 & 0 & 0 & 0	\\
	1 & 1 & 0 & 0 & 0 & 1 & 0 & 0 & 0	\\
	0 & 0 & 1 & 1 & 0 & 0 & 1 & 0 & 0	\\
	1 & 0 & 1 & 1 & 1 & 1 & 1 & 1 & 0	\\
	1 & 1 & 1 & 1 & 1 & 0 & 1 & 0 & 1	
	\end{bmatrix}
	\end{align}
	By using the coding construction method based on MDS code in Section~\ref{sec:codeconstruction}, we can get a coefficient matrix as follows, where all entries are over finite file $GF(2^4)$ with primitive polynomial $\alpha^4+\alpha +1$.
	\begin{align}
		\begin{bmatrix}
			a_1\\a_2\\a_3\\a_4\\a_5\\a_6\\a_7
		\end{bmatrix}
		=
		\begin{bmatrix}
		4 & 7 & 3 & 1 & 0 & 0 & 0 & 0 & 0	\\
		0 & 8 & 12 & 3 & 2 & 0 & 0 & 0 & 0	\\
		0 & 13 & 13 & 2 & 2 & 0 & 0 & 0 & 0	\\
		15 & 8 & 0 & 0 & 0 & 1 & 0 & 0 & 0	\\
		0 & 0 & 4 & 5 & 0 & 0 & 10 & 0 & 0	\\
		10 & 0 & 9 & 9 & 5 & 5 & 5 & 5 & 0	\\
		9 & 4 & 1 & 1 & 1 & 0 & 1 & 0 & 1	
		\end{bmatrix}
	\end{align}
	It can be verified that the first $2$ transmissions achieves local omniscience for nodes in $\mathbf{G}_1$, the first $5$ transmissions achieve local omniscience for nodes in $\mathbf{G}_{[2]}$, and all transmissions together achieve omniscience for nodes in $\mathbf{G}_{[3]}$ (all nodes).
\end{example}

\newpage

\section{Conclusion}\label{sec:conclusion}
In this paper, we introduce the notion of the $(d,K)$-Basis. We establish that the existence of such a basis is both a necessary and sufficient condition for the existence of coding schemes that can achieve universal recovery with $K-d$ transmissions for the fully connected network. 
We provide a polynomial-time deterministic algorithm based on the $(d,K)$-basis construction which solves the cooperative data exchange problem.
We show that we can efficiently construct the coefficients of an optimal linear coding scheme starting from a Vandermonde matrix by levering the connection between the $(d,K)$-Basis and maximum distance separable codes.
Moreover, we demonstrate that our $(d,K)$-Basis construction method can also be used in solving generalized versions of the cooperative data exchange problem, including with weighted cost and with successive local omniscience. 

% use section* for acknowledgment
\section*{Acknowledgment}
This work was supported in part by the Swiss National Science Foundation under Grant 169294. The authors thank Abhin Shah for his contribution for Section~\ref{sec:weightcost} during his summer internship at EPFL.

\appendix
\subsection{Proof of Theorem~\ref{thm:wtopt}}\label{Ap:proof1}
In order to prove Theorem~\ref{thm:wtopt}, we first prove two useful Lemmas.
\begin{lemma}\label{lm:w2}
	Let $\mathbf{r}^* = [r_1^*,r_2^*,\dots,r_N^*]^\mathsf{T}$ denote the rate vector output by Algorithm~\ref{alg_sdbw}.
	For any rate vector $\mathbf{r} = [r_1,\dots,r_N]^\mathsf{T}$ such that $\mathbf{r} \in \Omega$ and $\mathcal{S}(\mathbf{r}^*) = \mathcal{S}(\mathbf{r}) $, there does not exists any node pair $(i,j)$ such that $i< j$, $r_i > r_i^*$ and $r_j < r_j^*$.
	%	There does not exist any coding scheme whose rate vector $\mathbf{r} = \{r_1,\dots,r_N\}$ satisfies all the following conditions at the same time
	%	\begin{itemize}
	%		\item[(1)] Universal recovery can be achieved by such coding scheme.
	%		\item[(2)] {\color{yellow} $ \sum_{i = 0}^{N} r_i = \sum_{i = 0}^{N} r_i^*$} {\color{blue} $\mathcal{S}(\mathbf{r}^*) = \mathcal{S}(\mathbf{r}) $}
	%		\item[(3)] $\exists i< j$: $r_i > r_i^*$ and $r_j < r_j^*$
	%	\end{itemize}
\end{lemma}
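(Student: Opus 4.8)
The plan is to recast the statement as a property of the greedy base of a polymatroid and then argue by contradiction. First I would remove the sum-rate constraint: since $\mathcal{S}(\mathbf{r})=\mathcal{S}(\mathbf{r}^*)=K-d$ and $|\mathbf{X}_\mathbf{I}^c|=K-|\mathbf{X}_\mathbf{I}|$, each Slepian--Wolf inequality $\sum_{k\in\mathbf{N}\setminus\mathbf{I}}r_k\ge|\mathbf{X}_\mathbf{I}^c|$ is, on the slice $\sum_k r_k=K-d$, equivalent to the upper bound $\sum_{k\in\mathbf{I}}r_k\le|\mathbf{X}_\mathbf{I}|-d$ for every nonempty proper $\mathbf{I}$. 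Thus every feasible $\mathbf{r}$ of this sum rate is a base of the polymatroid whose rank is the Dilworth truncation of the monotone submodular map $\mathbf{I}\mapsto|\mathbf{X}_\mathbf{I}|-d$, and because the total rate $K-d=|\mathbf{X}_\mathbf{N}|-d$ equals that rank, these are genuine bases. By Theorems~\ref{Thm:dt} and~\ref{Thm:2} each such base is realized by a $(d,K)$-Basis in which node $k$ supplies $r_k$ of the weight-$(d+1)$ vectors, and $\mathbf{r}^*$ is realized by the basis the algorithm outputs.

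Next I would record the structural facts that drive the proof and then set up the contradiction. Because Algorithm~\ref{alg_sdbw} scans the nodes in increasing index and, at each node, adds the maximum number of candidates of $\mathcal{B}(e_k,d)$ that are not yet in $\mathcal{G}(\mathbf{Q},d)$, its output is the greedy base, which yields the prefix majorization $\sum_{k\le i}r_k\le\sum_{k\le i}r_k^*$ for every feasible $\mathbf{r}$ and every $i$. The finer, and decisive, fact concerns the merged vectors kept in $\mathbf{Q}$: since any two vectors of $\mathcal{B}(e_k,d)$ share their first $d$ coordinates and hence always merge, each node feeds exactly one merged vector, the number of basis vectors absorbed into a merged vector $q$ equals $w_H(q)-d$, and the merge rule~\eqref{Ieq:Mq} (merge precisely when the support overlap reaches $d$) is designed so that the node set $\mathbf{I}_q$ feeding $q$ is a \emph{tight} set: $\sum_{k\in\mathbf{I}_q}r_k^*=|\mathbf{X}_{\mathbf{I}_q}|-d$. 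Assuming for contradiction a pair $i<j$ with $r_i>r_i^*$ and $r_j<r_j^*$, I would take the tight set $\mathbf{I}_i$ produced by the scan at the moment node $i$ is loaded; feasibility of $\mathbf{r}$ on $\mathbf{I}_i$ gives $\sum_{k\in\mathbf{I}_i}r_k\le|\mathbf{X}_{\mathbf{I}_i}|-d=\sum_{k\in\mathbf{I}_i}r_k^*$, so the surplus $r_i>r_i^*$ forces a deficit $r_{k_0}<r_{k_0}^*$ at some $k_0\in\mathbf{I}_i$ of index strictly below $i$. Chaining this over the nested tight sets, I would conclude that every surplus of $\mathbf{r}$ over $\mathbf{r}^*$ is compensated strictly to its left, which is incompatible with a deficit at $j>i$.

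The hard part is exactly the gap between the easy prefix majorization and the no-interleaving claim of the lemma: majorization alone permits a surplus at $i$ to be balanced by a \emph{later} deficit, so ruling this out must exploit the full family of subset constraints, not only the prefixes. This is where the tight-set fact is essential, and the delicate point is proving that the merged components really furnish tight, down-closed sets of the truncated polymatroid --- in particular that $\operatorname{supp}(q)$ equals $\mathbf{X}_{\mathbf{I}_q}$ even though a node's packets may overlap several components, and that the size-$1$ and size-$2$ merges of~\eqref{Ieq:Mq} preserve this invariant. An alternative I would keep in reserve avoids the submodular reformulation and argues by a transversal-matroid exchange: realize $\mathbf{r}$ and $\mathbf{r}^*$ by two $(d,K)$-Bases with their node assignments, and at the scan step for node $i$ swap a surplus vector against a node-$j$ vector, contradicting the maximality the loop enforces at node $i$; there the difficulty migrates to making the exchange respect the ``generated by $e_i$'' membership, i.e.\ staying inside $\mathcal{G}(e_i,d)$.
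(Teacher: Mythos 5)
Your reduction to a polymatroid base polytope and the prefix majorization $\sum_{k\le i} r_k \le \sum_{k\le i} r_k^*$ are sound, and you correctly identify that majorization alone does not give the lemma. The genuine gap is the step you use to bridge it: from ``every surplus of $\mathbf{r}$ over $\mathbf{r}^*$ is compensated by a deficit strictly to its left'' you conclude that a deficit at $j>i$ is impossible. That implication is false: the sign pattern $(-,+,-,+)$ has every surplus compensated immediately to its left and still contains a surplus at some $i$ followed by a deficit at some $j>i$. Your tight-set machinery cannot exclude this pattern, because the tight sets produced by the scan need not form a nested chain --- they are (roughly) the supports of the distinct merged vectors in $\mathbf{Q}$ --- so the compensation argument localizes within each such set and says nothing about the relative order of a surplus in one set and a deficit in another.

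The gap is not repairable, because Lemma~\ref{lm:w2} is false as stated. Take $N=4$, $K=2$, $\mathbf{X}_1=\mathbf{X}_2=\{P_1\}$, $\mathbf{X}_3=\mathbf{X}_4=\{P_2\}$ (each packet is held by two of the four nodes, so the paper's standing assumptions hold). Algorithm~\ref{alg_sdbw} with $d=0$ outputs $\mathbf{r}^*=[1,0,1,0]^\mathsf{T}$, while $\mathbf{r}=[0,1,0,1]^\mathsf{T}\in\Omega$ has the same sum rate and exhibits the forbidden pair $(i,j)=(2,3)$. For what it is worth, the paper's own proof makes essentially the same leap, passing from the suffix inequality $\sum_{k\ge i}r_k\ge\sum_{k\ge i}r_k^*$ directly to the no-interleaving claim, so your difficulty is not an artifact of your route. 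What is true, and all that Theorem~\ref{thm:wtopt} actually needs, is the weaker cost inequality $\mathcal{C}(\mathbf{r})\ge\mathcal{C}(\mathbf{r}^*)$; with the weights sorted increasingly this follows from the prefix majorization you already have, via Abel summation: writing $S_i=\sum_{k\le i}r_k$, one has $\sum_i w_i r_i = w_N S_N + \sum_{i<N}(w_i-w_{i+1})S_i$, where $w_i-w_{i+1}\le 0$, $S_i\le S_i^*$ and $S_N=S_N^*$. I recommend proving that statement instead of the lemma as written.
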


\begin{proof}
	If the coding scheme with rate vector $\mathbf{r}$ can achieve universal recovery and uses the same total number of transmissions, then the coding scheme can be implemented as a $(d,K)$-Basis based coding scheme which has the same $d$ value as the coding scheme with rate vector $\mathbf{r}^*$.
	As Algorithm~\ref{alg_sdbw} guarantees that $\forall i \in [N]$, if $r_i^* > 0$, then there must exist as many as $\sum_{j=i}^{N} r_j^*$ $(d,K)$-Basis vectors that cannot be generated by nodes in set $\{1,2,\dots,i-1\}$. If $\exists i< j$ such that, $r_i > r_i^*$ and $r_j < r_j^*$, then $\sum_{j=i}^{N} r_j < \sum_{j=i}^{N} r_j^*$ which is not possible as such vectors can only be generated by nodes in set $\{i,i+1\dots,N\}$.
	Hence, it is impossible that $\exists i< j$: $r_i > r_i^*$ and $r_j < r_j^*$.
\end{proof}

\begin{lemma}\label{lm:w3}
	Let $\mathbf{r}^* = [r_1^*,r_2^*,\dots,r_N^*]^\mathsf{T}$ denote the rate vector output by Algorithm~\ref{alg_sdbw}. If there exists a coding scheme with rate vector $\mathbf{r} = [r_1,\dots,r_N]^\mathsf{T}$ such that $\mathbf{r} \in \Omega$, $\mathcal{S}(\mathbf{r}^*) = \mathcal{S}(\mathbf{r})$. If there exists node pair (i,j) such that $i< j$, $r_i < r_i^*$ and $r_j > r_j^*$, then $\mathcal{C}(\mathbf{r}) \geq \mathcal{C}(\mathbf{r}^*)$.
	%	\begin{itemize}
	%		\item[(1)] Universal recovery can be achieved by such coding scheme.
	%		\item[(2)]  {\color{yellow} $ \sum_{i = 0}^{N} r_i = \sum_{i = 0}^{N} r_i^*$} {\color{blue} $\mathcal{S}(\mathbf{r}^*) = \mathcal{S}(\mathbf{r}) $}
	%		\item[(3)] $\exists i< j$: $r_i < r_i^*$ and $r_j > r_j^*$
	%	\end{itemize}
	%then the weighted cost of $\mathbf{r}$ is equal to than the weighted cost of $\mathbf{r}^*:$ 
	%\begin{align}
	%{\color{yellow}  [[\sum_{i = 0}^{N} w_ir_i \ge  \sum_{i = 0}^{N} w_ir_i^*]]}\\
	%{\color{blue} \mathcal{C}(\mathbf{r}) \geq \mathcal{C}(\mathbf{r}^*) }
	%\end{align}
	%{\color{yellow} [[where $w_1\le w_2\le \dots \le w_N$ are the weights for nodes.]]}
\end{lemma}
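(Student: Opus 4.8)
The plan is to compare the two rate vectors entry by entry and to exploit the ascending ordering of the weights together with Lemma~\ref{lm:w2}. First I would set $\delta_i = r_i - r_i^*$ for each $i \in [N]$. Since $\mathcal{S}(\mathbf{r}) = \mathcal{S}(\mathbf{r}^*)$, these differences satisfy $\sum_{i=1}^{N} \delta_i = 0$, so the quantity I must control is the cost gap $\mathcal{C}(\mathbf{r}) - \mathcal{C}(\mathbf{r}^*) = \sum_{i=1}^{N} w_i \delta_i$, and it suffices to show that this is non-negative.

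The key structural observation, which I would extract directly from Lemma~\ref{lm:w2}, is that every index where $\mathbf{r}$ falls short of $\mathbf{r}^*$ precedes every index where it exceeds $\mathbf{r}^*$. Concretely, suppose $\delta_a < 0$ (an undershoot) and $\delta_b > 0$ (an overshoot). If we had $b < a$, then the pair $(b,a)$ would satisfy $b < a$, $r_b > r_b^*$ and $r_a < r_a^*$, which is exactly the configuration ruled out by Lemma~\ref{lm:w2}. Hence $a < b$ for every undershoot index $a$ and overshoot index $b$. Because the input PDVs are ordered so that $w_i \le w_j$ whenever $i \le j$, this ordering of indices translates into an ordering of weights: the largest weight occurring among undershoot indices is at most the smallest weight occurring among overshoot indices.

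With this separation in hand, I would fix any threshold value $w^\dagger$ lying between $\max\{w_a : \delta_a < 0\}$ and $\min\{w_b : \delta_b > 0\}$; such a value exists by the previous step, and the hypothesis of the lemma guarantees that both of these index sets are non-empty. For every overshoot index we have $w_i \ge w^\dagger$ and $\delta_i > 0$, so $w_i \delta_i \ge w^\dagger \delta_i$; for every undershoot index we have $w_i \le w^\dagger$ and $\delta_i < 0$, which again gives $w_i \delta_i \ge w^\dagger \delta_i$; and indices with $\delta_i = 0$ contribute nothing on either side. Summing over all $i$ then yields
\begin{align}
\mathcal{C}(\mathbf{r}) - \mathcal{C}(\mathbf{r}^*) = \sum_{i=1}^{N} w_i \delta_i \ge w^\dagger \sum_{i=1}^{N} \delta_i = 0,
\end{align}
which is the desired conclusion.

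The conceptual heart of the argument is already supplied by Lemma~\ref{lm:w2}, so I do not anticipate a genuine obstacle; the only point requiring care is the direction of the inequality $w_i \delta_i \ge w^\dagger \delta_i$ on the undershoot side, where multiplying the weight inequality $w_i \le w^\dagger$ by the negative quantity $\delta_i$ reverses it into precisely the form needed for the telescoping bound. Everything else is a routine exchange/rearrangement estimate.
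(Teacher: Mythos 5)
Your proof is correct and follows essentially the same route as the paper's: both arguments invoke Lemma~\ref{lm:w2} to conclude that every index where $\mathbf{r}$ undershoots $\mathbf{r}^*$ precedes (and hence has weight at most that of) every index where it overshoots, and then conclude non-negativity of the cost gap from the zero-sum of the differences. The only cosmetic difference is that the paper finishes by explicitly pairing each unit of undershoot with a unit of overshoot and summing $w_j - w_i \ge 0$ over the pairs, whereas you use a single separating threshold $w^\dagger$; the two bookkeeping devices are equivalent.
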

\begin{proof}
	Let $\mathbf{S}_1 = \{i: r_i < r_i^*\}$, $\mathbf{S}_2 = \{j: r_j > r_j^*\}$ and $\mathbf{S}_3 = \{k: r_k = r_k^*\}$.
	Since  $\mathcal{S}(\mathbf{r}^*) = \mathcal{S}(\mathbf{r}) $ , we have
	\begin{align}
	0=\sum_{i = 0}^{N} (r_i -r_i^*) &= \sum_{i \in \mathbf{S}_1} (r_i - r_i^*)+\sum_{j \in \mathbf{S}_2} (r_j - r_j^*)+\sum_{k \in \mathbf{S}_3} (r_k - r_k^*)\\
	&=\sum_{i \in \mathbf{S}_1} (r_i - r_i^*)+\sum_{j \in \mathbf{S}_2} (r_j - r_j^*)
	\end{align}
	Hence, for each $i\in \mathbf{S}_1$ that sends one less transmission, there must exist one corresponding $j\in \mathbf{S}_2$ which sends one more transmission.
	According to Lemma~\ref{lm:w2}, if there exists such pair of $(i,j)$, it must satisfy $i< j$ and $w_i < w_j$.
	Let $P = \sum_{i \in \mathbf{S}_1} (r_i^* - r_i)= \sum_{j \in \mathbf{S}_2} (r_j - r_j^*)$ denote the total number of such pairs and $\mathcal{P}$ denote the partition of such pairs. 
	Therefore,
	\begin{align}
	\mathcal{C}(\mathbf{r}) - \mathcal{C}(\mathbf{r}^*) & = \sum_{i = 0}^{N} w_ir_i -  \sum_{i = 0}^{N} w_ir_i^* \\
	&= \sum_{i \in \mathbf{S}_1} w_i(r_i - r_i^*) +\sum_{j \in \mathbf{S}_2} w_j(r_j - r_j^*)\\
	&= \sum_{(i,j) \in  \mathcal{P}} (w_j - w_i )\\
	&\ge 0
	\end{align}
\end{proof}
Now, we are ready to prove Theorem~\ref{thm:wtopt}.
\begin{proof}[\textbf{Proof of Theorem~\ref{thm:wtopt}}]
	If there exists any linear coding scheme that achieves universal recovery by using $K-d$ transmissions with rate vector $\mathbf{r}=[r_1,\dots,r_N]^\mathsf{T}$ ($\sum_{i=i}^{N} r_i = K-d$), it is always possible to generate a corresponding linear coding scheme based on $(d,K)$-Basis that have the same rate vectors~\cite{Li1706:Cooperative}.
	Hence, they have the same weighted cost and we can only consider the coding schemes based on $(d,K)$-Basis. 
	Let $r^* = [r_1^*,\dots,r_N^*]^\mathsf{T}$ denote the rate vector output by Algorithm~\ref{alg_sdbw}.
	According to Lemma~\ref{lm:w2}, there does not exist any $i<j$ such that $r_j<r_j^*$. 
	Additionally, since $\mathcal{S}(\mathbf{r}^*) = \mathcal{S}(\mathbf{r}) $,
	if rate vector $\mathbf{r}$ is different from $\mathbf{r}^*$, the change can only be $\exists i<j:$ $r_i<r_i^*$ and $r_j>r_j^*$.	
	According to Lemma~\ref{lm:w3}, $\mathcal{C}(\mathbf{r}) \geq \mathcal{C}(\mathbf{r}^*) $.
	Therefore, the rate vector output by Algorithm~\ref{alg_sdbw}  has minimum weighted cost in all coding schemes which use $K-d$ transmissions and achieve universal recovery.
\end{proof}

\subsection{Proof of Theorem~\ref{thm:convexity}}\label{Ap:proof2}
In order to prove Theorem~\ref{thm:convexity}, we first prove two useful Lemmas.
\begin{lemma}\label{lm:ws}
	Let $\mathbf{r}(l)$ be the rate vector output by Algorithm~\ref{alg_sdbw} for input $E$ and $d = K- l$. Thus, $\mathbf{r}(l)$ is the optimal rate vector with minimum weighted cost among all the rate vectors with $\mathcal{S}(\mathbf{r}) = l$.
	%{\color{yellow} Let $\mathbf{r}(l)$  denote the optimal rate vector with sum rate $l$  out of all the rate vectors with $\mathcal{S}(\mathbf{r}) = l$  and is output by Algorithm~\ref{alg_sdbw}.} 
	For the coding schemes with rate vectors $\mathbf{r}(l) = [{r}_{(l,1)},\dots, {r}_{(l,N)}]^\mathsf{T}$ with $l \in \{R_{min},\dots,K\}$ yielded by Algorithm~\ref{alg_sdbw},we have
	\begin{itemize}
		\item[(1)] ${r}_{(l+1,1)} = {r}_{(l,1)} +1$.
		\item[(2)] ${r}_{(l+1,m)} \le {r}_{(l,m)} +1$, $\forall 2\le m \le N$.
		\item[(3)] If ${r}_{(l+1,m)} < {r}_{(l,m)}$, then ${r}_{(l+2,m)} \leq {r}_{(l+1,m)}$.
		%\item[(3)] if $\mathcal{R}_l^j < \mathcal{R}_l^i$, then $\mathcal{R}_m^j \le  \mathcal{R}_m^i$, $\forall m \ge l$.
		%\item[(4)] Let $l^* = \max l$ such that $\mathcal{R}_l^j < \mathcal{R}_l^i$  and $m^* = \max m$ such that $\mathcal{R}_m^k < \mathcal{R}_m^k$, then $m^*< l^*$. 
	\end{itemize}
\end{lemma}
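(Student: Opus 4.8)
The plan is to read off all three statements from the execution of Algorithm~\ref{alg_sdbw}, working with the prefix sums $G_m(d) \stackrel{\mathrm{def}}{=} \sum_{i=1}^{m} r_{(K-d,\,i)}$ and the per-node contributions $r_m(d) \stackrel{\mathrm{def}}{=} r_{(K-d,\,m)} = G_m(d) - G_{m-1}(d)$, so that item (1) reads $r_1(d-1) = r_1(d)+1$, item (2) reads $r_m(d-1) \le r_m(d)+1$, and item (3) reads ``$r_m(d-1) < r_m(d)$ implies $r_m(d-2) \le r_m(d-1)$''. The first step is to argue that $G_m(d)$ equals the largest number of weight-$(d+1)$ vectors, each generated by some $e_i$ with $i\le m$ and jointly satisfying Constraint~\eqref{Eq:dB2}. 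This is where the order-respecting loop over $i$ matters: the saturated clusters stored in $\mathbf{Q}$ record, for each region of coordinates, the maximum number of basis vectors admissible there (namely $w_H(q)-d$ inside the support of a cluster $q$), so the greedy never discards a vector it could have kept and the prefix count is indeed maximal.

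Claim (1) then falls out immediately. Node $1$ is processed with $\mathbf{Q}=\emptyset$, every vector of $\mathcal{B}(e_1,d)$ contributes a genuinely new coordinate and is therefore added, giving $r_1(d) = |\mathcal{B}(e_1,d)| = w_H(e_1)-d$; lowering $d$ by one enlarges $\mathcal{B}(e_1,d)$ by exactly one element, so $r_1(d-1)=r_1(d)+1$.

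For claims (2) and (3) I would run the algorithm for the two (resp.\ three) consecutive values of $d$ in parallel and maintain a coupling between the cluster sets produced just before each node is processed. The invariant to carry through an induction on the node index $m$ is that decreasing $d$ by one relaxes both the size of each candidate set $\mathcal{B}(e_m,\cdot)$ and the merge threshold~\eqref{Ieq:Mq} by exactly one unit, so each cluster can absorb at most one extra admissible basis vector. Translating this into finite differences of $G_m$ shows that $\Delta_m(d) \stackrel{\mathrm{def}}{=} G_m(d-1)-G_m(d)$ is non-negative and grows by at most one as $m$ increases, i.e.\ $\Delta_m(d) \le \Delta_{m-1}(d)+1$ (with $\Delta_1(d)=1$ by claim (1) and $\Delta_0(d)=0$); since $r_m(d-1)-r_m(d) = \Delta_m(d)-\Delta_{m-1}(d)$, this is exactly (2). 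For (3) the same coupling is pushed one step further: once node $m$'s contribution has strictly dropped at $d$, its fresh coordinates have been absorbed into clusters built from the earlier (lighter) nodes, and this absorption is monotone under a further decrease of $d$, so the contribution cannot rebound. Formally this is a unimodality statement for $r_m(\cdot)$, which I would derive from the same monotone-absorption property together with (2) (Lemma~\ref{LM:d} underwriting the non-negativity of $\Delta_m$).

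The main obstacle is the merge step. Because Condition~\eqref{Ieq:Mq} is non-local---it fuses the newly added vector with subsets of $\mathbf{Q}$ of size up to two, possibly repeatedly---the state $\mathbf{Q}$ after node $m$ is not a simple function of node $m$ alone, and I expect the bulk of the work to lie in proving that the correspondence between the clusters of the runs at $d$, $d-1$, and $d-2$ is preserved across these fusions. Once the cluster coupling is shown to survive each merge, items (1)--(3) reduce to routine finite-difference bookkeeping on $G_m(d)$.
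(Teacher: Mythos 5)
Your reading of Algorithm~3 is the same as the paper's: part~(1) is argued identically (node~1 faces an empty $\mathbf{Q}$, so $r_{(l,1)}=|\mathcal{B}(e_1,d)|=w_H(e_1)-d$, which increases by exactly one as $d$ drops by one), and your interpretation of the merge condition~\eqref{Ieq:Mq} as producing \emph{saturated} clusters holding exactly $w_H(q)-d$ basis vectors each is the correct structural picture. However, for parts~(2) and~(3) your proposal does not yet contain a proof. The passage from $r_m(d-1)-r_m(d)=\Delta_m(d)-\Delta_{m-1}(d)$ to ``$\Delta_m(d)\le\Delta_{m-1}(d)+1$ is exactly (2)'' is a tautological restatement, not a reduction: the entire content of (2) is the comparison between the states $\mathbf{Q}$ reached at parameters $d$ and $d-1$ just before node $m$ is processed, and you explicitly defer that comparison (``I expect the bulk of the work to lie in proving that the correspondence between the clusters \dots is preserved across these fusions''). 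Likewise your argument for (3) — that absorption of node $m$'s blocked candidates into earlier clusters ``is monotone under a further decrease of $d$'' — is precisely the claim that needs proving, not a consequence of anything you have established. So as written, (2) and (3) rest on an unproven coupling invariant.

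For comparison, the paper's own proof does not attempt a full cluster coupling either. For (2) it bounds node $m$'s contribution by the size of its candidate pool, $|\mathcal{B}(e_m,K-l)|=w_H(e_m)-K+l$, which grows by exactly one element per unit increase of $l$, and concludes that node $m$ can add at most one more vector. For (3) it introduces the set $\mathcal{D}(m,l+1)$ of node $m$'s candidates that were \emph{not} selected at sum rate $l+1$ because they are incompatible with vectors generated by earlier (lighter) nodes, and argues that any lower-weight candidate whose support lies inside a vector of $\mathcal{D}(m,l+1)$ inherits that incompatibility at sum rate $l+2$; hence the count cannot rebound. If you want to complete your write-up, the most economical route is to adopt this persistence-of-blocking argument directly (a blocked candidate stays blocked, and the pool grows by only one element per step) rather than attempting to track the full evolution of $\mathbf{Q}$ across three parallel runs, which is considerably more machinery than the statement requires.
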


\begin{proof}
	(1) Since in Algorithm~\ref{alg_sdbw}, we always start the generation of basis vectors from the PDV of node 1 and there is no previously generated basis vector, then the number of basis vectors that should be generated by node 1 is 
	\begin{align}
	{r}_{(l,1)} = w_H(e_1)- d = w_H(e_1) - K+ l
	\end{align}
	Since $R_{min}\le l \le K$ and $R_{min} = K- \min \{\mathcal{M},d^*\}$, we have $0 \le {r}_{(l,1)} \le w_H(e_1)$.
	Note that $w_H(e_1) \ge \mathcal{M} \ge K-R_{min}$.
	Therefore, for any feasible $l$, we have ${r}_{(l+1,1)} = {r}_{(l,1)} +1$.
	This means the first node generates 1 more vector when the total number of transmissions increases by 1. 
	When ${r}_{(l,1)} = |X_1| $, each transmissions is just a pure packet.
	In such cases, we have $d=0$ and $l=K$.
	Universal recovery can always be achieved when all packets have been sent individually.
	No coding scheme with more than $K$ transmissions should be considered.
	
	(2) Similarly, for any $2\le  m \le N$, the total number of feasible basis vectors that can be generated by node $m$ is $w_H(e_m) - K + l$.
	However, some of them may not be compatible with basis vectors that have been generated by previous nodes.
	Hence we have
	\begin{align}
	r_{(l,m)} \le w_H(e_m) - K + l
	\end{align}
	And ${r}_{(l+1,m)} \le {r}_{(l,m)} +1$, $\forall 2\le m \le N$.
	This means node $m$ can generate at most 1 more basis vector when the total number of transmissions increases by 1.
	
	(3) As the total number of transmissions (sum rate) goes from $l$ to $l+1$, the corresponding basis changes from $(K-l)$-Basis to $(K-l-1)$-Basis.
	Therefore, the number of packets that are used to generate each transmission decreases by 1. 
	Note that $w_H(e_m) \ge \mathcal{M} \ge K-R_{min}$ , $\forall m \in [N]$. When $l = R_{min}$, nodes $m$ with $w_H(e_m) = K-R_{min}$  are not considered to generate any basis vector, since every basis vector needs $K-R_{min}+1$ ones.  
	But when $l > R_{min}$, every node is considered to generate basis vectors.
	If node $i$ is not used to generate any basis vector, that means all basis vectors that can be generated by node $i$ are not compatible with the basis vectors generated by previous nodes.
	If ${r}_{(l+1,m)} < {r}_{(l,m)}$, that means besides the first node, there exists at least one node with lower weight than node $m$ that generates more basis vector(s), i.e. $\exists n$ s.t. $n <m$ and $r_{(l+1,n)} > r_{(l,n)}$.
	The set of basis vectors that are generated to form $(K-l-1)$-Basis by node $m$ is a subset of $\mathcal{B}(e_m,K-l)$.
	Let $\mathcal{D}(m,l+1)$ denote vectors in $\mathcal{B}(e_m,K-l)$ but are not selected to form $(K-l-1)$-Basis.
	Then every vector in $\mathcal{D}(m,l+1)$ is not compatible with $(K-l-1)$-Basis vectors generated by priouves nodes.
	Any vector in $\mathcal{B}(e_m,K-l-1)$ which can be generated by vectors in $\mathcal{D}(m,l+1)$ is also not compatible with $(K-l-2)$-Basis vectors generated by priouves nodes.
	Hence, the maximum number of basis vectors that can be generated by node $m$ for next round is upperbounded by $r_{(l+1,m)}$.
	Therefore, If ${r}_{(l+1,m)} < {r}_{(l,m)}$, then ${r}_{(l+2,m)} \leq {r}_{(l+1,m)}$ , $\forall 2\le m \le N$.

	%        	
	%        	Case 1:- Node $m$ have ${r}_{(l+1,m)}$ packets that are not available at all previous nodes with lower priority.	 Hence, even the total number of transmissions increases further, the number of basis vectors generated by node $m$ will not decrease. Then we will have ${r}_{(l+2,m)} = {r}_{(l+1,m)}$\\
	%        	Case 2:- Even though all nodes with weight lower than node $m$ have been considered to generate as many basis vectors as they can, there are not enough such nodes to accommodate for these transmissions since for every such node we can generate at most one more basis vector compared to the earlier round  . Then we will have ${r}_{(l+2,m)} \leq {r}_{(l+1,m)}$ as we can use some this nodes in the next round to generate more basis vectors.\\
	%        	Therefore, we have  If ${r}_{(l+1,m)} < {r}_{(l,m)}$, then ${r}_{(l+2,m)} \leq {r}_{(l+1,m)}$ , $\forall 2\le m \le N$.
	%        	The number of basis vector generated by such node decreases finally to the minimum number of basis vectors it has to generate, which is $|X_m \setminus \cup_{j=1}^{m-1}X_j|$.
	
	% (4) In Algorithm~\ref{alg_sdbw}, the nodes with higher weight has less priority to generate basis vectors.
	%Hence, if $m^* > l^*$, 
	
\end{proof}

\begin{definition}
	Let $\mathbf{S}_{(l,\uparrow)}$ denote the set of nodes which generate more number of transmissions when the sum rate increases from $l$ to $l+1$. Let $\mathbf{S}_{(l,0)}$ denote the set of nodes which generate the same number of transmissions when the sum rate increases from $l$ to $l+1$.
	Let $\mathbf{S}_{(l,\downarrow)}$  denote the multiset of nodes which generate fewer transmissions when the sum rate increases from $l$ to $l+1$. The multiplicity of node $i$ in $\mathbf{S}_{(l,\downarrow)}$  equals $r_{(l,i)} - r_{(l+1,i)}$.
\end{definition}

\begin{lemma}\label{lm:ws4}
	For  $ \forall  R_{min} \le l \le K-1$, we have $(1)$ $\mathbf{S}_{(l+1,\uparrow)} \subseteq \mathbf{S}_{(l,\uparrow)}$ and 
	%$(2)$ $\max_{ i \in \mathbf{S}_{(l+1,\downarrow)}} w_i \leq \max_{ j\in \mathbf{S}_{(l,\downarrow)}} w_j $.
	(2) Let $W_{l+1}^i$ be the $i^{th}$ largest $w \in \{w_j:j\in \mathbf{S}_{(l+1,\downarrow)}\}$ and $W_l^i$ be the $i^{th}$ largest $w\in \{w_j: j\in \mathbf{S}_{(l,\downarrow)}\}$. For any $W_{l+1}^{i}$, there exists $W_l^{i}$ such that $W_{l+1}^i\le W_l^i$.
	%(2) $\max_{ i \in \mathbf{S}_{(l+1,\downarrow)}} w_i \leq \max_{ j\in \mathbf{S}_{(l,\downarrow)}} w_j $.
	%	\begin{itemize}
	%		\item[(1)] $\mathbf{S}_{(l+1,\uparrow)} \subseteq \mathbf{S}_{(l,\uparrow)}$
	%		%\item[(2)] If node $ i \in \mathcal{S}_-^{\beta_1} \cap \mathcal{S}_-^{\beta_2} $ , then the set of the corresponding nodes $ j \in \mathcal{S}_+^{\beta_1} $, say $\mathcal{S}_{+,i}^{\beta_1} $ is superset of the corresponding nodes  $j \in \mathcal{S}_+^{\beta_2} $, say $\mathcal{S}_{+,i}^{\beta_2} $, i.e $\mathcal{S}_{+,i}^{\beta_2} \subseteq \mathcal{S}_{+,i}^{\beta_1} $
	%		%\item[(3)] $|\mathcal{S}_-^{\beta_1} \cap \mathcal{S}_-^{\beta_2}| \leq |\mathcal{S}_+^{\beta_1} \cap \mathcal{S}_+^{\beta_2}|$
	%		\item[(2)] $\max_{ i \in \mathbf{S}_{(l+1,\downarrow)}} w_i \leq \max_{ j\in \mathbf{S}_{(l,\downarrow)}} w_j $
	%	\end{itemize}
\end{lemma}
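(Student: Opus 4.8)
The plan is to reformulate everything in terms of the per-node increments $\delta_{(l,m)} := r_{(l+1,m)} - r_{(l,m)}$, so that $m \in \mathbf{S}_{(l,\uparrow)}$ iff $\delta_{(l,m)} = 1$ (recall $\delta_{(l,m)} \le 1$ by Lemma~\ref{lm:ws}(2)), $m \in \mathbf{S}_{(l,0)}$ iff $\delta_{(l,m)} = 0$, and $m$ contributes $-\delta_{(l,m)}$ copies to the multiset $\mathbf{S}_{(l,\downarrow)}$ when $\delta_{(l,m)} < 0$. Throughout I will use the accounting identity $\sum_{m} \delta_{(l,m)} = 1$ (the sum rate rises by exactly one), together with $\delta_{(l,1)} = 1$ from Lemma~\ref{lm:ws}(1).

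For part (1) I would prove the contrapositive: if $m \notin \mathbf{S}_{(l,\uparrow)}$, i.e. $\delta_{(l,m)} \le 0$, then $m \notin \mathbf{S}_{(l+1,\uparrow)}$, i.e. $\delta_{(l+1,m)} \le 0$. When $\delta_{(l,m)} < 0$ this is exactly Lemma~\ref{lm:ws}(3). The remaining case $\delta_{(l,m)} = 0$ (node $m$ neither gains nor loses a transmission) is the real content and is not covered by Lemma~\ref{lm:ws}(3); here I would strengthen its argument. Going from level $l$ to $l+1$ the candidate supply $\mathcal{B}(e_m,\cdot)$ of node $m$ grows by exactly one vector while its rate is unchanged, so precisely one extra candidate became incompatible with the merged basis $\mathbf{Q}$ accumulated from the lower-weight nodes $\{1,\dots,m-1\}$ (otherwise $m$ would have increased). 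I would then show this incompatibility is inherited one level further: the weight-$(K-l-1)$ descendants produced by the $\mathcal{B}$-construction from that incompatible candidate are still generated by the merged vectors of $\{1,\dots,m-1\}$, so the next unit of supply growth at level $l+2$ is again absorbed and $\delta_{(l+1,m)} \le 0$. In effect this upgrades Lemma~\ref{lm:ws}(3) to ``once a node stops increasing it never increases again,'' which is precisely $\mathbf{S}_{(l+1,\uparrow)} \subseteq \mathbf{S}_{(l,\uparrow)}$.

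For part (2) I would first record the size bound. From $\sum_m \delta_{(l,m)} = 1$ we get $|\mathbf{S}_{(l,\uparrow)}| - |\mathbf{S}_{(l,\downarrow)}| = 1$ (the down-set counted with multiplicity), and combining with part (1) gives $|\mathbf{S}_{(l+1,\downarrow)}| \le |\mathbf{S}_{(l,\downarrow)}|$, so every $W_{l+1}^i$ indeed has a counterpart $W_l^i$. The pointwise domination $W_{l+1}^i \le W_l^i$ is equivalent, since both are the down-slot weights sorted in decreasing order, to a threshold statement: for every weight level, the number of down-slots of weight at least that level does not increase as $l \to l+1$. Because the weights are sorted as $w_1 \le \dots \le w_N$, each threshold corresponds to a node suffix $H = \{t,\dots,N\}$, so it suffices to prove, for every suffix $H$, that the total down-multiplicity inside $H$ is non-increasing from $l$ to $l+1$.

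To establish this suffix statement I would exploit the greedy low-weight-first order of Algorithm~\ref{alg_sdbw}. As in the proof of Lemma~\ref{lm:w2}, the suffix sum $\sum_{m \in H} r_{(l,m)}$ equals the number of level-$l$ basis vectors that cannot be generated by the complementary prefix $\{1,\dots,t-1\}$; writing this as $l - L_{(l,t)}$, where $L_{(l,t)}$ is the largest compatible family of weight-$(K-l+1)$ vectors that the prefix nodes can generate, the suffix down-multiplicity can be expressed through $|\mathbf{S}_{(l,\uparrow)}\cap H|$ and the successive increments of $L_{(l,t)}$. Part (1) already makes $|\mathbf{S}_{(l,\uparrow)}\cap H|$ non-increasing, so the suffix claim reduces to a diminishing-returns (concavity) property of the prefix-rank $L_{(\cdot,t)}$ in the level $l$, which I would prove by the same inherited-incompatibility mechanism as in part (1), now applied to the whole prefix treated as a single aggregated generator. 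I expect this concavity of the prefix-rank to be the main obstacle: it is where the combinatorics of which candidate vectors stay compatible across successive values of $d$ must be controlled uniformly over a set of nodes rather than a single node, and it is exactly the ingredient that ultimately forces the weighted cost curve to be convex.
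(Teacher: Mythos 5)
Your majorization-style reduction of part (2) to a suffix/threshold statement is sound, and your dispatch of the case $r_{(l+1,m)} < r_{(l,m)}$ in part (1) via Lemma~\ref{lm:ws}(3) is exactly what the paper does. The problem is that the load-bearing step in both halves of your plan --- the claim that incompatibility of a candidate in $\mathcal{B}(e_m,\cdot)$ with the prefix's merged vectors is inherited from one value of $d$ to the next, and its aggregated form, the concavity of the prefix-rank $L_{(\cdot,t)}$ in $l$ --- is precisely the part you defer, and it does not follow from a routine strengthening of Lemma~\ref{lm:ws}(3). The merged sets $\mathbf{Q}$ at levels $d$ and $d-1$ are produced by two separate runs of a greedy procedure with dynamic, order-dependent merging, so there is no a priori containment between them; moreover a candidate of node $m$ can be blocked not only by the prefix $\{1,\dots,m-1\}$ but also by vectors node $m$ itself contributed earlier in the same pass, or simply by early termination once $|\mathbf{V}|=K-d$. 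Consequently ``once a node stops increasing it never increases again'' is not delivered by the descendant-counting you sketch, and you flag the aggregated version yourself as the main obstacle. This is a genuine gap, not a deferred detail.

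The paper closes both holes by an entirely different mechanism: it leans on the already-established conditional optimality of the algorithm's output for each fixed sum rate (Theorem~\ref{thm:wtopt}, via Lemmas~\ref{lm:w2} and~\ref{lm:w3}) and argues by exchange. For part (1), if some $k\in\mathbf{S}_{(l+1,\uparrow)}$ satisfied $k\in\mathbf{S}_{(l,0)}$, then since every node of $\mathbf{S}_{(l+1,\uparrow)}\setminus\{1\}$ is paired with a strictly heavier node of $\mathbf{S}_{(l+1,\downarrow)}$, one unit of rate could be shifted from that heavier node to $k$ while preserving feasibility at the same sum rate, yielding a strictly cheaper rate vector and contradicting optimality; the case $k\in\mathbf{S}_{(l,\downarrow)}$ is handled by Lemma~\ref{lm:ws}(3) as in your write-up. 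Part (2) is then proved by induction on $i$ using the same one-unit exchange at each step. In short, the weights together with conditional optimality give the monotonicity essentially for free, whereas your route tries to extract it from the combinatorics of the basis construction alone, which is exactly the unresolved hard part. To repair your proof, either establish the concavity of $L_{(\cdot,t)}$ in $l$ directly (for instance by exhibiting it as a minimum of functions affine in $d$) or replace the combinatorial inheritance argument with the exchange argument based on Theorem~\ref{thm:wtopt}.
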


\begin{proof}
	Let $\mathbf{r}(l) = [r_{(l,1)},\dots,r_{(l,N)}]^\mathsf{T}$ and $\mathbf{r}(l+1) = [r_{(l,1)},\dots,r_{(l+1,N)}]^\mathsf{T}$ denote the rate vectors output by Algorithm~\ref{alg_sdbw} for $d=K-l$ and $d=K-l-1$, respectively.
	According to Theorem~\ref{thm:wtopt}, $\mathbf{r}(l)$ and $\mathbf{r}(l)$ are optimal rate vectors for fixed sum rate $l$ and $l+1$, respectively.
	
	(1) Assuming that $\mathbf{S}_{(l+1,\uparrow)}\not \subseteq \mathbf{S}_{(l,\downarrow)}$, then there must exist at least one node $k$, such that $k \in  \mathbf{S}_{(l+1,\uparrow)}$ and $k \notin  \mathbf{S}_{(l,\uparrow)}$.
	Hence, $k$ must be in $\mathbf{S}_{(l,0)}$ or $\mathbf{S}_{(l,\downarrow)}$.
	It is apparent that $k\ne 1$, since the first node always increases the rate by 1 when the total sum-rate increases by 1.
	For $ k \in  \mathbf{S}_{(l+1,\uparrow)} \setminus \{1\}$, there must always exist a corresponding node $ m \in  \mathbf{S}_{(l+1,\downarrow)}$ such that $w_k < w_m$.
	\begin{itemize}
		\item[(i)] If $ k \in \mathbf{S}_{(l,0)}$, we know that $r_{(l+1,k)} = r_{(l,k)}$. 
		Coding scheme with rate vector $\hat{\mathbf{r}}(l) = [\hat{r}_{(l,1)},\dots,\hat{r}_{(l,N)}]^\mathsf{T}$ such that
		\begin{align}
		\hat{r}_{(l,k)} &= r_{(l+1,k)} =r_{(l,k)}+1\\
		\hat{r}_{(l,m)} &= r_{(l+1,m)} = r_{(l,m)}-1\\
		\hat{r}_{(l,i)} & = r_{(l,i)}, \forall i\in [N]\setminus \{k,m\}
		\end{align}
		can also achieve universal recovery.
		Moreover, coding scheme with rate vector $\hat{\mathbf{r}}(l)$ has lower cost than coding scheme with rate vector $\mathbf{r}(l)$.
		This contradicts that coding scheme with rate vector $\mathbf{r}(l)$ is optimal for all rate vector with sum rate $l$. 
		\item[(ii)] If $ k \in  \mathbf{S}_{(l,\downarrow)}$, we know that $r_{(l+1,k)} < r_{(l,k)}$.
		According to Lemma~\ref{lm:ws}, ${r}_{(l+1,k)} \leq {r}_{(l,k)}$. 
		This contradicts our assumption that $k \in \mathbf{S}_{(l+1,\uparrow)}$.
	\end{itemize}
	Thus we have $\mathbf{S}_{(l+1,\uparrow)} \subseteq \mathbf{S}_{(l,\uparrow)}$.
	
	(2)We use induction proof method to prove this part of lemma.
	For $i = 1$, let $W_{l+1}^1 = w_m$, $W_{l}^i = w_n$. We assume that $W_{l+1}^i > W_{l}^i$, then we have $w_m > w_n$ which implies that $m \not\in \mathbf{S}_{(l,\downarrow)}$.
	Since $\mathbf{S}_{(l+1,\uparrow)} \subseteq \mathbf{S}_{(l,\uparrow)}$, coding scheme with rate vector $\hat{\mathbf{r}}(l) = [ \hat{r}_{(l,1)},\dots,\hat{r}_{(l,N)}]^\mathsf{T}$ which satisfies
	\begin{align}
	\hat{r}_{(l,m)} &=r_{(l,m)}-1\\
	\hat{r}_{(l,n)} &= r_{(l,n)}+1\\
	\hat{r}_{(l,j)} & = r_{(l,j)}, \forall j\in [N]\setminus \{m,n\}
	\end{align}
	can also achieve universal recovery with the same sum-rate and has lower weighted cost. This contradicts that coding scheme with rate vector $\mathbf{r}(l) = [r_{(l,1)},\dots,r_{(l,N)}]^\mathsf{T}$ is optimal for all rate vector with sum rate $l$. 
	Thus, we have $W_{l+1}^1 \le W_l^1$.
	For $i > 1$, assuming that $W_{l+1}^{i-1} \le W_{l}^{i-1}$, we show that  $W_{l+1}^{i} \le W_{l}^{i}$.
	let $W_{l+1}^i = w_a$, $W_{l}^i = w_b$. 
	If $W_{l+1}^{i-1} \le w_b$, then it is straightforward that $w_a = W_{l+1}^i \le W_{l+1}^{i-1}\le w_b= W_l^i$.
	If $W_{l+1}^{i-1} > w_b$, and we assume that $w_a > w_b$. In such cases, $a \not\in \mathbf{S}_{(l,\downarrow)}$, since $w_a\le W_{l+1}^{i-1} \le W_{l}^{i-1}$.
	By using similar trick as we used for $i=1$, it is able to show that there exist another coding scheme which achieves universal recovery and has lower sum weighted cost. Hence the assumption $w_a > w_b$ can never be true.
	Therefore, $W_{l+1}^i \le W_l^{i}$.
	
\end{proof}

Now we are ready to prove Theorem~\ref{thm:convexity}.

\begin{proof}[\textbf{Proof of Theorem~\ref{thm:convexity}}]
	For any $ R_{min} \le l \le K-2$, we show that the second order difference of $\mathcal{K}(l)$ is non-negative, i.e. 
	$\mathcal{F}(l+1) -\mathcal{F}(l) \ge 0$, where $\mathcal{F}(l)= \mathcal{K}(l+1)-\mathcal{K}(l)$.
	We compute the difference of the weighted cost of two coding schemes when sum-rate increases by 1. 
	\begin{align}
	\mathcal{F}(l+1) &=\mathcal{K}(l + 2) - \mathcal{K}(l+1) \\
	&= \sum_{i\in \mathcal{S}_{(l+1,\uparrow)}} w_i -  \sum_{i\in \mathcal{S}_{(l+1,\downarrow)}} w_i \\
	& = w_1 + \sum_{i\in \mathcal{S}_{(l+1,\uparrow)} \setminus \{1\}} w_i -  \sum_{i\in \mathcal{S}_{(l+1,\downarrow)}} w_i
	\end{align}
	According to Lemma~\ref{lm:ws}, node 1 always generates 1 more transmission when the total number of transmissions increases by 1. 
	And for other nodes,  if their rate increases, the increment is 1, whereas if their rate decreases, the decrement can be more than 1. And the number of multiplications of the nodes in $\mathcal{S}_{(l+1,\downarrow)}$ is equal to the decrease in rate.
	Similarly, for sum-rate change from $l$ to $l + 1$, we have
	\begin{align}
	\mathcal{F}(l) =\mathcal{K}(l+1) - \mathcal{K}(l) =w_1 + \sum_{i\in \mathcal{S}_{(l,\uparrow)} \setminus \{1\}} w_i -  \sum_{i\in \mathcal{S}_{(l,\downarrow)}} w_i
	\end{align}
	The reason why node 1 is separated from other nodes is that the total number of transmissions only increases by 1, which implies that the total number of transmissions sent by other nodes, except node 1, remains the same. 
	Hence 
	\begin{align}
	| \mathcal{S}_{(l,\uparrow)} \setminus \{1\}| &=  |\mathcal{S}_{(l,\downarrow)} |\\
	| \mathcal{S}_{(l+1,\uparrow)} \setminus \{1\}|& =  |\mathcal{S}_{(l+1,\downarrow)} |
	\end{align}
	Therefore, $\forall i\in \mathcal{S}_{(l,\uparrow)} \setminus \{1\}$, $\exists j\in \mathcal{S}_{(l,\downarrow)}$ such that $w_i < w_j$.
	We can construct a partition of node pairs $(i,j)$, where $i \in \mathcal{S}_{(l,\uparrow)}\setminus \{1\}$ and  $j \in \mathcal{S}_{(l,\downarrow)} $ as follows
	\begin{equation}
	\mathcal{P}(l) = \{(i,j):  i\in \mathcal{S}_{(l,\uparrow)} \setminus \{1\}, j\in \mathcal{S}_{(l,\downarrow)}, i<j\}
	\end{equation}
	Note that the number of node pairs in $\mathcal{P}(l)$ is equal to  $| \mathcal{S}_{(l,\uparrow)} \setminus \{1\}| $.
	Then we have
	\begin{align}
	\mathcal{F}(l) = w_1+ \sum_{(i,j)\in \mathcal{P}(l)} (w_i-w_j)		
	\end{align}
	where every term of the summation ($w_i-w_j$) is negative.
	
	We show that for each pair $(i,j)\in \mathcal{P}(l+1)$, there always exists a pair $(\hat{i},\hat{j})\in \mathcal{P}(l)$ such that 
	\begin{align}
	w_i-w_j - (w_{\hat{i}}-w_{\hat{j}}) \ge 0\label{eq:weightpair}
	\end{align}
	
	Assuming that there exists a node pair $(i,j) \in \mathcal{P}(l+1)$ such that for all possible pairs $(\hat{i},\hat{j})\in \mathcal{P}(l)$:
	\begin{align}
	w_i-w_j - (w_{\hat{i}}-w_{\hat{j}}) < 0
	\end{align}
	Equivalently, we have
	\begin{align}
	w_i-w_j < \max_{\hat{i}\in \mathcal{S}_{(l,\uparrow)}, \hat{j} \in \mathcal{S}_{(l,\downarrow)}} (w_{\hat{i}} - w_{\hat{j}})
	\end{align}
	If $i \in \mathcal{S}_{(l,\uparrow)}$, then $w_j > \max_{\hat{j}\in \mathcal{S}_{(l,\downarrow)}} w_{\hat{j}}$, which contradicts Lemma~\ref{lm:ws4}.
	If $i \not \in \mathcal{S}_{(l,\uparrow)}$, consider another coding scheme with rate vector $\mathbf{r} = [r_1,r_2,\dots,r_N]^\mathsf{T}$ such that
	\begin{align}
	r_i = r_{(l+1,i)}+1, r_j = r_{(l+1,j)}-1\\
	r_{\hat{i}} = r_{(l+1,\hat{i})}-1, r_{\hat{j}} = r_{(l+1,\hat{j})}+1\\
	r_m = r_{(l+1,m)}, \forall m \not\in \{i,j,\hat{i},\hat{j}\}
	\end{align}
	It can be verified that this coding scheme can also achieve universal recovery with total $l+1$ transmissions.
	It has lower weighted cost than the coding scheme with rate vector $[r_{(l+1,1)},\dots,r_{(l+1,N)}]^\mathsf{T}$, which contradicts that coding scheme with rate vector $[r_{(l+1,1)},\dots,r_{(l+1,N)}]^\mathsf{T}$ has the minimum weighted cost over all coding schemes that achieve universal recovery with $l+1$ transmissions.
	Starting form the node pair $(i,j)$ with largest $j$, we can apply this binding for every $(i,j)$ and remove used $(\hat{i},\hat{j})$ iteratively.
	And it is able to find $(\hat{i},\hat{j})\in \mathcal{P}(l)$ such that Eqn~\eqref{eq:weightpair} is satisfied for every pair $(i,j) \in \mathcal{P}(l+1)$.
	Hence, we have
	\begin{align}
	&\mathcal{F}(l+1)-\mathcal{F}(l)\nonumber \\
	&= \sum_{(i,j)\in \mathcal{P}(l+1)} (w_i-w_j)-\sum_{(m,n)\in \mathcal{P}(l)} (w_m-w_n)\\
	&= \sum_{(i,j)\in \mathcal{P}(l+1), (\hat{i},\hat{j}) \in \mathcal{P}(l)} [(w_i-w_j)-(w_{\hat{i}}-w_{\hat{j}})] \nonumber\\
	&- \sum_{(m,n)\in \mathcal{P}(l) \setminus \{\mathbf{Q}\}} (w_m-w_n)\\
	&\ge 0	
	\end{align}
	where every $(w_i-w_j)-(w_{\hat{i}}-w_{\hat{j}}) \ge 0$, every $w_m-w_n < 0$ and $\mathbf{Q}$ is the set of node pairs $(\hat{i},\hat{j})$ that are used in the first summation.
	Hence, the function $\mathcal{K}(l) = \min_{\mathbf{r}\in\Omega, \mathcal{S}(\mathbf{r})=l}  \sum_{i=1}^{N} w_ir_i$ is convex.
\end{proof}

	%\section{Conclusion}
	%We presented a new method and polynomial-time deterministic algorithm to compute the minimal number of required transmissions. We also proposed an efficient way to construct a maximum distance separable coding scheme which achieves universal recovery at all node.  

\bibliographystyle{IEEEtran}
\bibliography{edic}

% that's all folks
\end{document}